\newtheorem{theorem}{Theorem}[section]
\newtheorem{lemma}{Lemma}[section]
\newtheorem{obs}{Observation}[section]
\newtheorem{definition}{Definition}[section]
\newtheorem{corollary}{Corollary}[section]
\newtheorem{claim}{Claim}[lemma]
\newtheorem{question}{Question}
\newtheorem{mdresult2}[question]{Question}
\newtheorem{invariant}{Invariant}[section]
\newtheorem{assumption}{Assumption}[section]
\newcommand{\etal}{{\em et al.}\ }
\newcommand{\eps}{\epsilon}
\newcommand{\opt}{\mathsf{OPT}}
\newcommand{\ceil}[1]{\lceil #1 \rceil}
\newcommand{\floor}[1]{\lfloor #1 \rfloor}
\newcommand{\wts}{\omega}
\newcommand{\alg}{\mathsf{alg}}
\newcommand{\sets}{\mathcal{S}}
\newcommand{\univ}{\mathcal{U}}
\newcommand{\cost}{\mathsf{cost}}
\newcommand{\brac}[1]{\left(#1\right)}
\newcommand{\lev}{\mathsf{lev}}
\newcommand{\plev}{\mathsf{plev}}
\newcommand{\tlev}{\mathsf{toplev}}
\newcommand{\tarlev}{\mathsf{tarlev}}
\newcommand{\asn}{\mathsf{asn}}
\newcommand{\cov}{\mathsf{cov}}
\newcommand{\ind}{\mathbf{1}}
\newcommand{\reset}{\mathsf{reset}}
\newcommand{\old}{\mathrm{old}}
\newcommand{\odd}{\mathrm{odd}}
\newcommand{\even}{\mathrm{even}}
\definecolor{BrickRed}{rgb}{.72,0,0}
\def\EMPH#1{\emph{\textcolor{BrickRed} {#1}}}
\title{A Lossless Deamortization for Dynamic Greedy Set Cover \footnote{A preliminary version of this paper was accepted to the proceedings of FOCS 2024.}}
\author[1]{Shay Solomon\thanks{Funded by the European Union (ERC, DynOpt, 101043159). Views and opinions expressed are however those of the author(s) only and do not necessarily reflect those of the European Union or the European Research Council. Neither the European Union nor the granting authority can be held responsible for them. This research was also supported by the Israel Science Foundation (ISF) grant No.1991/1, and by a grant from the United States-Israel Binational Science Foundation (BSF), Jerusalem, Israel, and the United States National Science Foundation (NSF).}}
\author[1]{Amitai Uzrad\thanks{Funded by the European Union (ERC, DynOpt, 101043159). Views and opinions expressed are however those of the author(s) only and do not necessarily reflect those of the European Union or the European Research Council. Neither the European Union nor the granting authority can be held responsible for them. This research was also supported by the Israel Science Foundation (ISF) grant No.1991/1.}}
\author[1]{Tianyi Zhang\thanks{Funded by the European Union (ERC, DynOpt, 101043159).
Views and opinions expressed are however those of the author(s) only and do not necessarily reflect those of the European Union or the European Research Council. Neither the European Union nor the granting authority can be held responsible for them.}}
\affil[1]{Tel Aviv University}
\begin{document}

\date{\empty}

\begin{titlepage}
\def\thepage{}
\maketitle

\begin{abstract}
	The dynamic set cover problem has been subject to growing research attention in recent years.
In this problem, we are given as input a dynamic universe of at most $n$ elements and a fixed collection of $m$ sets, where each element appears in a most $f$ sets and the cost of each set is in $[1/C, 1]$, and the goal is to efficiently maintain an approximate minimum set cover under element updates.
 
 Two algorithms that dynamize the classic \emph{greedy} algorithm are known, providing $O(\log n)$ and $((1+\epsilon)\ln n)$-approximation with {\em amortized} update times  
 $O(f \log n)$ and $O(\frac{f \log n}{\eps^5})$, respectively [GKKP (STOC'17); SU (STOC'23)].
The question of whether one can get
approximation $O(\log n)$ (or even worse) with low \emph{worst-case} update time has remained open --- \emph{only the naive $O(f \cdot n)$ time bound is known}, even for unweighted instances.
 
In this work we 
devise the first amortized greedy algorithm that is amenable to an efficient deamortization, and also develop a {\em lossless} deamortization approach suitable for the set cover problem, 
the combination of which yields
a $((1+\epsilon)\ln n)$-approximation algorithm with a worst-case update time of $O\brac{\frac{f\log n}{\epsilon^2}}$.
\emph{Our worst-case time bound --- the first to break the naive $O(f \cdot n)$ bound ---  matches the previous best amortized bound, and actually improves its $\epsilon$-dependence}.

Further, to demonstrate the applicability of our deamortization approach, we employ it, in conjunction with the primal-dual amortized algorithm of [BHN (FOCS'19)],
to obtain a 
$((1+\epsilon)f)$-approximation algorithm with a worst-case update time of $O\brac{\frac{f\log n}{\epsilon^2}}$, improving over the  previous best bound of $O(\frac{f \cdot \log^2(Cn)}{\eps^3})$ [BHNW (SODA'21)].

Finally, as direct implications of our results for set cover, we (i) achieve the first nontrivial worst-case update time for the \emph{dominating set} problem, and (ii) improve the state-of-the-art worst-case update time for the \emph{vertex cover} problem. 
\end{abstract}
\end{titlepage}

\begin{spacing}{1.3}
	\tableofcontents
\end{spacing}

\thispagestyle{empty}
\clearpage
\pagenumbering{arabic}
\setcounter{page}{1}

\newpage

\section{Introduction}

In the classical set cover problem, the input is a set system $(\univ, \sets)$, where
$\univ$ is a universe of $n$ elements and $\sets$ is a family of $m$ sets 
$s \in \sets$ of elements in $\univ$, each with a cost $\cost(s)\in [\frac{1}{C}, 1]$.
The {\em frequency} $f$ of the set system $(\univ, \sets)$ is the largest number of sets in $\sets$ that any element in $\univ$ can possibly belong to. A subset of sets $\sets^\prime\subseteq \sets$ is called a {\em set cover} of $\univ$ if every element in $\univ$ resides in at least one set in $\sets^\prime$. The basic goal is to compute a {\em minimum set cover}, i.e., a set cover 
$\sets^*\subseteq \sets$ whose total cost $\cost(\sets^*) = \sum_{s\in \sets^*}\cost(s)$ is minimized. A well-known {\em greedy} algorithm  achieves a $(\ln n)$-approximation, and using a {\em primal-dual} approach one can obtain an $f$-approximation; these two approaches are believed to be optimal, as one cannot achieve a $(1-\eps) \ln n$-approximation unless P = NP \cite{williamson2011design,dinur2014analytical}, nor an $(f-\epsilon)$-approximation assuming the unique games conjecture \cite{khot2008vertex}.

There has been a recent growing endeavor to understand the set cover problem in the dynamic setting. In the dynamic set cover problem, we are given as input a dynamic universe $\univ$ of at most $n$ elements and a fixed collection $\sets$ of $m$ sets, and the goal is to maintain a set cover $\sets_\alg\subseteq \sets$ of small total cost, ideally matching the best approximation for the static setting, 
within a low update time.
Given the aforementioned hardness results, one can hope for an approximation factor that approaches either $\ln n$ or $f$, while achieving an update time that approaches $O(f)$, which is the time required to specify an update explicitly.
Next, we survey the known results, distinguishing between the \emph{low-frequency regime} ($f = O(\log n)$) and the \emph{high-frequency regime} ($f = \Omega(\log n)$).

\paragraph{Low-Frequency Regime.} The vast majority of work on dynamic set cover has been devoted to the low-frequency regime, based on the primal-dual approach. An $O(f^2)$-approximation with $O(f\log(m+n))$ amortized update time was given in \cite{bhattacharya2015design}, and an $O(f^3)$-approximation with $O(f^2)$ amortized update time was given in \cite{gupta2017online}. A near-optimal approximation of $(1+\epsilon)f$ for the unweighted setting ($C \equiv 1$) was achieved for the first time in \cite{abboud2019dynamic}, with (expected) amortized update time $O(f^2\log n / \epsilon)$, which was  improved to (expected) amortized update time $O(f^2)$ (without any $\eps$-dependency). The randomized algorithms of \cite{abboud2019dynamic,assadi2021fully} were
strengthened to the general weighted setting via deterministic algorithms with similar update time, still for the near-optimal approximation of $(1+\eps)f$ \cite{bhattacharya2019new,bhattacharya2021dynamic}. Very recently,  this line of work on primal-dual algorithms with amortized time bounds culminated 
in a $((1+\eps)f)$-approximation algorithm that achieves a near-optimal amortized update time of $O\left(\frac{f}{\epsilon^3}\log^*f  + \frac{f\log C}{\epsilon^3}\right)$ \cite{bukov2023nearly}.

The algorithm of \cite{bhattacharya2019new} yields an amortized update time of $O(\frac{f \cdot \log(Cn)}{\eps^2})$, and it is an \emph{inherently global} algorithm, in the sense that (1) it allows the underlying invariants to be violated to some extent \emph{in a global way} (i.e., in some average sense), 
and (2) it applies ``global clean-up'' procedures to restore the invariants.
Importantly, the global nature of that algorithm is what makes it amenable to efficient \emph{deamortization}, as done in \cite{bhattacharya2021dynamic} to obtain a deterministic $((1+\epsilon)f)$-approximation 
algorithm with $O(\frac{f\log^2(Cn)}{\epsilon^3})$ \emph{worst-case} update time; note that the deamortization of \cite{bhattacharya2021dynamic} loses a factor of $\frac{\log(Cn)}{\eps}$ in the update time. (See \Cref{comp} for a summary of the results.)



\paragraph{High-Frequency Regime.} In contrast to the low-frequency regime, only two algorithms that dynamize the classic \emph{greedy} algorithm are known, achieving $O(\log n)$- and $((1+\epsilon)\ln n)$-approximation with {\em amortized} update times $O(f \log n)$ and $O(\frac{f \log n}{\eps^5})$, respectively \cite{gupta2017online,solomon2023dynamic}. 

It seems inherently harder to dynamize the greedy algorithm (in the high-frequency regime), as compared to the primal-dual algorithm (in the low-frequency regime). 
We will try to substantiate this claim in the technical overview of  \Cref{tech}; however, the large gaps between the state-of-the-art results in the two regimes may already provide partial evidence.
For amortized bounds, 
the algorithm of \cite{bukov2023nearly}
in the low-frequency regime
incurs only a tiny extra $\log^*f \le \log^* n$ factor in the update time over the ideal $O(f)$ time bound 
(ignoring the dependencies on $\eps$ and $C$),
whereas the algorithms of \cite{gupta2013fully,solomon2023dynamic}
incur an extra $\log n$ factor. 
For {\em worst-case} bounds, the algorithm of \cite{bhattacharya2021dynamic} in the low-frequency regime provides a low {worst-case} update time,
whereas the question of whether one can get approximation $O(\log n)$ (or even worse) with low \emph{worst-case} update time has remained open; only the naive $O(f \cdot n)$ time bound is known, even for unweighted instances.
Technically speaking, the algorithms of \cite{gupta2017online,solomon2023dynamic} in the high-frequency regime apply ``local clean-up'' procedures whenever {\em any} invariant is violated, which is problematic to deamortize; alas, in contrast to the low-frequency regime, designing a dynamic greedy algorithm  of global nature seems highly challenging, as discussed in detail in \Cref{tech}. 

\paragraph{Focus.} This work focuses on the dynamic set cover problem with {\em worst-case} update time, primarily in the {\em high-frequency} regime --- where no nontrivial worst-case time bound is known. One may consider the gaps in our understanding of the dynamic set cover problem with worst-case time bounds from two different perspectives: 

\begin{enumerate}
\item In the high-frequency regime, the gap between the state-of-the-art amortized ($O(\frac{f \log n}{\eps^5})$ \cite{solomon2023dynamic}) and worst-case (the naive $O(f \cdot n)$) time bounds. 
\item For the state-of-the-art worst-case time bounds, the gap between the low-frequency 
($O(\frac{f\log^2(Cn)}{\epsilon^3})$ \cite{bhattacharya2021dynamic})
and the high-frequency (the naive $O(f \cdot n)$) regimes. 
\end{enumerate}

\noindent The following fundamental question naturally arises:

\begin{tcolorbox} [width=\linewidth, sharp corners=all, colback=white!95!black]
\begin{question} \label{q1}
Can one achieve an approximation of $O(\log n)$ (or even worse) for dynamic set cover with any nontrivial \emph{worst-case} update time? 
\end{question}
\end{tcolorbox}


\subsection{Our Contribution} 

This work provides the first dynamization of the greedy algorithm with a low {\em worst-case} update time. To this end:
\begin{enumerate}
    \item We first overcome the aforementioned challenge by presenting the first amortized greedy algorithm of {\em global} nature; see \Cref{fullyg} for the details. 
    \item Second, we develop a {\em lossless}  deamortization approach, i.e., the resulting worst-case time-bound is just as good as the best amortized bound; see \Cref{sec:deam} for the details. 
\end{enumerate}
    
\noindent By employing our deamortization approach in conjunction with our new  global amortized algorithm, we obtain the following main result of this work (see \Cref{comp} for a summary of results).

\begin{theorem} [High-frequency set cover] \label{wc}
For any set system $(\univ, \sets)$ that undergoes a sequence of element insertions and deletions, where the frequency is always bounded by $f$, and for any $\epsilon \in (0, \frac{1}{4})$, there is a dynamic algorithm that maintains a $((1+\epsilon)\ln n)$-approximate minimum set cover in $O\brac{\frac{f\log n}{\epsilon^2}}$ deterministic worst-case update time.
\end{theorem}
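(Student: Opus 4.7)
The plan is to establish \Cref{wc} by combining the two major ingredients developed in the paper: the amortized greedy algorithm of \emph{global} nature from \Cref{fullyg} and the lossless deamortization framework from \Cref{sec:deam}. The amortized algorithm alone is designed to achieve a $((1+\epsilon)\ln n)$-approximation in $O\brac{f\log n/\epsilon^2}$ amortized time, matching the target bound; the deamortization then converts this into a worst-case bound of the same order, without giving up any approximation factor or any $\epsilon$-factor in the running time. The final proof of the theorem is essentially a bookkeeping step: invoke the amortized guarantee, verify that its internal structure satisfies the preconditions of the deamortization theorem, and read off the worst-case bound.

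For the global amortized algorithm, I would maintain a hierarchy of $O\brac{\log n/\epsilon}$ levels that geometrically discretize the possible greedy densities. Each set $s$ in $\sets_\alg$ carries a level $\lev(s)$ that encodes roughly the density at which it entered the cover, and each element is assigned to the highest-level set containing it. The greedy approximation guarantee is controlled by two kinds of invariants: each chosen set at level $\ell$ serves a number of elements commensurate with $(1+\epsilon)^\ell/\cost(s)$, and no set outside the cover has density significantly above its own level. The crucial departure from \cite{gupta2017online,solomon2023dynamic} is that I would allow these invariants to be violated at individual sets, provided the \emph{aggregate} violation at each level, measured by a suitable potential, stays below a global threshold. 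Whenever the potential at some level crosses its threshold, the algorithm triggers a global rebuild of that level, rather than reacting to each offending set as in the prior local schemes. Standard charging against the updates responsible for the potential increase then yields the $O\brac{f\log n/\epsilon^2}$ amortized bound while preserving the $((1+\epsilon)\ln n)$-approximation.

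For the deamortization I would apply the ``parallel dry run'' paradigm of \Cref{sec:deam}: the moment a global rebuild of some level is scheduled, its computation is divided into $T$ equal-size chunks and executed over the next $T$ updates in the background. During this window I keep both the live structure (serving the maintained cover and absorbing incoming updates) and a shadow copy on which the rebuild executes; a synchronization step swaps them atomically once the rebuild completes. The number $T$ is chosen exactly so that the per-update background work is $O\brac{f\log n/\epsilon^2}$, matching the amortized cost of the rebuild in the potential argument; this is what makes the deamortization lossless. The global nature of the amortized algorithm is essential here, since it lets us pre-plan each rebuild against the combinatorial state at its trigger time, and contain the effect of intervening updates entirely inside the shadow structure.

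The main obstacle, in my view, is the design of the global amortized algorithm itself. Reactive, per-violation clean-ups in prior greedy dynamizations are fundamentally incompatible with deamortization, because such violations can appear at arbitrary sets in arbitrary updates and force unbounded reactive work per step. Redesigning the greedy algorithm so that (i) a bounded \emph{average} violation at each level still certifies the $((1+\epsilon)\ln n)$-approximation, and (ii) the correction cost can be amortized against the updates that pushed the potential up, requires a careful new interplay between element assignments and set levels, together with a robust potential function. A secondary hurdle is to ensure that, while a level's rebuild is mid-flight, concurrent updates at the same or different levels do not corrupt the scheduled work; I expect this to be handled by a strict bottom-up scheduling discipline together with local buffering of pending updates, but verifying that buffering preserves both the approximation and the worst-case time bound is where the bulk of the technical care will be needed.
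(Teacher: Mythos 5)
Your proposal restates the paper's high-level roadmap (a fully global amortized greedy algorithm plus a lossless deamortization) but leaves unfilled exactly the steps that constitute the actual proof. The central gap is in your first ingredient: you say the invariants may be violated per set as long as an ``aggregate violation at each level, measured by a suitable potential, stays below a global threshold,'' and that a standard charging argument then preserves the $((1+\epsilon)\ln n)$-approximation. The paper shows this naive form of global dirt does not work: if rises are merely delayed subject to an average budget, a single set containing all $n$ elements whose rise is postponed can blow the approximation up to $\Theta(n)$, and the known greedy analyses break precisely because they rely on the density condition holding for \emph{every} set. What makes the paper's scheme sound is a different mechanism: each element carries a monotone \emph{passive level} $\plev(e)\ge\lev(e)$; the density condition $|N_k(s)|<\beta^{k+1}\cost(s)$ is kept \emph{strictly} for every set but counts only $k$-active elements ($\lev(e)\le k<\plev(e)$), the covering condition counts dead elements, and the only global condition is $|P_k|\le 2\epsilon|A_k|$, restored by resets that are \emph{truncated at level $k+1$} so they never create dirt above $k$. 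The approximation then follows from a new telescoping argument (Lemma \ref{pre-toy-approx}), not from a tweak of the static greedy analysis. None of this is derivable from the potential-function sketch you give, and without it objective (iii) of your own plan is unsupported.

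Two further gaps: first, your deamortization is a single shadow rebuild per trigger, chunked over $T$ updates and swapped atomically, with ``bottom-up scheduling and buffering.'' The paper instead runs a $\reset(k)$ instance concurrently for essentially every level, feeds each adversarial update to all instances at relevant levels, lets the highest finished instance switch its memory to the foreground while \emph{aborting} all lower ones, has each instance copy its own initial foreground state (this self-initialization is precisely what removes the extra $\log_\beta(Cn)$ factor of \cite{bhattacharya2021dynamic}), and pays $O(\log_\beta(Cn))$ per foreground level lookup via an authentication step; correctness of the global condition is a quantitative race between the reset pace $O(f/\epsilon)$ and passive-element accumulation (the $\epsilon$ versus $2\epsilon$ thresholds, needing $\epsilon<\tfrac14$), not a matter of choosing $T$ to equalize costs. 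Second, you assert a hierarchy of $O(\log n/\epsilon)$ levels, but with costs in $[1/C,1]$ the hierarchy has $\Theta(\log_\beta(Cn))$ levels, so the direct construction only gives $O(f\log(Cn)/\epsilon^2)$ worst-case time; the theorem's $C$-free bound requires the separate black-box reduction of \Cref{remove}, which partitions the levels into two overlapping sequences of fixed windows of $O(\log_\beta n)$ levels, maintains an even and an odd candidate cover, and outputs the cheaper one, arguing that at least one of them is a $((1+\epsilon)\ln n)$-approximation. Your proposal does not address the weighted/aspect-ratio issue at all.
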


\noindent Not only does \Cref{wc} resolve \Cref{q1} in the affirmative, but it also achieves \EMPH{optimal} bounds on both the approximation factor and the worst-case update time, given the current state-of-the-art amortized result, excluding the $\epsilon$ dependencies.
Moreover, 
our worst-case update time actually improves the  $\epsilon$-dependence
of the previous best amortized bound \cite{solomon2023dynamic}
from $\eps^{-5}$ to $\eps^{-2}$. 
Therefore, we achieve an \EMPH{optimal deamortization} of the previous best amortized algorithm in the high-frequency regime. We stress that while the deamortization itself is optimal, the update time bound of $O_\eps(f \log n)$ is not necessarily optimal;  whether this time bound can be improved (even for amortized bounds)
remains an intriguing open question. 

To demonstrate the applicability of our deamortization approach, we employ it, in conjunction with the aforementioned amortized algorithm of \cite{bhattacharya2019new} in the low-frequency regime,
to obtain the following result,
which improves over the  worst-case time bound of $O(\frac{f \cdot \log^2(Cn)}{\eps^3})$ \cite{bhattacharya2021dynamic}, first by shaving a factor of $\frac{\log(C n)}{\eps}$, and then by removing the dependency on the aspect ratio $C$.

\begin{theorem} [Low-frequency set cover]
\label{wc2}
For any set system $(\univ, \sets)$ that undergoes a sequence of element insertions and deletions, where the frequency is always bounded by $f$, and for any $\epsilon \in (0, \frac{1}{4})$, there is a dynamic algorithm that maintains a $((1+\epsilon)f)$-approximate minimum set cover in $O\brac{\frac{f\log n}{\epsilon^2}}$ deterministic worst-case update time. 
 \end{theorem}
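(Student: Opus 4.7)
The plan is to combine the deamortization framework developed in \Cref{sec:deam} with (a suitable refinement of) the amortized primal-dual algorithm of \cite{bhattacharya2019new}, much as \Cref{wc} is obtained by combining the same framework with our new global greedy algorithm of \Cref{fullyg}. The reason this is plausible in the first place is that \cite{bhattacharya2019new} is already ``inherently global'' in the sense discussed in the introduction: it maintains a hierarchical partition of the sets into $O(\log(Cn)/\eps)$ levels based on dual fractional weights, permits the level-invariants to be violated by a small global amount, and restores them via batch rebuild procedures applied to entire slices of the hierarchy. This global structure is precisely what makes it compatible with our lossless scheduling strategy.

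The first step is to improve the dependence on $C$ inside the amortized algorithm. The $\log C$ factor in the $O(f\log(Cn)/\eps^2)$ bound of \cite{bhattacharya2019new} comes from the number of weight levels needed to cover the range $[1/C,1]$ of set costs. I would remove this by grouping the sets into $O(\log n)$ ``cost classes'' so that within each class the relevant levels span only $O(\log n/\eps)$ values, and by charging the dual updates against classes that actually contain an element's sets (of which there are at most $f$). A careful accounting, using that the optimum and the covering contributions of classes with very small cost are dominated by the ``main'' classes, should yield an amortized bound of $O(f\log n/\eps^2)$ while preserving the $(1+\eps)f$-approximation guarantee of the primal-dual analysis.

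Next, I would feed the refined amortized algorithm into the lossless deamortization machinery of \Cref{sec:deam}. In the same spirit as for greedy set cover, the idea is to split each global rebuild into $\Theta(1/\eps)$-sized chunks and schedule them over the updates during which the accumulating slack on the invariants is still well within the tolerated global violation; every update then executes $O(f\log n/\eps^2)$ worst-case work, consisting of the direct update cost plus a de-amortized share of the currently-running rebuilds. Since the deamortization framework is lossless (no extra $\log/\eps$ overhead), the worst-case bound matches the improved amortized bound, giving the desired $O(f\log n/\eps^2)$. The approximation factor is preserved because at every moment the maintained set cover corresponds to a feasible primal solution whose cost is bounded by an almost-feasible dual, with slackness absorbed into the $(1+\eps)$ factor.

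The main obstacle I anticipate is the interaction between ongoing global rebuilds and concurrent element updates: when an element insertion or deletion touches a level that is currently being rebuilt in the background, we must ensure that (i) the partial rebuild state remains consistent, and (ii) the $(1+\eps)f$-approximation still holds throughout, not just at the end of the rebuild. This is exactly the ``delayed invariant'' subtlety that cost \cite{bhattacharya2021dynamic} an extra $\log(Cn)/\eps$ factor. Overcoming it requires carefully choosing the chunk size and the slack budget so that in-progress and freshly-affected levels can be handled uniformly, and formally arguing that the primal-dual slackness inequalities degrade by at most a $(1+O(\eps))$ factor over the entire lifespan of any rebuild; this is where the lossless guarantee of \Cref{sec:deam} does the heavy lifting.
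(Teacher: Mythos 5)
Your high-level plan agrees with the paper's first step: the paper deamortizes the global primal-dual algorithm of \cite{bhattacharya2019new} losslessly, obtaining a worst-case bound of $O\brac{\frac{f\log(Cn)}{\epsilon^2}}$ (its \Cref{extend-logC}), where the saving over \cite{bhattacharya2021dynamic} comes from a concrete mechanism you only allude to: each background instance $\reset(k)$ initializes its own copy of the foreground data structures (with an $O(\log_\beta(Cn))$-time authentication step for reading foreground levels), instead of being re-initialized by the terminating higher-level reset, and the per-level rebuild is the water-filling procedure of \cite{bhattacharya2021dynamic} run under Invariant~\ref{lastinv}. That part of your proposal is essentially the paper's route, modulo the fact that the "lossless machinery" is not a generic black box but has to be re-verified for the water-filling resets (consistency of memory switching, the $|P_k|\leq 2\epsilon|A_k|$ pace argument, etc.), which the paper does explicitly.

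The genuine gap is in your aspect-ratio step. You propose to remove the $\log C$ inside the amortized algorithm by a single grouping into cost classes and a "careful accounting"; this is precisely where the naive approach fails. With one fixed partition into windows of $\Theta(\log_\beta n)$ levels, the elements of the lowest relevant window may all sit near its top, so the sets of the next window have non-negligible cost, and the argument only yields a $2(1+O(\epsilon))f$-approximation, not $(1+\epsilon)f$; moreover, if an element is processed in every class containing one of its sets, you have neither a clear feasibility guarantee per class nor an $O\brac{\frac{f\log n}{\epsilon^2}}$ per-update bound. The paper resolves both issues by (i) assigning each element only to the two windows determined by its cheapest set (so each update touches $O(1)$ instances and the element is always coverable within its windows), and (ii) maintaining \emph{two overlapping sequences} of fixed windows, running the $C$-dependent worst-case algorithm of \Cref{extend-logC} as a black box in every window, and outputting the cheaper of the even-union and odd-union covers; the half-window buffer guarantees that at least one of the two candidate solutions is a $(1+O(\epsilon))f$-approximation (\Cref{approxlow} and the analogue of \Cref{approxf}). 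Note also the ordering: the paper removes the aspect ratio \emph{after} deamortization, as a black-box reduction on the worst-case algorithm; doing it inside the amortized algorithm first, as you suggest, would force you to redo the entire deamortization for the decomposed structure rather than invoking it once per window. Without the cheapest-set mapping and the two-overlapping-windows/min-of-two-solutions device, your proposal does not establish the stated $(1+\epsilon)f$ factor.
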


We note that our deamortization approach that proves \Cref{wc} in the high-frequency regime {\em seamlessly extends} to prove \Cref{wc2} in the low-frequency regime.
Consequently, we provide a {\em unified algorithmic approach} to the dynamic set cover problem with worst-case time bounds.
We emphasize that our approach is naturally suitable specifically for the set cover problem. It would be interesting to explore the possibilities of extending our approach beyond the set cover problem; we leave this as an intriguing open question. 
Nonetheless, the set cover problem is a fundamental covering problem, which encapsulates several other important problems. As such, we believe that an approach suitable for set cover is of rather general interest.
In particular, our approach leads directly to the following implications for the (minimum) \emph{dominating set} and \emph{vertex cover} problems.

In the minimum dominating set problem, we are given a graph $G = (V,E)$, where $n=|V|$, and each vertex has a cost assigned to it. The goal is to find a subset of vertices $D \subseteq V$ of minimum total cost, such that for any vertex $v \in V$, either $v \in D$ or $v$ has a neighbor in $D$. In the dynamic setting, the adversary inserts/deletes an edge upon each update step. 
We derive the result for the dominating set problem via a simple reduction to the set cover problem (described in \Cref{dssec}), which allows us to use our set cover algorithm  provided by \Cref{wc} as a black box. 

\begin{theorem} [Dominating set] \label{ds-final}
For any graph $G = (V,E)$ that undergoes a sequence of edge insertions and deletions, where the degree is always bounded by $\Delta$, and for any $\epsilon \in (0, \frac{1}{4})$, there is a dynamic algorithm that maintains a $((1+\epsilon)\ln \Delta)$-approximate minimum weighted dominating set in $O\brac{\frac{\Delta \log n}{\epsilon^2}}$ deterministic worst-case update time.
\end{theorem}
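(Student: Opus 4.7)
The plan is to obtain \Cref{ds-final} via a direct reduction to dynamic set cover that invokes \Cref{wc} as a black box. To each vertex $v \in V$ I associate both an element $e_v$ and a set $S_v$, with $\cost(S_v)$ equal to the cost of $v$, where $e_u \in S_v$ iff $u \in N[v]$ (the closed neighborhood). A subset $D \subseteq V$ is then a dominating set of $G$ of cost $c$ if and only if $\{S_v : v \in D\}$ is a set cover of $\univ = \{e_v : v \in V\}$ of cost $c$, so the two problems agree up to a cost-preserving bijection. Furthermore, the frequency of the constructed set system is at most $\Delta + 1$, since each element $e_u$ lies in exactly the sets $\{S_v : v \in N[u]\}$.

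The main technical step is to simulate an edge update in $G$ by a constant number of element updates in the set cover instance. The subtle point is that \Cref{wc} operates in a model with a \emph{fixed} collection of sets and a dynamic universe, whereas an edge insertion or deletion in $G$ changes the membership relation between elements and sets rather than the universe itself. I would handle this by ``refreshing'' both endpoints: on insertion of an edge $(u,v)$, delete the elements $e_u$ and $e_v$ from the universe (removing them from all sets they currently belong to), then re-insert $e_u$ with membership list $\{S_w : w \in N[u]\}$ using the updated neighborhood, and symmetrically for $e_v$; edge deletions are handled the same way. Each refresh writes down a membership list of size $O(\Delta)$ and triggers only $O(1)$ element updates in the reduced set cover instance.

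Plugging this into \Cref{wc}, the $O(1)$ element updates per edge update are each processed in worst-case time $O\brac{\frac{f \log n}{\epsilon^2}} = O\brac{\frac{\Delta \log n}{\epsilon^2}}$, which matches the claimed bound. The output set cover is $((1+\epsilon)\ln n)$-approximate and, by the bijection above, corresponds to a dominating set of the same cost; to sharpen the factor to $(1+\epsilon)\ln \Delta$ as stated, I would appeal to the fact that the greedy-based algorithm underlying \Cref{wc} actually achieves approximation $(1+\epsilon)\ln \max_{s \in \sets} |s|$, which in our construction equals $(1+\epsilon)\ln(\Delta+1)$.

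The main obstacle I anticipate lies in this last step: the statement of \Cref{wc} asserts only the $(1+\epsilon)\ln n$ bound, and extracting the tighter $\ln(\max_s |s|)$ bound requires either revisiting the internal level-based analysis underlying \Cref{wc} or, alternatively, pre-restricting each element's ``effective'' universe to the size-$O(\Delta)$ local neighborhood. Once this is in place, the remaining pieces---correctness of the reduction and the update-time accounting---are entirely routine.
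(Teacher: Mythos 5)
Your proposal is correct and follows essentially the same route as the paper: the same vertex-to-element/set reduction with frequency $\Delta+1$, the same simulation of each edge update by deleting and re-inserting the two endpoint elements with updated membership lists, and the same sharpening of the approximation factor via the fact that the analysis depends on the maximum set size rather than $n$. The ``obstacle'' you flag at the end is already resolved in the paper, since \Cref{pre-toy-approx} is stated with the bound $(1+O(\eps))\ln n'$ where $n'$ is an upper bound on the set sizes throughout the update sequence, so no revisiting of the internal analysis is needed --- one simply takes $n' = \Delta+1$, exactly as the paper does.
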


\noindent We note that \Cref{ds-final} provides the {\em first} non-trivial worst-case update time algorithm for the (unweighted or weighted) minimum dominating set problem 
\footnote{One could have used our simple reduction from dynamic dominating set to dynamic set cover, 
in conjunction with the worst-case primal-dual set cover algorithm in \cite{bhattacharya2021dynamic} as a black-box, to obtain a $(1+\epsilon) \cdot \Delta$ approximation
with a worst-case update time of $O(\frac{\Delta \cdot \log^2(Cn)}{\eps^3})$.
Such a result has not been reported in the literature, but more importantly, its approximation ratio $\approx \Delta$ is far worse than the approximation ratio $\approx \ln \Delta$ that we aim for.}
as with our set cover results, there is no dependence whatsoever on the costs. Our worst-case time bound matches the previous best amortized bound for the problem \cite{solomon2023dynamic}, and it also improves its $\epsilon$-dependence from $\eps^{-5}$ to $\eps^{-2}$.



Next, for the minimum (weighted) {\em vertex cover} problem, by setting $f=2$ in \Cref{wc2}, we directly get an improvement of the  state-of-the-art worst-case update time bound for $(2+\eps)$-approximate vertex cover:  from $O(\frac{\log^2(Cn)}{\eps^3})$ \cite{bhattacharya2021dynamic} to $O\brac{\frac{\log n}{\epsilon^2}}$.

\begin{table}
	\begin{tabular}{|c|c|c|c|c|}
		\hline
		reference	&	approximation	&	update time	&	worst-case?	&	weighted?\\\hline

\cite{gupta2017online}	&	$O(\log n)$	&	$O(f\log n)$	&	no		&	yes	\\\hline
		\cite{solomon2023dynamic}	&	$(1+\epsilon)\ln n$	&	$O\brac{\frac{f\log n}{\epsilon^5}}$	&	no	&	yes\\\hline 
		\textbf{new}	&	$(1+\epsilon)\ln n$	&	$O\brac{\frac{f\log n}{\epsilon^2}}$	&	\textbf{yes}	&	yes\\\hline\hline\hline

		\cite{bhattacharya2015design}	&	$O(f^2)$	&	$O(f\log(m+n))$	&	no	&	yes	\\\hline
		\cite{gupta2017online,bhattacharya2017deterministic}	&	$O(f^3)$	&	$O(f^2)$	&	no	&	yes\\\hline
		\cite{abboud2019dynamic}	&	$(1+\epsilon)f$	&	$O\brac{\frac{f^2}{\epsilon^5}\log n}$	&	no	&	no\\\hline
		\cite{bhattacharya2019new}	&	$(1+\epsilon)f$	&	$O\brac{\frac{f}{\epsilon^2}\log(Cn)}$	&	no	&	yes	\\\hline
		\cite{bhattacharya2021dynamic}	&	$(1+\epsilon)f$	&	$O\brac{\frac{f^2}{\epsilon^3} + \frac{f}{\epsilon^2}\log C}$	&	no	&	yes	\\\hline
		\cite{bhattacharya2021dynamic}	&	$(1+\epsilon)f$	&	$O\brac{\frac{f\log^2(Cn)}{\epsilon^3}}$	&	\textbf{yes}	&	yes\\\hline
		\cite{assadi2021fully}	&	$f$	&	$O\brac{f^2}$	&	no	&	no\\\hline
		\cite{bukov2023nearly}	&	$(1+\epsilon)f$	&	$O\left(\frac{f}{\epsilon^3}\log^*f  + \frac{f\log C}{\epsilon^3}\right)$	&	no	&	yes\\\hline
  \textbf{new}	&	$(1+\epsilon)f$	&	$O\brac{\frac{f\log n}{\epsilon^2}}$	&	\textbf{yes}	&	yes\\\hline


	\end{tabular}
	\caption{Summary of results on dynamic set cover.}\label{comp}
\end{table}

\section{Technical Overview} \label{tech}

In this section we give a technical overview of our contribution.
In \Cref{known} we set up the ground by surveying the known techniques and approaches.
In \Cref{ourapp} we discuss the main technical challenges left open by previous work, and then turn to presenting the key technical novelty behind our work and demonstrating how it overcomes the main challenges. Along the way, we try to convey some conceptual highlights of this work.
We refer to Sections \ref{threes}, \ref{remove} and \ref{primdu} for the full, formal details.

\subsection{The Known Amortized Algorithms} \label{known}

\paragraph{Hierarchical Data Structure.} 
Every set $s\in\sets$ is assigned a level value in the range $[-1,O(\log(Cn))]$, where $-1$ is reserved for sets not in the cover. Every element $e\in \univ$ is assigned to a unique set $\asn(e)$ in the dynamic set cover solution, where $e$ shares the same level $\lev(e) = \lev(\asn(e))$ as the set to which it is assigned; inversely, we have the {\em cov(ering) set} $\cov(s)=  \{e\mid \asn(e) = s \}$ of $s$, which consists of all elements assigned to set $s$. 

\subsubsection{The Fully Local Approach}

In the original approach from \cite{gupta2017online}, their algorithm  maintains the following invariant.

\begin{invariant}[$O(\log n)$-approximation, \cite{gupta2017online}]\label{local}
	The following two conditions regarding the hierarchical structure hold at any time (i.e., before any update step).
\begin{enumerate}[(1)]
	\item For any set $s$ in the current solution, it holds that $|\cov(s)| / \cost(s) \in [2^{\lev(s)}, 2^{\lev(s)+10}]$.\footnote{In \cite{gupta2017online} the levels are negative and they consider the ratio $\frac{\cost(s)}{|\cov(s)|}$. In this paper we will consider the inverse ratio $\frac{|\cov(s)|}{\cost(s)}$ and so the levels will be positive. The two are completely equivalent.} 
	\item For any set $s$ and level $k$, $\{e\mid e\in s, \lev(e) = k\}$ has size at most $2^{k+10}\cdot \cost(s)$.
\end{enumerate}	
\end{invariant}

\noindent \cite{gupta2017online} used what we shall refer to as a \emph{fully local} approach to maintain both conditions of \Cref{local} 
at any time; namely, whenever \Cref{local}(1) or \Cref{local}(2) is violated, even for a single set $s$, the algorithm performs a {\em local change}, which aims at restoring the condition for set $s$. At the core of such a local change --- which we shall refer to as a {\em local fall} or {\em local rise} of $s$ (depending on whether the level of $s$ increases or decreases) --- is a change to the level of $s$, which is accompanied with changes to levels of elements that join or leave $\cov(s)$. 
Of course, a local fall/rise of a single set
may trigger further violations of the conditions, which are handled by performing further local falls and rises. The resulting cascade of local falls and rises is repeated   
until the conditions hold.

\subsubsection{The Partially Global Approach: From $O(\log n)$ to $(1+\epsilon)\ln n$ Approximation} As observed in \cite{solomon2023dynamic}, \Cref{local} has some inherent barriers against achieving a $(1+\epsilon)\ln n$ approximation. Therefore, in \cite{solomon2023dynamic}, a different set of conditions were proposed in order to optimize the constant factor preceding $\ln n$ to $1+\eps$, as given in the following invariant. We remark that this invariant was not maintained by \cite{solomon2023dynamic}; only a relaxation of the invariant was maintained, as discussed below. 


\begin{invariant}[$(1+\epsilon)\ln n$-approximation, \cite{solomon2023dynamic}]\label{global}
	Set $\beta = 1+\epsilon$. The following two conditions regarding the hierarchical structure hold at any time.
	\begin{enumerate}[(1)]
		\item For any set $s$ in the current solution, it holds that $|\cov(s)| / \cost(s) \geq \beta^{\lev(s)-1}$.
		\item For any set $s$ and level  $k$, $N_k(s) = \{e\mid e\in s, \lev(e) \le k\}$ has size less than $\beta^{k+2}\cdot \cost(s)$.
	\end{enumerate}	
\end{invariant}

\noindent There are several differences between
\Cref{global} and \Cref{local}, which are 
 crucial for improving the  approximation from $O(\log n)$ to $(1+\eps)\ln n$.
One difference is the usage of $\beta = 1+\eps$ rather than $2$.
Another difference 
lies in the second condition: While in \Cref{local} it bounds the
number of elements in a set $s$ at level exactly $k$, \Cref{global} provides a stronger bound on the 
total number of elements 
belonging to $s$ at all levels $\le k$.\footnote{In \cite{solomon2023dynamic}, $N_k(s)$ is defined as $N_k(s) = \{e\mid e\in s, \lev(e) < k\}$ and the upper bound on $|N_k(s)|$ is $\beta^{k+2}\cdot \cost(s)$; this is of course an equivalent formulation (where $k$ is replaced by $k+1$).} 
Clearly, \Cref{global} is stronger than \Cref{local}, and it turns out to be problematic to maintain efficiently.

The key behind the improvement of \cite{solomon2023dynamic}
to the approximation factor, while achieving the same amortized update time,
is to abandon the fully local approach of \cite{gupta2017online},
which performs a cascade of local falls and rises 
until both conditions of the invariant are maintained, following any update step. 
Instead, the approach taken by \cite{solomon2023dynamic}, 
which we shall refer to as \emph{partially global}, is to maintain only the second condition of the invariant for any set;
that is, whenever there is any violation of \Cref{global}(2), the algorithm perofrms a local rise. 
On the other hand, the first condition is only maintained in a \emph{global} manner in \cite{solomon2023dynamic}; more specifically, the algorithm waits until \Cref{global}(1) is widely violated in many places in the hierarchical structure, and then performs a \emph{reset} procedure on a carefully chosen part of the hierarchical structure --- which amounts to running the standard greedy algorithm on that part --- to restore \Cref{global}(1); roughly speaking, \Cref{global}(1) only holds in an {\em average sense} (or for an average set), and does not necessarily hold for any set $s$. The authors of \cite{solomon2023dynamic} prove that the approximation factor is $(1+\epsilon)\ln n$ even by assuming that \Cref{global}(1) only holds in an average sense, in a proof that follows closely the standard proof of $\ln(n)$-approximation for the classic static greedy algorithm.

Summarizing:
\begin{itemize}
    \item The fully local algorithm of \cite{gupta2017online} {\em locally} maintains
both conditions of the invariant,
by persistently performing local falls and rises to sets that violate the conditions.
\item In the partially global algorithm of \cite{solomon2023dynamic}, only the second condition is locally 
maintained, by performing local rises to sets that violate it. On the other hand, the first condition is maintained \emph{globally}, which in particular means that {\em no local falls occur}.
\item In both previous algorithms \cite{gupta2017online,solomon2023dynamic}, the proofs of the approximation factor follow rather closely known analyses of the classic greedy algorithm.
\end{itemize}


\subsection{Our Approach} \label{ourapp}

\subsubsection{A Fully Global Amortized Algorithm} \label{fullyg}
We remind that our goal is to obtain the \emph{first} greedy-based set cover algorithm with a low \emph{worst-case} update time. The naive algorithm would recompute from scratch the greedy algorithm on the entire system following each update step, but this yields an update time of $O(f \cdot n)$. 
To achieve a low  worst-case update time, the first suggestion that comes to mind is to try and deamortize one of the aforementioned amortized algorithms \cite{gupta2017online,solomon2023dynamic}.
As mentioned, the  algorithms 
\cite{gupta2017online} of \cite{solomon2023dynamic}
are fully local and partially global, respectively; in particular, both algorithms perform local rises, for any set that violates the second condition of the corresponding invariant.
The running time 
of a local rise of any set $s$ to level $j$ is at least linear in the number of elements that join $\cov(s)$, which is by design around $\beta^{j}$. 
Thus, de-amortizing the algorithms of \cite{gupta2017online,solomon2023dynamic} with a low worst-case update time
implies that one cannot complete even a single high-level local rise.
Of course, one can perform the required local rises with a sufficient amount of delay, by maintaining a queue of all sets that violate the second invariant and handling them one after another, 
however delaying even a single local rise may blow up the approximation factor; e.g., consider an extreme (unweighted) case where each element is covered by a singleton set at level 0, yet there is 
a single set $s$ that contains all $n$ elements, which needs to perform a local rise to level $\log n$, as a result of which each element will have left its singleton covering set and joined $\cov(s)$. 
We note that this extreme case, which incurs the worst-possible approximation of $n$ (for unweighted instances), is ``invalid'', in the sense that it shouldn't have been created in the first place, as $s$ should have made a local rise to cover many elements well before all of them have been inserted; however, one can embed this invalid instance inside larger instances in obvious ways to create various instances of the same flavor that incur very poor approximation. 


A natural two-step strategy would therefore be to first obtain a \emph{fully global} amortized algorithm, where we eliminate not just the local falls as in \cite{solomon2023dynamic}, but also the local rises ---
so that both conditions of \Cref{global}
will be maintained only in a global average sense, and in particular they may be violated locally by some sets;
moreover, the conditions of \Cref{global} will be restored only through a global {\em reset} procedure on a carefully chosen part of the hierarchical structure. The second step would be to de-amortize the resulting fully global algorithm, which seems much more natural and promising than de-amortizing the fully local or partially global algorithms  \cite{gupta2017online,solomon2023dynamic}. 
Such a two-step strategy 
was employed before in a similar context:
\begin{enumerate}
\item Bhattacharya \etal \cite{bhattacharya2019new}
dynamized the {\em primal-dual} $f$-approximation algorithm to achieve  $((1+\eps)f)$-approximation with an 
 amortized 
 update time of $O(\frac{f \cdot \log(Cn)}{\eps^2})$, via a fully global algorithm --- which, similarly to the above, may violate the conditions of the underlying invariant locally by some sets, and only tries to satisfy them in a global sense, and to restore them through a global reset procedure on a carefully chosen part of the hierarchical structure. \emph{As mentioned, the two known dynamic greedy algorithms are not fully global.}

 \item  Bhattacharya \etal \cite{bhattacharya2021dynamic} deamortized the fully global amortized algorithm of \cite{bhattacharya2019new}, to obtain a worst-case update time of  $O(\frac{f \cdot \log^2(Cn)}{\eps^3})$.
 \emph{The fully global amortized algorithm satisfies some ``nice'' properties, which are amenable to deamortization; it is unclear if similar properties can be achieved for an amortized greedy algorithm.
 Moreover, the deamoritzation of \cite{bhattacharya2021dynamic} loses a factor of $\frac{\log(Cn)}{\eps}$.}
\end{enumerate}
 Two challenges arise:
 \\\emph{Challenge 1. Fully global amortized algorithm: Primal-dual is easier than greedy.}
 The conditions in the invariant of the primal-dual algorithm of \cite{bhattacharya2019new} are the complementary slackness conditions, which are easier to maintain than the conditions in \Cref{global}; in particular, the analog complementary slackness condition to 
 \Cref{global}(2) (see \Cref{primdu}) is to upper bound the weight $\omega(s)$ of any set $s$ by its cost $\cost(s)$, where $\omega(s) = \sum_{e \in S} \omega(e)$ and $\omega(e)$ is  basically $\beta^{-\lev(e)}$ ($\lev(e)$ is the dynamic level of $e$). On the other hand, the condition in 
  \Cref{global}(2) applies not just to any set, but also to every possible level $k$, which makes it inherently more difficult to maintain. 
  \\\emph{Challenge 2. A lossless deamoritzation.}
As the greedy algorithm appears to be inherently more difficult to dynamize and ``globalize'' than the primal-dual algorithm, it is only natural to expect that the task of deamortizing an amortized greedy algorithm would be harder than for a primal-dual algorithm. Moreover, our goal is to attain a {\em lossless} deamortization, where the worst-case update time does not exceed the amortized bound by a factor of $\frac{\log(Cn)}{\eps}$, as in \cite{bhattacharya2021dynamic}.
Next, we describe Challenge 1 in more detail, and highlight the main insights that we employed in order to overcome it.
The discussion on  Challenge 2 is deferred to \Cref{sec:deam}.


As mentioned, in \cite{solomon2023dynamic} the resets are executed on only part of the system. To be more precise, they execute a reset up to some \emph{critical} level, which amounts to running the standard static greedy algorithm only on sets and elements that their level is up to the critical level. In a sense, it just ``reshuffles" the system up to that critical level, and this does not clean up the whole system obviously, but the authors show that such a reset does clean up enough for the approximation factor to hold, and that the system has obtained enough ``credits" for each set and element up to the critical level to change levels in the reset. Thus, to obtain a \emph{fully global} amortized algorithm, it seems necessary to use this idea of resets only up to certain levels. 

It turns out that ``globalizing" \Cref{global}(2) is inherently different and harder than globalizing \Cref{global}(1), which is perhaps the reason that the authors of \cite{solomon2023dynamic} settled for a partially global algorithm rather than a fully global one. 
First, let us compare the effect to the approximation factor, of postponing local falls versus that of postponing local rises; recall that local falls and rises correspond to the first and second conditions of \Cref{global}, respectively. 
To simplify the discussion, consider the unweighted case. 
If there exists a set $s$ that violates \Cref{global}(1) (and needs to perform a local fall), even by a lot --- in the extreme case $|\cov(s)| = 0$, then this will not have a direct effect on other sets, 
and at worst we have caused the set cover size to grow by one (by having $s$ in the set cover even though it may not need to be there). Consequently, one can define a global violation to
\Cref{global}(1) for each prefix of levels
in the obvious way (whenever an $\eps$-fraction of the sets up to level $k$ violate the condition, this prefix is ``dirty''), and it is not difficult to show that the approximation is in check as long as no prefix of levels is dirty.
In contrast, if there is a set $s$ that violates \Cref{global}(2) as $|N_k(s)| \geq \beta^{k+1}$, and its local rise to level $k$ is postponed, this
could affect many sets, since each element in $N_k(s)$ may be  covered by a different set, and also $s$ might not even be in the solution currently. Moreover, a single local rise could create possibly many sets that violate \Cref{global}(1), and they may all become empty following the rise. So one local rise may create possibly many sets that violate \Cref{global}(1), by a lot. Moreover, those violated sets may lie in multiple levels, which makes it harder to quantify the dirt across one level.
If we again consider the extreme case, where $|N_k(s)| = n$, then obviously the optimal set cover size is one, and by not executing the rise our maintained solution can be arbitrarily larger, as discussed in the beginning of \Cref{fullyg}. And indeed, the approximation factor analysis of both \cite{gupta2017online} and \cite{solomon2023dynamic} rely heavily on the fact that \Cref{local}(2) and \Cref{global}(2) (respectively) hold locally for each set. If we aim for a globalization of this condition,   we need to meet three objectives:

\begin{enumerate}

\item We first need to define a \emph{global} notion of ``dirt", meaning a global measure that determines how far off we are from the ``ideal guarantee" --- where each set obeys locally both conditions of \Cref{global}. This definition must take into account local rises that are being postponed (in contrast to \cite{solomon2023dynamic} --- and this is the hard part), and we want this global notion of dirt to be defined for any level, and in particular for any \emph{prefix} of levels (all levels up to a certain level --- as in \cite{solomon2023dynamic}), in order to determine a ``critical level" to do a reset up to. Meaning, we need to be able to determine whether ``the system up to some level $k$ is dirty" or not.

\item Next, we need to come up with a ``global algorithm'', which  would correspond to the global notion of dirt, and in particular would maintain a relevant global invariant by cleaning up the dirt \emph{globally} via \emph{resets} up to a certain critical level.

\item Lastly, an \emph{inherently} different approximation factor analysis seems to be necessary, since the known ones crucially rely on the validity of the second condition of the invariants (\Cref{local} or \Cref{global}) for every set; we need a new argument that would correspond to the new global invariant, which is defined by the new notion of global dirt.

\end{enumerate}



\paragraph{Naive Attempt.~}
The first attempt for meeting the first objective is to use a binary distinction between  {\em active} and {\em passive} elements.\footnote{This terminology of active and passive elements is from \cite{bhattacharya2021dynamic}.
We believe it is instructive to use the same terminology,
even though our definitions of active/passive are not the same as \cite{bhattacharya2021dynamic}, 
since we aim at achieving a unified deamorization approach, applicable also to the low-frequency regime.}
We will say that each element upon insertion is passive, and once it participates in a reset it becomes active. We shall consider each passive element at level $k'$ as a ``dirt unit" at level $k'$. Once the number of dirt units \emph{up to} level $k$ surpasses an $\epsilon$-fraction of the total number of elements up to level $k$, we say that the system is {\em $k$-dirty}. 

For the second objective, we will maintain the invariant that the system is never $k$-dirty for any $k$. To do so, we will execute a reset (static greedy algorithm) on the subsystem of elements and sets that lie up to level $k$ immediately when the system becomes $k$-dirty. Following this reset, by definition of our global dirt, all elements up to $k$ become active, which cleans up all dirt up to level $k$, and the invariant holds. Since passive elements are newly inserted elements that have not yet participated in a reset, if each inserted element arrives with $\frac{1}{\epsilon}$ credits, then we would have one credit for each element participating in a reset. 

The problem with this naive suggestion lies within the third objective, meaning the approximation factor may blow up. Immediately following a reset up to level $k$ indeed an element $e$ participating in this reset does not want to be part of a rise to any level up to $k$ (because the greedy algorithm would have taken care of that), but since these resets are executed on only part of the system, it could be that $e$ still wants to rise to some level higher than $k$. Meaning, if an element $e$ wanted to rise to some level $k'>k$ before the reset to level $k$, a reset to level  $k$ does not change this, since it only ``shuffles'' elements at level up to $k$, so in a sense this element has not been fully ``cleaned" yet. The same elements that wanted to rise to level $k'$ still want to rise to there after the reset to level $k$. Thus, even if all elements are active, which would mean that our system is entirely ``clean", it could be that many rises need to occur, which may blow up the approximation factor. Therefore, this binary definition of active or passive is insufficient, and we need to revise our definition of global dirt --- taking this issue into account.

\paragraph{Meeting Objective $1$ (Global Dirt).~}
It seems that our initial binary definition of active/passive must be ``level-sensitive" for it to work. Meaning, an element will be considered active \emph{up to a certain level}, and then passive \emph{from that level upwards}. Let us define the \emph{passive level} of an element $e$ to be this certain level, denoted by $\plev(e)$, and roughly speaking it will be one level higher than the reset level of the highest reset in which $e$ participated in since it was inserted. An element $e$ will be ``clean" below its passive level, and ``dirty" at or above it, and it is important to have  $\lev(e) \le \plev(e)$ (see the discussion below). Denote by $A_k$ (respectively, $P_k$) the set of all elements $e$ with $\lev(e) \le k$ and  $\plev(e)$ larger than (resp., no larger than) $k$. An element in $A_k$ (resp., $P_k$) will be called $k$-active (resp., $k$-passive). We define the system to be {\em $k$-dirty} if $|P_k| > 2\epsilon \cdot |A_k|$. Since $A_k \cup P_k$ is the set of all elements at level at most $k$, the system is $k$-dirty if roughly more than a $2\epsilon$-fraction of all elements at level up to $k$ have a passive level also up to $k$.

\paragraph{Meeting Objective $2$ (Global Algorithm).~}
We want to maintain the following invariant:

\begin{invariant}[see \Cref{inv} for more details]\label{our}
The following three conditions should hold:
	\begin{enumerate}[(1)]
 \item For any set $s$ in the current solution, we have $\frac{|\cov(s)|}{\cost(s)}\geq \beta^{\lev(s)}$.
		\item Define $N_k(s) = A_k \cap s$. For any set $s\in \sets$ and level $k$, we have $\frac{|N_k(s)|}{\cost(s)} < \beta^{k+1}$.	
  \item For any level $k$, we have $|P_k|\leq 2\epsilon \cdot|A_k|$.
	\end{enumerate}	
\end{invariant}

\noindent The first two conditions of the invariant correspond to the two in \Cref{global}, respectively. It may seem as though the first and second conditions imply local constraints, since they hold for each  set $s$. However, we make two crucial changes in the definitions: In the first condition, $\cov(s)$ is redefined to include also deleted elements that have not gone through a reset, and in the second condition, $N_k(s)$ is redefined to consider only \emph{$k$-active} elements. In a sense, these two conditions only consider ``clean" elements. Lastly,  we need to ensure that the vast majority of elements are indeed ``clean". Meaning, we want to prevent the accumulation of too many $k$-passive elements, for each $k$, otherwise the first two conditions would be meaningless, since a large fraction of  elements in the system would not be considered,
which may blow up the approximation factor.
To summarize, the first two conditions are local constraints that disregard all ``passive" elements (for each level), and the third condition ensures that such passive elements (for each level) are scarce, {\em and this is where the global relaxation for the first two comes into play}. Intuitively, the purpose of the third condition is to \emph{divert} dirt from the first two ``local" conditions (which are analogous to \Cref{global}) to the third, which is \emph{global} by design and thus crucial to achieve a fully global algorithm. 

To maintain this invariant we will execute a \emph{reset} up to level $k$ once $|P_k| > 2\epsilon \cdot|A_k|$, which amounts to running the static greedy algorithm on the subsystem of elements and sets that are up to level $k$, and removing deleted elements up to level $k$ from the system. We want to trigger resets only once there is a violation to the third condition. Thus, when an element is inserted, we will assign its passive level to be its actual level; 
in a sense, it can be considered as ``completely passive", 
since it cannot be in any set $A_k$ and $N_k(s)$ for any $k$ and $s$. When an element is deleted, we will not remove it from the system yet, and instead just mark it as \emph{dead}, and assign its passive level to be its actual level. Therefore, deletions cannot reduce the size of $\cov(s)$ for any $s$. Thus, insertions and deletions cannot cause violations to the first two conditions, and instead they add ``passiveness" to the system,
which will eventually trigger a violation to the third condition. 

We want a reset up to level $k$ to completely ``clean up" everything up to $k$. Meaning, we would want $P_{k'}=\emptyset$ for any $k' \leq k$ following the reset. Thus, by definition of $P_k$, each participating element must have a passive level higher than $k$ following the reset. Not only do we want a reset to level $k$ to clean up everything up to that level, we also require that it would not create more dirt (or ``passiveness") in any higher level, otherwise a reset can trigger another (higher) reset, which could blow up the update time. We want only insertions and deletions to create dirt. Thus, if for example a reset to level $k$ is being executed and as a result a participating set $s$ wants to be created at level $k+5$, because it contains about $\cost(s) \cdot \beta^{k+5}$ participating elements, we will not allow this, since this affects all levels between $k$ and $k+5$. Therefore, we will \emph{truncate} any reset to level $k$ at level $k+1$, meaning we will not allow participating sets to cover at any level higher than $k+1$. This way levels higher than $k$ are not affected by the reset, meaning there is no change to $P_{k'}$ and $A_{k'}$ for any $k' > k$, and notice that all participating elements in a reset up to level $k$ will end up at a level up to $k+1$ following the reset. We conclude that following a reset to level $k$ the first two conditions still hold by design of the greedy algorithm and the level assignment in it, the third condition holds since $|P_{k'}|=0$ for any $k' \leq k$, and the reset has not raised $|P_{k'}|$ or lowered $|A_{k'}|$ for any $k' > k$, so this reset cannot trigger a reset at any higher level. For more details regarding the algorithm description, see \Cref{sec:basAlg}.

Since each participating element must have a passive level higher than $k$ following a reset to level $k$, each participating element will be assigned a passive level of the maximum between $k+1$ and its previous passive level. In this way, notice that the passive level of an element will never be lower than its level, and that the passive level of any element throughout the entire update sequence is {\em monotonically non-decreasing}. This means that the number of different passive levels an element can go through during the update sequence is bounded by the number of levels in the system, and it turns out that this bound is what mandates the amortized update time to be $O(f \cdot \log n)$, neglecting  dependencies on $\eps$ and $C$. To show this, consider a token scheme which gives each inserted element $O(f)$ tokens for each passive level it could be at. The key observation is that even if there are multiple resets to the same  level $k$ throughout the update sequence, each element can only once be part of the collection $P_k$ that triggers the reset to level $k$ once $|P_k| > 2\epsilon \cdot|A_k|$, as afterwards its passive level would be at least $k+1$, and it will never decrease. Thus, we can give each element tokens to be responsible for only one reset for each level throughout the entire sequence. Since a reset to level $k$ occurs once $|P_k|$ is  (roughly) a $2\epsilon$-fraction of all elements up to level $k$, handing out $O(\frac{f}{\epsilon})$ tokens for each element per level would be enough to redistribute the tokens such that each participating element has $O(f)$ tokens, enough to enumerate the sets containing it and update the corresponding data structures regarding the new level. We have thus obtained an algorithm with $O(f \cdot \log n)$ amortized update time \footnote{The exact amortized update time is $O(\frac{f \cdot \log (Cn)}{\epsilon^2})$, since there are roughly $\frac{\log (Cn)}{\epsilon}$ levels. In \Cref{sec:deam} we explain how to get rid of the $C$ factor.}, which maintains \Cref{our} that is based on our definition of global dirt. The final objective is to show that we achieve the desired approximation factor.

\paragraph{Meeting  Objective $3$ (Approximation Factor).~}

We present a highly nontrivial proof for the approximation factor of $(1+\epsilon) \cdot \ln n$, which might be of independent interest. 
Our proof relies on
\Cref{our}, which uses a global notion of dirt, and as such it has to circumvent several technical hurdles that the previous proofs 
\cite{gupta2017online, solomon2023dynamic}
did not cope with.
See \Cref{pre-toy-approx} and \Cref{toy-approx} for the details. See Figure \ref{firstfi} for an illustration of the definitions and procedures given in the last few paragraphs.


	\begin{figure} 
		\center{\includegraphics[scale=0.35]{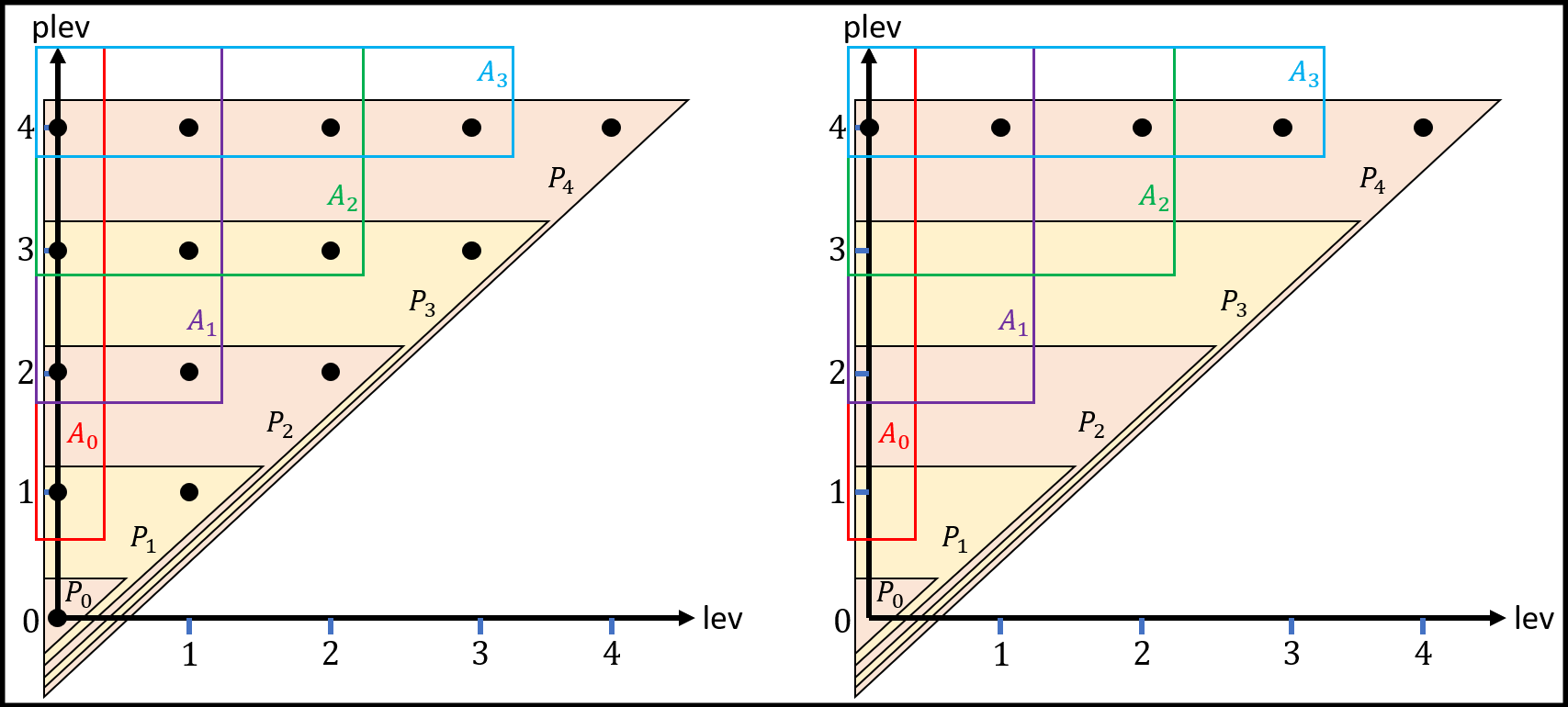}} 
        \caption{\label{fig2} On the left: Each black circle represents a single element with level and passive level up to $4$, and is assigned to different collections $A_k$ and $P_k$. Recall that for each element $e$, $\plev(e) \geq \lev(e)$. We can see the following properties: If an element is in $P_k$ it is also in $P_{k'}$ for any $k'>k$. If an element is in $A_k$ it is also in $A_{k'}$ for any $k'<k$ if its level is up to $k'$. For each $k$ we have that $A_k \cup P_k$ is the collection of all elements up to level $k$, and for each $k$ $A_k \cap P_k = \emptyset$. On the right: Following a reset up to level $3$, we have that $P_i = \emptyset$ for any $i \leq 3$, and the level of all elements that were at level up to $3$ are now at a level up to $4$, but the passive level of each such element is now at least $4$. The black circles here can represent several elements. Notice that ten elements participated in this reset, and these elements are represented by one of the five black circles. Any element that was in $P_4$ before the reset is still there after.}
        \label{firstfi}
	\end{figure}


\subsubsection{A Lossless Deamortization} \label{sec:deam}

Recall that 
an efficient deamortization approach was given in the low-frequency regime,
where \cite{bhattacharya2021dynamic} 
deamortized the fully global amortized primal-dual algorithm of \cite{bhattacharya2019new}. However, the worst-case update time exceeds the amortized bound by a factor of $\frac{\log(C n)}{\eps}$.

The focus of this work is the high-frequency regime,
which, as mentioned already, appears to be more challenging when it comes to the dynamic setting. 
Having obtained a fully global algorithm with amortized update time that matches the previous best amortized bounds \cite{gupta2017online,solomon2023dynamic}, our next challenge is to deamortize it to achieve a good worst-case update time. We develop a {\em lossless} deamortization approach for the high-frequency regime, using which we achieve a worst-case update time that matches the best amortized bounds, and actually shaves off a $\eps^{-3}$ factor from \cite{solomon2023dynamic}.
We then apply our deamortization approach also in the low-frequency regime, first to shave off a $\log (Cn)$ factor, and then to remove the dependency on the aspect ratio $C$.

Our deamortization approach is reminiscent of the one in the low-frequency regime \cite{bhattacharya2021dynamic}, since we need to cope with similar technical difficulties. Nonetheless, our approach has to deviate from the previous one in several key points.
We next discuss some of those technical difficulties, highlighting the new hurdles that we overcame on the way to achieving a {\em lossless} deamorization.

Consider a reset to level $k$.
If $k$ is large enough, then the reset cannot be carried out within a single update step, but rather needs to be {\em simulated} on \EMPH{the background} within a long enough time interval, where in each update step we can execute a small amount of computational steps. 
We shall denote by $\reset(k)$ a reset instance to level $k$;  roughly speaking, we would like to execute $O(f/\eps)$ computational steps of $\reset(k)$ for any possible level $k$ following each update step, so that the worst-case update time will be the number 
$O(\log_\beta (Cn))$
of levels times $O(f/\eps)$, namely $O(\frac{f \log(Cn)}{\eps^2})$.
Importantly, before a $\reset(k)$ instance can start, we first need to copy the contents of the current \EMPH{foreground} (output) solution  up to level $k$, as well as the underlying data structures,
to a chunk of {\em local memory} on the background, which is disjoint from the solution and data structures on the foreground as well as from those of any other reset instance that is running on the background. It is crucial that the contents of memory in any $\reset(k)$ instance will form an {\em independent} copy of the foreground solution and data structures up to level $k$.
Only after we have copied those contents, we turn to simulating the execution of the reset on the background. Finally, after termination of the reset in the background, we need to bring back the new solution and data structures up to level $k$ that we have in the background to the foreground (and overwrite it);  a central crux
(discussed below) is that this last part needs to be carried out within a single update step. 

If only one reset were to run in the background at any point in time, things would be easy.
However,
multiple resets at different levels
need to run together, and they all need to be simulated on the background at the same time; this issue, alas, may lead to various types of conflicts and inconsistencies.
Indeed, as mentioned, when a reset to level $k$ starts,  we first copy the contents of the foreground solution and data structures up to level $k$
to the background. However, these contents in the foreground may be partially or fully overwritten by resets that get terminated before the one that has just started, since any terminating reset is supposed to bring back (and overwrite) its background solution and data structures to the foreground.
This creates inconsistencies between the views of the foreground by different reset instances.

The key question is: {\em How should we resolve such inconsistencies?}
It is natural to give a higher precedence to a reset instance at a higher level than to a lower level instance, as it essentially operates on a super set-system (a superset of sets and elements), however the time  needed to complete a reset grows with the reset level, hence a reset at a lower level might have started the reset well after the higher level reset,  so it should hold a more up-to-date foreground view. 

We will not discuss the answer to this question in detail here;
the formal answer appears in \Cref{threes}.
Instead, we wish to highlight a key difference between our approach and that of \cite{bhattacharya2021dynamic},
which allows us to shave the extra $\log_\beta (Cn)$ factor in the time bound. 

Both our algorithm and that of \cite{bhattacharya2021dynamic} assign elements and sets to levels at most $O(\log_\beta(Cn))$, and for each level $k$ there is a $\reset(k)$ instance that is running on a separate chunk of local memory on the background.
The executions of $\reset(k)$ in the two algorithms are quite different.
First, while \cite{bhattacharya2021dynamic}  simulates the water-filling primal-dual algorithm, we need to simulate the greedy algorithm.
There are also other differences, including the exact way that the algorithms cope with adversarial element updates that occur during the resets' simulations. 
The key difference, however, is in the manner in which we resolve the aforementioned inconsistencies, briefly described next.

In both algorithms, when $\reset(k)$ terminates, it switches its local memory to the foreground and {\em aborts} all other lower level instances $\reset(i)$, for all $i<k$. 
To ensure that all the aborted instances $\reset(i)$ will have an independent local copy of the current data structures up to level $i$, the approach of \cite{bhattacharya2021dynamic}
is that, besides executing the water-filling procedures, the instance $\reset(k)$ will also be responsible for initializing an independent copy of the data structures up to level $i$ for instance $\reset(i)$, for all $i 
< k$, right after $\reset(i)$ is aborted by $\reset(k)$. This is the main reason that the algorithm of \cite{bhattacharya2021dynamic} has a quadratic dependency on $\log_\beta(C n)$, as $\reset(k)$ needs to prepare the initial memory contents for all other instances below it after it terminates, and it is crucial to carry this out within a single update step, again to avoid inconsistencies.

In our approach, to save the extra $\log_\beta (Cn)$ factor in the update time,  
the $\reset(k)$ instance will no longer be responsible for initializing the memory contents of $\reset(i)$, for all $i<k$, right after $\reset(i)$ is aborted by $\reset(k)$.
Instead, each instance $\reset(i)$ will initialize its own memory in the background by copying data structures in the foreground up to level $i$, and only when the initialization phase is done, the actual simulation procedure begins (of either the greedy algorithm in our case, or the water-filling algorithm as in \cite{bhattacharya2021dynamic}). Moreover, we would like to carry out the termination of any $\reset(k)$ instance in a single update step, meaning within $O(\log_\beta (Cn))$ time. Alas, the caveat of such a modification is that we are no longer able to determine in constant time the levels of sets and elements on the foreground (although we are able to do so in each $\reset(k)$ instance running on the background). 
Instead, we propose an \EMPH{authentication} process for determining the foreground level of any set or element in $O(\log_\beta(Cn))$ time.  We demonstrate that despite this caveat, we are able to achieve the desired update time of $O\brac{\frac{f\log(Cn)}{\epsilon^2}}$, see \Cref{updtime} for details. 



\paragraph{Removing Dependency on Aspect Ratio.~}
The approach suggested above can only achieve a worst-case update time of $O_\epsilon(f\log(Cn))$, which could be prohibitively slow for a sufficiently large aspect ratio $C$. To remove the dependence on the aspect ratio, the first natural approach is to apply our algorithm only on the {\em lowest window} of $10\log_{\beta}n$ consecutive levels, which starts with the lowest non-empty level (i.e., which contains at least one element), and directly add all sets to our set cover solution on all higher levels (after the window). 
The intuition behind this approach is that sets belonging to levels higher than the lowest window have negligible costs compared to sets inside the lowest window, so adding those sets to our set cover solution does not change our approximation ratio significantly.

The main issue with this approach is that the lowest non-empty level, and thus the lowest window, changes dynamically. In particular, the
adversary could delete elements in the lowest window. Once this window becomes empty, the algorithm must switch its attention to a different window at higher levels. Alas, since the algorithm did not maintain 
any structure on higher levels, and in particular the underlying invariants could be completely violated outside the lowest window, restoring the necessary structures and invariants on high levels due to a sudden switch would be a heavy computational task, which cannot fit in our worst-case time constraints. If instead of considering the lowest window of $10 \log_\beta n$ {\em consecutive} levels, we consider a window that consists of the lowest $10 \log_\beta n$ {\em non-empty} levels, we will still run into the same problem --- the adversary could make all those non-empty levels empty (and thus to trigger a switch to a higher window) much earlier than the algorithm may hope to restore the invariants at higher levels, since it is possible that the lower non-empty levels occupy far less elements than the higher ones.

To fix this issue, let us partition the entire level hierarchy  into a sequence of fixed non-overlapping windows, each consisting of $c\log_{\beta}n$ consecutive levels for a constant $c$; for concreteness, we assume in this discussion that $c = 10$. Instead of maintaining the validity of the data structures and invariants only for the lowest nonempty window, we will maintain them across all windows, by applying the previous algorithm (with update time that depends on the aspect ratio) for every window as a black-box, and the output solution would be the union of all set covers ranging over all the windows. 
For efficiency purposes, we would like to somehow {\em map} every element to a {\em single} window (instead of up to $f$ windows, one per each set to which the element belongs), so that for each element update, we will only need to apply as a black-box our previous dynamic set cover algorithm on that window, and do nothing for all other windows.
Obtaining such a mapping, where each element is mapped to only one window, is problematic in terms of the approximation
factor. We will not get into this issue, since even ignoring it,
this approach may only give a $(2(1+\epsilon)\ln n)$-approximation, rather than a $((1+\epsilon)\ln n)$-approximation. Indeed, consider the case where elements in the lowest window are all lying towards the higher end of the window; more specifically, assume that in the lowest window of levels $[0, 10\log_{\beta}n]$, all elements are on levels $[9\log_\beta n, 10\log_{\beta}n]$. In this case, the costs of sets on levels $>10\log_{\beta}n$ are not negligible compared to those on levels $[9\log_{\beta}n, 10\log_{\beta}n]$. Consequently, although all sets in the third lowest window and all higher ones have negligible costs with respect to the lowest window,
we can only argue that the approximation ratios in the two lowest windows are both $\brac{(1+\epsilon)\ln n}\cdot \opt$, 
which results in a $(2(1+\epsilon)\ln n)$-approximation.



To resolve this issue, we will use two overlapping sequences of windows instead of one; that is, the first sequence of windows is roughly $[0, 20\log_{\beta}n]\cup [20\log_{\beta}n + 1, 40\log_{\beta}n]\cup\ldots$, and the second sequence partitions the levels as roughly $[0, 10\log_{\beta}n], [10\log_{\beta}n+1, 30\log_{\beta}n]\cup [30\log_{\beta}n+1, 50\log_{\beta}n]\cup \ldots$. Then, for each of the two partitions, we apply our aspect-ratio-dependent algorithm as a black-box within each of the windows independently, so in the end we are maintaining two different candidate set cover solutions, where the one of smaller cost would be presented to the adversary. We argue that at any moment, at least one of the candidate set cover solutions provides a $((1+\epsilon)\ln n)$-approximation. To see this, consider the lowest non-empty window in each of the two sequences; we can show that in one of those windows, the lowest non-empty level lies in the lower half of that window, and the set cover solution corresponding to that window provides the required approximation, since all sets in the second lowest window and all higher ones in that sequence have negligible costs with respect to the lowest window, due to the half-window ``buffer''  that we have between the lowest nonempty level and the second window.

Our approach, which employs a fixed partition into windows, has two advantages over alternative possible suggestions that use dynamically changing windows. First, it is more challenging to maintain the data structures and the required invariants when using dynamically changing windows (and it is not even clear whether such alternative suggestions could work).
Second, and perhaps more importantly,
our approach enables us to apply the aspect-ratio-dependent algorithm as a {\em black-box} in each window, whereas it is unclear how to apply the algorithm as a black-box when using dynamically changing windows. See \Cref{remove} for details.

\paragraph{A Unified Approach.}
In \Cref{primdu} we demonstrate that our deamortization approach extends seamlessly to the low-frequency regime. This also applies to the removal of the aspect ratio dependency from the time bound, which, as mentioned above, is achieved via a black-box reduction. Our approach thus {\em unifies the landscape} of dynamic set cover algorithms with worst-case time bounds.

\section{Our Algorithm I: Aspect Ratio Dependence in Update Time} \label{threes}
We first prove a weaker version of our result, where the worst-case update time depends on the aspect ratio. 

\begin{theorem}\label{warmup}
	For any set system $(\univ, \sets)$, along with a cost function $\cost: \sets\rightarrow [1/C, 1]$, that undergoes a sequence of element insertions and deletions, where the frequency is always bounded by $f\geq \ln n$, and for any $\epsilon \in (0, \frac{1}{4})$, there is a dynamic algorithm that maintains a $((1+\epsilon)\ln n)$-approximate minimum set cover in $O\brac{\frac{f\log(Cn)}{\epsilon^2}}$ deterministic worst-case update time.
\end{theorem}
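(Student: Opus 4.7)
The plan is to execute the two-stage program outlined in Sections \ref{fullyg} and \ref{sec:deam}: first build a fully global amortized algorithm maintaining \Cref{our}, then deamortize it losslessly via concurrent background reset instances, and finally prove the approximation guarantee. Throughout, $\beta = 1+\epsilon$, and levels lie in $[-1, O(\log_\beta(Cn))]$, which already explains the $\log(Cn)/\epsilon$ factor in the level count.

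For the amortized algorithm I maintain, for every element $e$, both $\lev(e)$ and $\plev(e) \geq \lev(e)$, where an inserted element is placed ``fully passive'' (with $\plev(e) = \lev(e)$ initially), and a deleted element is marked dead rather than removed: it remains in $\cov(s)$ and gets $\plev(e) = \lev(e)$. Thus updates never decrease any $\cov(s)$ nor raise any $|N_k(s)|$, so only \Cref{our}(3) can be violated by an update. Whenever $|P_k| > 2\epsilon \cdot |A_k|$ for some $k$, I invoke $\reset(k)$, which runs the static greedy algorithm on the subsystem of sets and elements at levels $\leq k$, with the truncation rule that no participating set ends up on a level $> k+1$; each participating $e$ then has $\plev(e) \leftarrow \max(\plev(e), k+1)$. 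This design restores \Cref{our}(3) for all $k' \leq k$ while not altering $A_{k'}, P_{k'}$ for any $k' > k$, so no cascading resets are triggered. For the running time, I hand each inserted element $O(f/\epsilon)$ tokens per level. The key observation is that $\plev(e)$ is monotonically non-decreasing, so $e$ can appear in the ``triggering'' $P_k$ at most once across the entire update sequence for each $k$; and since $\reset(k)$ is triggered only when $|P_k|$ is a $2\epsilon$-fraction of $|A_k|$, the tokens held by these $\Theta(\epsilon |A_k|)$ triggering elements suffice to pay for the $O(f \cdot |A_k|)$ cost of the reset, yielding amortized update time $O(f \log(Cn)/\epsilon^2)$.

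For the deamortization, I run one background reset instance $\reset(k)$ for every level $k$ simultaneously, and schedule $O(f/\epsilon)$ simulation steps for each instance per update step, giving worst-case update time $O(f \log(Cn)/\epsilon^2)$ across all $O(\log_\beta(Cn))$ levels. Each $\reset(k)$ has an initialization phase that copies the foreground data structures restricted to levels $\leq k$ into its own local memory, followed by a simulation phase that runs the greedy on the copy while also applying any adversarial updates that fall in its view. Upon termination, $\reset(k)$ swaps its local memory to the foreground and aborts every $\reset(i)$ with $i < k$; crucially, instead of having $\reset(k)$ reinitialize the memory of the aborted instances (which would cost an extra $\log_\beta(Cn)$ factor as in \cite{bhattacharya2021dynamic}), each aborted $\reset(i)$ reinitializes itself by copying directly from the foreground. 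The main obstacle here is that the foreground is being overwritten piecewise by terminations at different levels, so reading the current foreground level of a set or element is no longer an $O(1)$ operation; I resolve this via an authentication procedure that walks through the relevant instances' termination history to determine the true current level in $O(\log_\beta(Cn))$ time, which I absorb into the update time budget.

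The final and most delicate step is proving $((1+\epsilon)\ln n)$-approximation under \Cref{our}, which I expect to be the main obstacle: the classical greedy charging analysis (and the analyses in \cite{gupta2017online, solomon2023dynamic}) rely on condition (2) holding for \emph{every} set and level, whereas our condition (2) only constrains \emph{active} elements, with the global condition (3) controlling the remaining ``passive'' dirt. I will set up a dual-fitting-style charging: for each set in the maintained cover I charge its cost against the elements in $\cov(s)$ using the weights $\beta^{-\lev(e)}$ (analogous to the greedy harmonic argument), then partition the total into contributions from active elements (controlled level-wise by condition (2), giving the clean $(1+\epsilon)\ln n$ factor against $\opt$ via the factor-$\beta$ spacing of levels), and contributions from passive elements (globally bounded level-wise by the $2\epsilon$-fraction in condition (3)). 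The passive contribution must be absorbed into the $(1+O(\epsilon))$ slack; dead elements need separate care since they inflate $\cov(s)$ without needing to be covered, but because every dead element is passive at its own level and therefore counted in $P_k$ until its presence is cleared by some reset, condition (3) bounds them in the same global way. Re-scaling $\epsilon$ by a constant then yields the target bound, and combining this with the amortized construction and the deamortization above completes the proof of \Cref{warmup}.
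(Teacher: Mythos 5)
Your overall architecture is the same as the paper's: the level-parameterized active/passive distinction with monotone $\plev$, truncated resets, a background instance $\reset(k)$ for every level simulated at pace $O(f/\epsilon)$ per update step, self-initialization of aborted instances (rather than having the terminating instance rebuild them, which is exactly how the extra $\log_\beta(Cn)$ of \cite{bhattacharya2021dynamic} is avoided), the $O(L)$-time authentication of foreground levels, and an approximation analysis that splits the mass $\beta^{-\lev(e)}$ into an active part charged against $\opt$ via \Cref{inv}(1) and a passive part absorbed via \Cref{inv}(3). However, there is a genuine gap: you never establish that \Cref{inv}(3), i.e.\ $|P_k|\le 2\epsilon|A_k|$ for every $k$, actually holds in the deamortized algorithm. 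The trigger rule you use in the amortized stage --- start $\reset(k)$ only once the ratio exceeds $2\epsilon$ --- does not survive deamortization: once a reset must be simulated over many update steps, the adversary keeps inserting and deleting while it runs, so by the time it terminates the ratio can be far above $2\epsilon$, and the $(1+O(\epsilon))$ slack in the telescoping argument (which invokes the $2\epsilon$ bound at every level) is lost. You do switch to running resets at all levels continuously, which is the right algorithmic fix, but you give no quantitative argument that this pace suffices. The paper's correctness proof is devoted to exactly this point: after any $\reset(k')$ with $k'\ge k$ terminates one has the stronger bound $|P_k|<\epsilon|A_k|$ (\Cref{claim4inv}); the immediately restarted $\reset(k)$ completes within fewer than $\tfrac{\epsilon}{2}\left(|P_k|+|A_k|\right)$ further update steps at pace $O(f/\epsilon)$ (\Cref{pre4inv}); terminations of lower-level resets cannot increase $|P_k|$ or change $A_k$ (\Cref{larger}); and a short calculation (\Cref{inv3}) shows the ratio cannot climb from below $\epsilon$ to above $2\epsilon$ within that window --- this is precisely where the hypothesis $\epsilon<\tfrac14$ enters. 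Without this timing argument (or a substitute), the invariant, and hence the approximation guarantee, is unsupported.

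A secondary omission that the timing argument exposes: each reset instance carries $\Theta(L)$ overhead (copying the level pointers at initialization, switching them at termination) that cannot be charged to the participating elements when $f\,|\univ^{(k)}|\ll L$; the paper handles this by declaring such levels ``short'' and executing the entire short-level reset within a single update step, simulating only non-short resets at pace $O(f/\epsilon)$. Without some such mechanism, the completion-time bound needed above fails at low levels containing few elements. Your approximation sketch itself, and the claim that foreground insertions/deletions cannot violate \Cref{inv}(1)--(2), are consistent with the paper (though in the worst-case setting one must also verify, as the paper does in \Cref{consistency}, \Cref{inv1} and \Cref{inv2}, that terminating background resets interleaved with adversarial updates do not introduce violations when their memory is switched to the foreground).
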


\subsection{Preliminaries, Invariants and Approximation Factor Analysis} \label{warmupsec}

Without loss of generality, assume that $\max_{s\in \sets}\cost(s) = 1$. Let $\beta = 1+\epsilon$. All sets $s \in \mathcal{S}$ will be assigned a level value $\lev(s)\in [-1, L]$ where $L = \ceil{\log_\beta(Cn)} + \ceil{10\log_\beta 1/\epsilon}$. Throughout the algorithm, we will maintain a valid set cover $\sets_\alg\subseteq \sets$ for all elements. We will assign each element $e\in \univ$ to one of the sets $s\in \sets_\alg$, which we will denote by $\asn(e)$, and conversely, for each set $s\in \sets$, define its {\em covering set} $\cov(s)$ to be the collection of elements in $s$ that are assigned to $s$, namely $\cov(s) = \{e\mid \asn(e) = s\}$. The level of an element $e$ is defined as the level of the set it is assigned to, namely $\lev(e) = \lev(\asn(e))$, and we make sure that $\lev(e) = \max\{lev(s) \vert s \ni e\}$, meaning $e$ is assigned to the set with the highest level containing $e$. We define the level of each set $s \notin \sets_\alg$ to be $-1$, whereas the level of each set $s \in \sets_\alg$ will lie in  $[0,L]$, so in particular we will have $\lev(e)\in [0, L]$,
for each element $e \in \mathcal{U}$. Let $S_i = \{s\mid \lev(s) = i \}, \forall i\in [-1, L]$, and $E_i = \{e\in \univ\mid \lev(e) = i\}, \forall i\in [0, L]$.

Besides the level value $\lev(e)$ for elements $e$, we will also maintain a value of \emph{passive level} $\plev(e)$ such that $\lev(e)\leq \plev(e) \leq L$, which plays a major role in our algorithm. In contrast to the level $\lev(e)$ of an element $e$, which may decrease (as well as increase) by the algorithm, its passive level $\plev(e)$ will be monotonically non-decreasing throughout its  lifespan.

An element is said to be \emph{dead} if it was deleted by the adversary, hence 
it is supposed to be deleted from $\univ$ --- but it currently resides in $\univ$ as our algorithm has not removed it yet. An element is said to be \emph{alive} if it is not dead. To avoid confusion, we will use the notation $\univ^+\supseteq \univ$ to denote the set of all dead and alive elements (i.e., the elements in the view of the algorithm), while $\univ$ is the set of alive elements (i.e., the elements in the eye of the adversary). We next introduce the following key definitions.

\begin{definition} \label{actpass}
For each level $k$, 
an element 
$e\in \univ^+$
is called \EMPH{$k$-active} (respectively, \EMPH{$k$-passive}) if 
$\lev(e) \leq k< \plev(e)$
(resp., $\plev(e)\leq k$)
and let $A_k = \{e\in \univ^+\mid \lev(e) \leq k< \plev(e)\}$ and $P_k = \{e\in \univ^+\mid \plev(e)\leq k\}$
be the sets of all $k$-active and $k$-passive elements, respectively.
Notice that $A_k \cup P_k$ is the collection of all elements at level $\leq k$, and $A_k \cap P_k = \emptyset$. Moreover, if $A_k \cap P_{j} \neq \emptyset$ for two levels $k \ne j$, then $k < j$. For each set $s\in \sets$, define $N_k(s) = A_k\cap s$.
\end{definition}
While previous works on primal-dual dynamic set cover algorithms \cite{bhattacharya2019new,bhattacharya2021dynamic,bukov2023nearly}
also use the terminology of {\em active} and {\em passive} elements, it has a completely different meaning there.
Moreover, importantly, while in previous work an element may be either active or passive, here we refine this binary distinction by introducing a level parameter; in particular, an element might be $k$-active and yet $j$-passive (for indices $k < j$). 

This refinement is crucial for our algorithm to efficiently maintain the following invariant (Invariant \ref{inv}), which is key to  bounding the approximation factor (see \Cref{pre-toy-approx} and \Cref{toy-approx}).
The first part of the invariant essentially aims at achieving a global analog of the local \Cref{global}(2).
It actually
provides a strict upper bound on
$|N_k(s)|$ for any set $s$ and each level $k$, which might seem too good to be true.
The reason such a strict, local upper bound can be efficiently maintained by the algorithm is that $N_k(s)$ is restricted to the $k$-active elements in set $s$, or in other words, \emph{all $k$-passive elements in $s$ are simply ignored} --- which is where the global relaxation comes into play.
Indeed, to prevent the accumulation of too many $k$-passive elements  --- which is crucial for bounding the approximation ratio ---
the third part of the invariant
restricts the ratio between the $k$-passive elements and the $k$-active elements to be at most $2\eps$ at all times.
Thus, although the upper bound on $|N_k(s)|$ holds  ``locally'' (i.e., for any set $s$ and each level $k$), it only holds ``globally'' (i.e., for an {\em average} set and each level $k$) if we take into account the ignored $k$-passive elements. 
In order for the algorithm to maintain the third part of the invariant, a natural thing to do would be to turn $k$-passive elements into active across all levels (or \EMPH{fully-active}). Alas, if we turned a $k$-passive element into fully-active, that could violate the first part of the invariant across multiple levels. 
To circumvent this hurdle, our algorithm will turn elements into \EMPH{partially-active}, i.e., active in a precise interval of levels;
specifically, element $e$ will become active in the interval $[\lev(e),\plev(e) - 1]$ (as in \Cref{actpass}), and to perform efficiently --- the algorithm will have to carefully choose the right values for $\lev(e)$ and $\plev(e)$;
the exact details are given in the algorithm's description (Section \ref{sec:basAlg}).
Finally, we note that the second part of Invariant \ref{inv} coincides with \Cref{global}(1). Here too, the invariant seems like a local bound since it holds for any $s$, but it uses again the global relaxation provided by the third part of the invariant, since it considers dead elements as well.

\begin{invariant} {\ } \label{inv}
	\begin{enumerate}[(1),leftmargin=*]
		\item For any set $s\in \sets$ and for each $k \in [0, L]$, we have $\frac{|N_k(s)|}{\cost(s)} < \beta^{k+1}$.
		
		
		
		\item For any set $s\in \sets_\alg$, we have $\frac{|\cov(s)|}{\cost(s)}\geq \beta^{\lev(s)}$;
  we note that $\cov(s)$ may include dead elements, i.e., elements in $\univ^+ \setminus \univ$.
  In particular, $\lev(s)\leq \ceil{\log_\beta(Cn)}$.
		Moreover, for each $s\notin \sets_\alg$, $\lev(s) = -1$.
		
		
		
		\item For each $k \in [0, L]$, we have $|P_k|\leq 2\epsilon \cdot|A_k|$. We note that our algorithm does not maintain the values of $|P_k|, |A_k|$.

	\end{enumerate}
\end{invariant}

The following lemma shows that the approximation factor is in check (the term $1+O(\eps)$ can be reduced to $1+\eps$ by scaling).
The proof of Lemma \ref{toy-approx} is inherently different from and more challenging than the  approximation factor proofs of the previous dynamic greedy set cover algorithms \cite{gupta2017online,solomon2023dynamic}; while the proofs of
\cite{gupta2017online,solomon2023dynamic}
are obtained by introducing natural tweaks over the standard analysis of the static greedy algorithm, our approximation factor proof has to deviate significantly from the standard paradigm,
since \Cref{inv}
is inherently weaker than those of \cite{gupta2017online,solomon2023dynamic} --- particularly as \Cref{inv}(1) ignores all $k$-passive elements. 

\bigskip

\color{black}


\begin{lemma}\label{pre-toy-approx}
	Let $\sets^*$ be an optimal set cover for $\univ$ (i.e., of all alive elements), and let $n'$ be an upper bound to the size of each set throughout the update sequence. If \Cref{inv} is satisfied, then it holds that $\cost(\sets_\alg)\leq (1+O(\eps))\cdot \ln n' \cdot \cost(\sets^*)$.
\end{lemma}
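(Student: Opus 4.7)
The starting point is invariant~(2): for each $s\in\sets_\alg$, $\cost(s)\leq \beta^{-\lev(s)}|\cov(s)|$. Summing over $\sets_\alg$ and regrouping by assigned element (using $\lev(e)=\lev(\asn(e))$) gives $\cost(\sets_\alg)\leq \sum_{e\in\univ^+}\beta^{-\lev(e)}$, so it suffices to bound $\Phi:=\sum_{e\in\univ^+}\beta^{-\lev(e)}$ by $(1+O(\eps))\ln n'\cdot\cost(\sets^*)$.

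I would split $\univ^+=U_A\sqcup F$, where $U_A=\{e\in\univ:\lev(e)<\plev(e)\}$ and $F$ consists of the remaining elements of $\univ^+$, all of which satisfy $\lev(e)=\plev(e)$ --- including the dead elements, whose passive level is reset to their actual level on deletion. Each $e\in F$ with $\lev(e)=k$ lies in $P_k\setminus P_{k-1}$, so summation by parts, followed by invariant~(3), gives
\begin{equation*}
\sum_{e\in F}\beta^{-\lev(e)}\leq (1-\beta^{-1})\sum_k \beta^{-k}|P_k|\leq \tfrac{2\eps^2}{1+\eps}\sum_k\beta^{-k}|A_k|.
\end{equation*}
Interchanging sums and using that each $e\in U_A$ contributes to $A_k$ exactly for $k\in[\lev(e),\plev(e)-1]$, a geometric sum yields $\sum_k\beta^{-k}|A_k|\leq \tfrac{1+\eps}{\eps}\sum_{e\in U_A}\beta^{-\lev(e)}$, and therefore $\sum_{e\in F}\beta^{-\lev(e)}\leq 2\eps\sum_{e\in U_A}\beta^{-\lev(e)}$, so $\Phi\leq (1+2\eps)\sum_{e\in U_A}\beta^{-\lev(e)}$.

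It remains to bound the strictly-active contribution. My plan is to assign each $e\in U_A$ to an arbitrary $s^*(e)\in\sets^*$ containing~$e$, and for each $s^*\in\sets^*$ enumerate the elements of $s^*\cap U_A$ in non-decreasing order of level, $e_1,\ldots,e_{m_{s^*}}$ with $k_j=\lev(e_j)$ and $m_{s^*}\leq n'$. The classical harmonic argument, applied with invariant~(1) at level $k_j$, would say $j\leq |N_{k_j}(s^*)|<\beta^{k_j+1}\cost(s^*)$ and hence $\sum_j\beta^{-k_j}\leq (1+\eps)\cost(s^*)\,H_{m_{s^*}}\leq (1+\eps)\ln n'\cdot\cost(s^*)$; summing over $s^*\in\sets^*$ and combining with the bound on $\Phi$ then gives the desired $(1+O(\eps))\ln n'\cdot\cost(\sets^*)$.

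The main obstacle lies in justifying the inequality $j\leq |N_{k_j}(s^*)|$: a strictly-active element $e_i$ ($i<j$) whose passive level lies in $(\lev(e_i),k_j]$ is $k_j$-passive and so is \emph{not} contained in $N_{k_j}(s^*)$, and the bound can fail by up to $|P_{k_j}\cap s^*|$ --- which invariant~(3) controls only globally, not per set. This is exactly where my analysis must diverge from the strictly-local analyses of \cite{gupta2017online,solomon2023dynamic}. I would absorb this per-set excess via aggregation: write the per-$s^*$ harmonic bound with an additive error term $\Delta(s^*)$ governed by $|P_{k_j}\cap s^*|$, and bound $\sum_{s^*}\Delta(s^*)$ using $\sum_{s^*}|P_k\cap s^*|\leq f\,|P_k|\leq 2\eps f\,|A_k|$ together with the same summation-by-parts mechanism used for the $F$-contribution, so that the aggregated error absorbs into an $O(\eps)$ slack and still yields a $(1+O(\eps))\ln n'$ bound. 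This is where I expect the bulk of the technical work to lie.
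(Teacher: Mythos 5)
Your first step (splitting off the ``fully passive'' contribution and controlling it via \Cref{inv}(3) with the telescoping weights $\beta^{-k}-\beta^{-k-1}$) is essentially the paper's first step, up to a minor slack: as written, $\sum_{e\in F}\beta^{-\lev(e)}\leq(1-\beta^{-1})\sum_k\beta^{-k}|P_k|$ drops an additive $|F|\beta^{-L}$ term, which is fixable exactly as in the paper (using $\lev(e)\leq\ceil{\log_\beta (Cn)}$, so $\beta^{-L}\leq\eps\beta^{-\lev(e)}$). The genuine gap is in the third step, and it is not merely ``technical work left to do''. The per-set bound you are aiming for, $\sum_{e\in s^*\cap U_A}\beta^{-\lev(e)}\leq(1+O(\eps))\ln n'\cdot\cost(s^*)$, is \emph{false} under \Cref{inv} alone: take a single set $s^*$ of cost $1$ whose elements at each level $k=0,\dots,\log_\beta n'$ number roughly $\beta^{k+1}$ and have $\plev=\lev+1$. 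Every such element lies in $U_A$, yet $N_k(s^*)$ contains only the level-$k$ elements, so \Cref{inv}(1) is satisfied, while $\sum_{e\in s^*\cap U_A}\beta^{-\lev(e)}=\Theta(\log_\beta n')=\Theta(\ln n'/\eps)$ --- a $1/\eps$ factor above your target. So any correct argument along your lines must be global from the start, and your proposed globalization does not work as stated: (i) the inequality $\sum_{s^*}|P_k\cap s^*|\leq f\,|P_k|\leq 2\eps f|A_k|$ introduces a factor $f$ (which can be polynomial in $n$) that has no mechanism to cancel, so the aggregated error is not $O(\eps)\ln n'\cdot\cost(\sets^*)$; and (ii) the damage to the harmonic argument is not an additive term linear in $|P_{k_j}\cap s^*|$ --- the ranks shift from $1/j$ to $1/(j-d_j)$, which can inflate the harmonic sum multiplicatively (as the example above shows), so writing the error as a per-set $\Delta(s^*)$ controlled by passive counts does not capture it.

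For comparison, the paper avoids the ranking/harmonic argument altogether. It charges each element to the optimal sets only through its \emph{active-interval weight} $\beta^{-\lev(e)}-\beta^{-\plev(e)}$, rather than its full weight $\beta^{-\lev(e)}$; per optimal set $s$ this quantity equals $\frac{1}{\eps}\sum_k(\beta^{-k}-\beta^{-k-1})|N_k(s)|$ (up to normalization), and \Cref{inv}(1) applied levelwise makes each of the $\approx\log_\beta(n')$ relevant levels contribute at most $\eps\cdot\cost(s)$, with a geometric tail of $O(\cost(s))$ above the top level --- yielding a \emph{true} per-set bound of $(1+O(\eps))\ln n'\cdot\cost(s)$ with no error terms to aggregate. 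Globality enters only once, via \Cref{inv}(3), to relate $\sum_e(\beta^{-\lev(e)}-\beta^{-\plev(e)})$ back to $\sum_e\beta^{-\lev(e)}$ (your first step). If you want to salvage your route, the cleanest fix is to replace the full weight $\beta^{-\lev(e)}$ of $U_A$-elements by their active-interval weight before charging to $\sets^*$, which is precisely the paper's proof.
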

\begin{proof}
	By \Cref{inv}(3), we have for all $k \in [0,L-1]$:
	\begin{equation} \label{eq:inv313}
	\brac{\beta^{-k} - \beta^{-k-1}}\cdot |P_k| ~\leq~ 2\epsilon\cdot \brac{\beta^{-k} - \beta^{-k-1}}\cdot|A_k|.
	\end{equation}
	Taking a summation over all  $k\in \left[0, L-1\right]$, the left-hand side of \Cref{eq:inv313} becomes:
	$$\begin{aligned}
		\sum_{k = 0}^{L-1}(\beta^{-k} - \beta^{-k-1})\cdot |P_k| 
		&= \sum_{e\in \univ^+}\sum_{k = 0}^{L-1}(\beta^{-k} - \beta^{-k-1}) \cdot \ind[e\in P_k]\\
		&=\sum_{e\in \univ^+}\sum_{k = \plev(e)}^{L-1}(\beta^{-k} - \beta^{-k-1})\\
		& = \sum_{e\in \univ^+}\brac{\beta^{-\plev(e)}-\beta^{-L}}
	\end{aligned}$$
	and the right-hand side of \Cref{eq:inv313} becomes:
	$$\begin{aligned}
		2\epsilon\sum_{k = 0}^{L-1}(\beta^{-k} - \beta^{-k-1})\cdot |A_k| 
		&= 2\epsilon\sum_{e\in \univ^+}\sum_{k = 0}^{L-1}(\beta^{-k} - \beta^{-k-1}) \cdot \ind[e\in A_k]\\
		&= 2\epsilon\sum_{e\in \univ^+}\sum_{k = \lev(e)}^{\plev(e)-1}(\beta^{-k} - \beta^{-k-1})\\
		& = 2\epsilon\sum_{e\in \univ^+}\brac{\beta^{-\lev(e)}-\beta^{-\plev(e)}},
	\end{aligned}$$
	which yields:	
	$$\sum_{e\in \univ^+}\brac{\beta^{-\plev(e)}-\beta^{-L}}\leq 2\epsilon\cdot \sum_{e\in \univ^+}\brac{\beta^{-\lev(e)} - \beta^{-\plev(e)}}$$
	or equivalently, by adding $\sum_{e\in \univ^+}\brac{\beta^{-\lev(e)} - \beta^{-\plev(e)}}$ on the both sides, 
	\begin{equation} \label{eq:basicub}
	\sum_{e\in \univ^+}\brac{\beta^{-\lev(e)}-\beta^{-L}} ~\leq~ (1+2\epsilon)\cdot \sum_{e\in \univ^+}\brac{\beta^{-\lev(e)} - \beta^{-\plev(e)}}.
	\end{equation}
	We emphasize the point that $\univ^+$ also includes dead elements.

	
	Next, let us lower bound $\cost(\sets^*)$ using the term $\sum_{e\in \univ^+}\brac{\beta^{-\lev(e)} - \beta^{-\plev(e)}}$. For any $s\in \sets^*$, consider the following three cases for any index $k\in [L]$: 
	\begin{itemize}
	\item $k < \log_\beta(1/\cost(s))-1$.
		
		By \Cref{inv}(1), we have: $|N_k(s)| < \beta^{k+1}\cdot \cost(s) < 1$, so $|N_k(s)| = 0$.

		\item $\log_\beta (1/\cost(s))-1\leq k \leq \log_\beta (n'/\cost(s))$.
		
		By \Cref{inv}(1), we have: 
		
		$$\frac{1}{\epsilon}\brac{\beta^{-k} - \beta^{-k-1}}|N_k(s)| = \beta^{-k-1}|N_k(s)| < \cost(s).$$

		\item $k > \ceil{\log_\beta (n' / \cost(s))} = k_0$.
		
		In this case, we use the trivial bound: $|N_k(s)| \leq n' \leq \beta^{k_0}\cdot \cost(s)$, and so we have:
		$$\frac{1}{\epsilon}\brac{\beta^{-k} - \beta^{-k-1}}|N_k(s)| = \beta^{-k-1}|N_k(s)| \le \beta^{k_0 - k-1}\cdot \cost(s).$$
		
	\end{itemize}
Observe that:
	\begin{equation} \label{lefthand}
		\begin{aligned}
		\frac{1}{\epsilon} \sum_{k = 0}^{L-1}(\beta^{-k} - \beta^{-k-1})\cdot |N_k(s)| 
		&= \frac{1}{\epsilon}\sum_{e\in s}\sum_{k = 0}^{L-1}(\beta^{-k} - \beta^{-k-1}) \cdot \ind[e\in N_k(s)]\\
		&= \frac{1}{\epsilon}\sum_{e\in s}\sum_{k = \lev(e)}^{\plev(e)-1}(\beta^{-k} - \beta^{-k-1})\\
		& = \frac{1}{\epsilon}\sum_{e\in s}\brac{\beta^{-\lev(e)}-\beta^{-\plev(e)}}.
	\end{aligned}
	\end{equation}
	By the above case analysis, we have:
	\begin{equation}
	\begin{aligned} \label{righthand}
	\frac{1}{\epsilon} \sum_{k = 0}^{L-1}(\beta^{-k} - \beta^{-k-1})\cdot |N_k(s)|
	&= \frac{1}{\epsilon} \sum_{0 \le k < \log_\beta(1/\cost(s))-1}(\beta^{-k} - \beta^{-k-1})\cdot |N_k(s)| 
	\\&~~~+
	\frac{1}{\epsilon} \sum_{ \log_\beta(1/\cost(s))-1 \le k \le  \log_\beta (n'/\cost(s))}(\beta^{-k} - \beta^{-k-1})\cdot |N_k(s)| 
	\\&~~~+
		\frac{1}{\epsilon} \sum_{\log_\beta (n'/\cost(s)) < k \le L-1}(\beta^{-k} - \beta^{-k-1})\cdot |N_k(s)| 
	\\&<   \sum_{0 \le k < \log_\beta(1/\cost(s))-1} 0 
		\\&~~~+ 
  \sum_{ \log_\beta(1/\cost(s))-1 \le k \le  \log_\beta (n'/\cost(s))} \cost(s) 
		\\&~~~+  \sum_{\log_\beta (n'/\cost(s)) < k \le L-1}\beta^{k_0 - k - 1} \cost(s) 		
	\\&< \brac{0 + (\log_\beta(n') + 2) + 1/\epsilon}\cdot \cost(s).
	\end{aligned}
	\end{equation}
Combining \Cref{lefthand} with \Cref{righthand} yields
	$$\frac{1}{\epsilon}\sum_{e\in s}\brac{\beta^{-\lev(e)} - \beta^{-\plev(e)}}\leq \brac{\log_\beta(n') + 2 + 1/\epsilon}\cdot \cost(s).$$
	Therefore, as $\ln(1+\epsilon) = \epsilon + O(\epsilon^2)$, under the assumption that $\eps =  \Omega(1 / \log n')$ we have:
	$$\sum_{e\in s}\brac{\beta^{-\lev(e)} - \beta^{-\plev(e)}}\leq
	(1+O(\epsilon))\ln n'\cdot \cost(s).$$
	Since $\sets^*$ is a valid set cover for all elements in $\univ$ (all the alive elements) and as for each dead element $e$ (in $\univ^+ \setminus \univ$) we have $\brac{\beta^{-\lev(e)} - \beta^{-\plev(e)}} = 0$, it follows that:
	
	
	\begin{equation} \label{eq:conclude} \sum_{e\in \univ^+}\brac{\beta^{-\lev(e)} - \beta^{-\plev(e)}} ~\leq~
	\sum_{s \in S^*} \sum_{e \in s} (\beta^{-\lev(e)} - \beta^{-\plev(e)})
	~\le~ (1+O(\epsilon))\ln n'\cdot \cost(\sets^*).
	\end{equation}
	
\noindent We conclude that
$$\begin{aligned}
	\cost(\sets_\alg) &= \sum_{s\in \sets_\alg}\cost(s)\leq \sum_{s\in \sets_\alg}\beta^{-\lev(s)}\cdot |\cov(s)| ~=~ \beta\cdot \sum_{s\in \sets_\alg}\sum_{e\in \cov(s)\cap\univ^+}\beta^{-\lev(e)} 
	\\ &\le (1+O(\epsilon))\cdot \sum_{e\in\univ^+}\brac{\beta^{-\lev(e)} - \beta^{-L}}
	~\le~ (1+O(\epsilon))\cdot \sum_{e\in \univ^+}\brac{\beta^{-\lev(e)} - \beta^{-\plev(e)}}
	\\ &\le (1+O(\epsilon))\ln n'\cdot \cost(\sets^*),
\end{aligned}$$
where the first inequality holds by \Cref{inv}(2), the second holds as $\lev(e)\leq L/2$ and hence $\beta^{-\lev(e)} - \beta^{-L} \ge \beta^{-\lev(e)}(1- \beta^{-\ceil{10\log_\beta 1/\epsilon}}) \ge \beta^{-\lev(e)}(1-\eps)$, and the last two follow from \Cref{eq:basicub} and \Cref{eq:conclude}, respectively.
\end{proof}

\begin{corollary} \label{toy-approx}

Since $n' \leq n$, we get that if \Cref{inv} is satisfied, then it holds that $\cost(\sets_\alg)\leq (1+O(\eps))\cdot \ln n \cdot \cost(\sets^*)$. 
    
\end{corollary}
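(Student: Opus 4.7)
My plan is to invoke Lemma \ref{pre-toy-approx} with the specific choice $n' = n$. The only thing that needs to be verified is that $n$ is indeed a valid upper bound on the size of each set throughout the update sequence, in the sense required by the lemma. This is immediate from the problem setup: the dynamic universe $\univ$ contains at most $n$ elements at any point in time, so every set $s\in\sets$ satisfies $|s| \leq |\univ^+\cap s| + \text{(dead elements considered)}$—but more simply, from the point of view of the approximation analysis, every set contributes at most $n$ elements to $N_k(s)$ since $N_k(s) \subseteq s \cap \univ^+$ and the algorithm only keeps in $\univ^+$ the elements it has not yet cleaned up. Thus $n' = n$ is a legitimate substitution.

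With this substitution, the conclusion of Lemma \ref{pre-toy-approx} becomes exactly $\cost(\sets_\alg) \leq (1+O(\eps)) \ln n \cdot \cost(\sets^*)$, which is the claim. One should also briefly note that the implicit requirement in the lemma's proof (used when collapsing $\ln(1+\eps) = \eps + O(\eps^2)$ into a $(1+O(\eps))$ factor on the leading $\ln n'$ term) is of the form $\eps = \Omega(1/\log n')$; since $n'\le n$, this is implied by $\eps = \Omega(1/\log n)$, which is consistent with the regime of interest in \Cref{warmup} (for smaller $\eps$, the bound $(1+O(\eps))\ln n$ degenerates to essentially $\ln n$ anyway, and the statement still holds).

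There is no substantive obstacle here: the corollary is a direct specialization of Lemma \ref{pre-toy-approx} obtained by replacing the set-size parameter $n'$ with the global bound $n$ on the universe size. The real content—namely, the translation of \Cref{inv} into a $(1+O(\eps))\ln n'$ approximation bound against the optimum—is entirely carried by the lemma. The purpose of isolating this corollary is purely expository: to state the approximation guarantee in the form that feeds directly into the proof of \Cref{warmup}, where only the global parameter $n$ appears.
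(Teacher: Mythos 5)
Your proposal is correct and matches the paper, which likewise treats the corollary as an immediate specialization of \Cref{pre-toy-approx} using that $n$ bounds the relevant set sizes (equivalently, $\ln n' \le \ln n$). One tiny remark: the cleanest justification that $n'=n$ is admissible is that $N_k(s)\subseteq A_k$ contains only alive elements (a dead element has $\plev(e)=\lev(e)$, so it is never $k$-active), hence $|N_k(s)|\le |s\cap\univ|\le n$; bounding via $s\cap\univ^+$ alone is slightly off, since $\univ^+$ may contain up to roughly $(1+2\eps)n$ elements.
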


\subsection{Algorithm Description} \label{sec:basAlg}

We will skip the details for the fully global algorithm that maintains \Cref{inv}, with \emph{amortized} update time of $O(\frac{f \cdot \log (Cn)}{\epsilon^2})$; for the general outline of this algorithm, see \Cref{fullyg}. Instead, we will dive straight into our ultimate goal of providing an algorithm that maintains \Cref{inv}, with a \emph{worst-case} update time of $O(\frac{f \cdot \log (Cn)}{\epsilon^2})$ --- this is the result that underlies \Cref{warmup}.
The main procedure of the algorithm is a {\em reset} operation, denoted by $\reset(k)$ when initiated for a level $k$. Simply put, performing a reset to level $k$ amounts to running the static greedy algorithm on the subuniverse of elements and sets at level up to $k$.
Our algorithm distinguishes between procedures and data structures that are executed and maintained in the \emph{foreground} and those in the \emph{background}. The foreground procedures can be executed from start to finish between one adversarial update step to the next --- and as such are very basic procedures, and the foreground data structures support the foreground procedures and are used for explicitly maintaining the {\em output} solution at every update step.
In contrast, the background procedures can take a long time to run; the algorithm {\em simulates} their execution over a possibly long time interval in the background, and only upon termination of the execution, the main algorithm may ``copy'' the background data structures and their induced output
into the foreground data structures and their induced output.
In particular, the aforementioned reset operation will be running in the background, while adversarial insertions and deletions will be handled in a rather straightforward manner in the foreground. 
We note that for an amortized algorithm, there is no need for any background procedures, since everything can be executed on the foreground 
when needed. Meaning, once a reset needs to be executed, we just execute it in a single update step in the foreground. In this section we give a high level description of the algorithm, refer to \Cref{updtime} for more lower level details regarding the exact data structures maintained etc. 

\subsubsection{Foreground}

The set cover solution $\sets_\alg$, which serves as the interface to the adversary (i.e., the output), will be maintained in the foreground.
Element deletions and insertions will be handled in the foreground as follows.

\begin{itemize}[leftmargin=*]
	\item \textbf{Deletions in the Foreground.} When an element $e\in \univ$ is deleted by the adversary, 
	we set $\plev(e) \leftarrow \lev(e)$, and mark element $e$ as dead.
	Finally, for each $k\geq \lev(e)$, we feed the deletion of $e$ to instance $\reset(k)$ (if operating).  Note that there is no need to feed the deletion of $e$ to instances of $\reset(k)$ with $k \leq \lev(e) - 1$, since $\reset(k)$ is not affected by (and does not affect) levels larger than $k+1$. See \Cref{del} for the pseudo-code.
\begin{algorithm} 
	\caption{$\textsf{Foreground-Delete}(e)$}\label{del}
	$\plev(e) \leftarrow \lev(e)$\;
	mark $e$ as dead\;
	\For{$k$ from $\lev(e)$ to $L$}{
		\If{$\reset(k)$ is operating}{
			feed this deletion of $e$ to background system working on $\reset(k)$\;
		}
	}
\end{algorithm}

	\item \textbf{Insertions in the Foreground.} When an element $e$ is inserted by the adversary, go over all sets $s\ni e$ and check if there is one in our set cover solution $\sets_\alg$. If so, let $s\ni e$ be such a set at the highest level, and assign $\lev(e) = \plev(e) = \lev(s), \asn(e) = s$. 
	If $e$ is not covered by any set currently in $\sets_\alg$, add an arbitrary $s\in \sets\setminus \sets_\alg$ to $\sets_\alg$, (which was at level $-1$, as guaranteed by \Cref{inv}(2)), and assign $\plev(e) = \lev(e) = \lev(s) = 0, \asn(e) = s$.
	(Note that after adding $s$ to $\sets_\alg$, \Cref{inv}(2) is still satisfied, as $\frac{|\cov(s)|}{\cost(s)} \geq 1= \beta^{\lev(s)}$.) Finally, for each $k\geq \plev(e)$, feed the insertion $e$ to instance $\reset(k)$ if operating.
	See \Cref{ins} for the pseudo-code, and \Cref{fig2} for an illustration of a deletion and insertion.

\begin{algorithm}
	\caption{$\textsf{Foreground-Insert }(e)$} \label{ins} 
	let $s \ni e$ be highest level set containing $e$\;
	\If{$\lev(s) > -1$}{
		$\lev(e),\plev(e) \leftarrow \lev(s)$\;
		$\asn(e) \leftarrow s$\;
	}\Else{
		$\lev(e),\plev(e),\lev(s) \leftarrow 0$\;
		$\asn(e) \leftarrow s$\;
            $\sets_\alg \leftarrow \sets_\alg \cup \{ s \}$\; 
	}
	\For{$k$ from $\plev(e)$ to $L$}{
		\If{$\reset(k)$ is operating}{
		feed this insertion of $e$ to background system working on $\reset(k)$\;
		}
	}
\end{algorithm}

\begin{figure}
		\center{\includegraphics[scale=0.35]{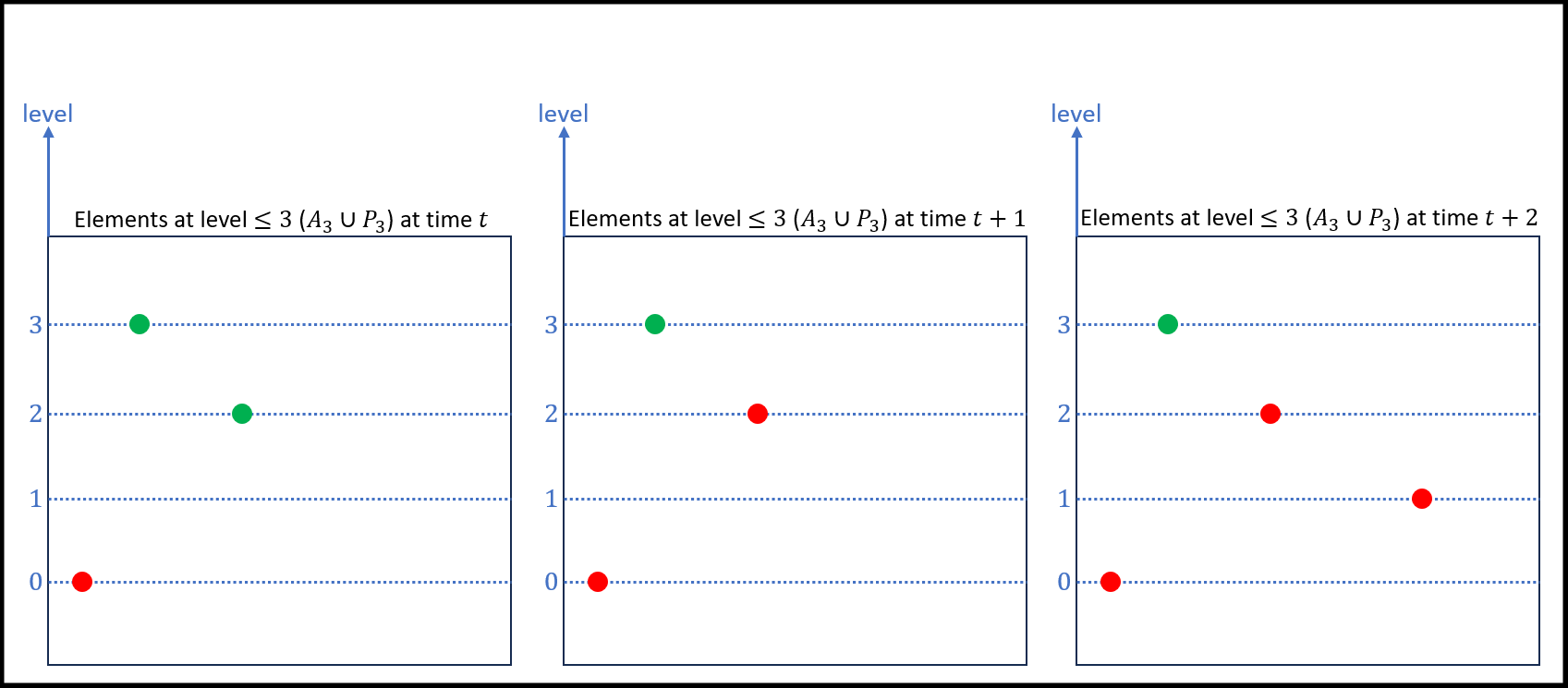}}
		\caption{\label{fig3} $3$-active (green) and $3$-passive (red) elements at time $t$ (left), $t+1$ (middle) and $t+2$ (right). Between $t$ and $t+1$ the element at level $2$ is deleted, thus becomes $3$-passive (because its passive level becomes $2$). Between $t+1$ and $t+2$, an element is inserted to level 1. It is $3$-passive because its passive level is $1$.}
        \label{fig2}
	\end{figure}

	\item \textbf{Termination of $\reset(\cdot)$ Instances.} Upon any element update (deletion or insertion), go over all levels $0\leq k\leq O(\log_\beta(Cn))$ and check if any instance $\reset(k)$ has just terminated right after the update. If so, take the largest such index $k$, and {\em switch its memory} to the foreground; we will describe how a memory switch is done later on in \Cref{termsec}. After that, abort all instances of $\reset(i)$, for $0\leq i<k$.
	
	

\item \textbf{Initiating $\reset(\cdot)$ Instances.} Upon any element update (deletion or insertion), go over all levels $0\leq k\leq L$ and check if there is currently an instance $\reset(k)$. Denote by $k_1,k_2,k_3, \ldots$ the levels that do not have such an instance, where $k_1 < k_2 < k_3 < \ldots$. Next, we want to partition all levels $k_i$ into \emph{short levels} and \emph{non-short levels}. All of the short levels will be lower than the non-short levels, meaning exists $i$ such that $k_{i'}$ is a short level for any $i'<i$ and a non-short level for any $i' \geq i$. In a nutshell, we will be able to execute a short level reset in a single update step, since the number of elements participating in the reset is small enough. Recall that upon termination of a reset we abort all instances of lower level resets, thus there is no reason to run all short level resets, only the highest one. Regarding the non-short levels, we initiate a reset to each and every one of them. To find the highest short level given $k_1,k_2,k_3, \ldots$ we do as follows. First, count all the first $\frac{L}{f}$ elements, from level $0$ upwards. Say that the $\frac{L}{f}$-th element is at level $j$. Thus, we know that $|\bigcup_{i=0}^{j'} E_i| < \frac{L}{f}$ for any $0 \leq j'<j$. Define $i$ to be the highest such that $k_i < j$. If no such $i$ exists then there are no short levels, otherwise $k_i$ is the highest short level, and we initiate the resets for levels $k_i,k_{i+1}, \ldots$, where again $k_i$ is a short level and the rest are non-short levels.

\end{itemize}

\subsubsection{Background}

For each level   $k\in [0, L]$, any reset instance $\reset(k)$ that operates (in the background) maintains a partial copy of the foreground 
in the background. Specifically, 
we maintain and define the following:

	\begin{enumerate}[(1),leftmargin=*]
		\item We maintain subsets of elements $\univ^{(k)}, \univ^{(k)+}\subseteq \univ^+$
		($\univ^{(k)}$ are the alive elements in $\univ^{(k)+}$), and for each element $e\in \univ^{(k)+}$, maintain the two level indices $\lev^{(k)}(e)$ and $\plev^{(k)}(e)$. In addition, we maintain a subset of sets $\sets^{(k)} \subseteq \sets$, and a level value $\lev^{(k)}(s)$ for each $s\in \sets^{(k)}$, as well as a partial set cover solution $\sets^{(k)}_\alg$ that covers all elements in $\univ^{(k)}$
		
		\item For each level $i\in [-1, k+1]$, let $S_i^{(k)} = \{s\mid \lev^{(k)}(s) = i \}$ and for each level $i\in [0, k+1]$, let $E_i^{(k)} = \{e\mid \lev^{(k)}(e) = i\}$.
		
		\item For each element $e\in \univ^{(k)}$, we maintain the assignment $\asn^{(k)}(e)\in\sets^{(k)}_\alg$, and for each set $s\in \sets$ maintain the set $\cov^{(k)}(s)$, where $\cov^{(k)}(s) = \{e \mid \asn^{(k)}(e) = s\}$.
        \end{enumerate}

\begin{definition}\label{plan}
	The procedure $\reset(k)$ could make two different types of steps, \EMPH{immediate} and \EMPH{planned}: An immediate step of the algorithm is executed right away,
whereas a  planned step of the algorithm is stored implicitly in the background, and only executed when scheduled by the main algorithm in the foreground in reaction to element updates. 
\end{definition}

\noindent As mentioned, for each level $k$, if there is currently no instance $\reset(k)$, then we start an instance $\reset(k)$ in the background if $k$ is either a non-short level or it is the highest short level. Then, after each adversarial update step, go over all non-short levels $k\in [0, L]$, and execute $O\brac{f/\epsilon}$ \emph{planned} steps (see \Cref{plan}) of each instance of $\reset(k)$ (if operating), and execute the full reset of the short level reset. Roughly speaking, during the execution of an instance of $\reset(k)$, if an element is inserted or deleted on some level in the range $[0, k]$, then the background procedure $\reset(k)$ should also handle it. When an instance of $\reset(k)$ terminates, it will update all information on levels $\leq k$, and partly on level $k+1$,
and then abort all other instances of $\reset(i), i<k$. 
The main technical part of our algorithm is the procedure $\reset(k)$, which runs in the background. 
Next, we describe the reset procedure, which consists of three phases: (I) initialization, (II) greedy set cover algorithm, and (III) termination.


\paragraph{Phase I: Initialization.} When an instance of $\reset(k)$ has been initiated by the foreground, it sets the following:

\begin{itemize}
    \item $\sets^{(k)}_\alg = S_{i}^{(k)}  \leftarrow \emptyset, \forall i\in [0, k+1]$. 
    \item $E_{i}^{(k)} \leftarrow \emptyset, \forall i\in [0, k+1]$ 
    \item $\univ^{(k)}  = \univ^{(k)+} \leftarrow (\bigcup_{i=0}^k E_i) \cap \univ$. Meaning, the elements participating in the reset are the alive elements up to level $k$ in the foreground.
    \item $\sets^{(k)}\leftarrow$ all sets that contain an element in $\univ^{(k)}$. Note that we cannot create $\sets^{(k)}$ directly from the sets $S_k, S_{k-1}, \cdots, S_{-1}$, as there might be several sets at level $-1$ not containing any element in $\univ^{(k)}$, and we do not want these sets to participate in a reset, since it could blow up the update time. 
    \item $\lev^{(k)}(e)\leftarrow -1, \forall e\in \univ^{(k)}$
    \item $\plev^{(k)}(e)\leftarrow \max\{\plev(e), k+1 \}, \forall e\in \univ^{(k)}$. Intuitively, following a reset to $k$ we want all elements participating in this reset ($\univ^{(k)}$) to be active up to at least $k+1$ (without decreasing).
    \item $\lev^{(k)}(s) \leftarrow -1, \forall s\in\sets^{(k)}$
\end{itemize}



\noindent While the level $\lev^{(k)}(e)$ of elements $e$, initialized as $-1$, will be assigned 
a value from $0$ to $k+1$
throughout the execution of $\reset(k)$, the passive level $\plev^{(k)}(e)$ is assigned a value during initialization and does not change throughout the execution of $\reset(k)$. We note that $\plev^{(k)}(e)$ is no smaller than the foreground passive level
$\plev(e)$ of any element $e$, which will guarantee that the passive level of an element is monotone non-decreasing. 
Moreover, $\plev^{(k)}(e)$ is at least $k+1$, which will guarantee that none of the elements that participate in $\reset(k)$ from the initialization may belong to $P_i$, for any level $i \le k$. In addition, there is no effect to any level $j > k+1$, meaning if an element $e$ was $j$-passive before the reset to $k$, it will still be after, and if it was $j$-active, it will still be after.
For each set $s\in \sets^{(k)}$, we store the set $s\cap \univ^{(k)}$ (in a linked list). The stated steps incur a high running time, and as such cannot be executed in the foreground as immediate steps before the next update step occurs (aiming for a low worst-case update time), hence they will be scheduled as planned steps in the background. 
For any new element $e$ that is inserted by the adversary during the initialization, we assign $\lev^{(k)}(e) = -1$ and $\plev^{(k)}(e) \leftarrow k+1$; for any old element $e\in \univ^{(k)+}$ that is deleted by the adversary during the initialization, and as such becomes dead, we remove it from $\univ^{(k)}$. Specific implementation of this phase is described in \Cref{initsec}.


\paragraph{Phase II: Greedy Set Cover Algorithm.} The algorithm consists of $k+2$ rounds, iterating from level $i = k+1$ down to $i = 0$;
in what follows, by writing ``the $i$th round'' we refer to the round that corresponds to level $i$. 
During the process, the algorithm maintains a collection $U\subseteq \univ^{(k)}$, which is the collection of all alive elements that have not been covered yet by the gradually growing $\sets^{(k)}_\alg$, and for each set $s\in \sets^{(k)}\setminus \sets^{(k)}_\alg$, it maintains all elements in $s\cap U$ (in a linked list). 
At the beginning of the $i$th round, we make the assumption below, which will be proven in \Cref{assum}.
\begin{assumption}\label{i-round}
	The following two conditions hold {\em at the beginning} of the $i$th round. Importantly, these conditions do not necessarily hold throughout the $i$th round.
	\begin{itemize}[leftmargin=*]
		\item All elements in $U$ are alive; this holds for any round $i = k+1,\ldots, 0$.
		\item For any round $i = k,\ldots,0$, $|s\cap U| / \cost(s) < \beta^{i+1}$; for $i = k+1$, there is no upper bound on $|s\cap U| / \cost(s)$.
	\end{itemize}
\end{assumption}

\noindent During the $i$th round, the following steps will be scheduled as planned.

\begin{framed}
	\noindent \textbf{Planned Steps in the $i$th Round.}
	During the $i$th round, the algorithm iteratively chooses a set $s\in \sets^{(k)}\setminus \sets^{(k)}_\alg$ that maximizes $|s\cap U| / \cost(s)$ such that $|s\cap U| / \cost(s) \geq \beta^i$. This will be implemented by a \emph{truncated max-heap} (see \Cref{GSCAP} for details). If no such $s$ exists and $i>0$, we proceed to the next ($i-1$) round.  
	Add $s$ to $\sets^{(k)}_\alg$, assign $\lev^{(k)}(s) \leftarrow i$, and then go over all alive elements $e\in s\cap U$ and assign $\lev^{(k)}(e)\leftarrow i, \asn^{(k)}(e)\leftarrow s$; note that $\plev^{(k)}(e)$ was already assigned for elements that existed at the beginning of this phase, and, as described below, it is also assigned for newly inserted elements.  
After that, we remove $e$ from $U$ and enumerate all sets $s^\prime \ni e$ to maintain
$s^\prime\cap U$. 

\end{framed}

\noindent See \Cref{reset} for the pseudo-code of the planned steps in the greedy set cover algorithm. 

\begin{algorithm}
	\caption{Phase II of $\reset(k)$ - Greedy Set Cover Algorithm} \label{reset}
	\For{$i$ from $k+1$ to 0}{
		NoSets $\leftarrow$ FALSE\;
		\While{(!NoSets)}{
			choose a set $s\in \sets^{(k)}\setminus \sets^{(k)}_\alg$ that maximizes $|s \cap U|/\cost(s)$\;
			\If{($|s \cap U|/\cost(s) < \beta^i$) \textbf{\emph{or}} ($\sets^{(k)}\setminus \sets^{(k)}_\alg = \emptyset$)}{
				NoSets $\leftarrow$ TRUE\;
			}
			\Else{
				$\sets^{(k)}_\alg \leftarrow \sets^{(k)}_\alg \cup \{s\}$\;
				$\lev^{(k)}(s) \leftarrow i$\;
				\For{$e \in s \cap U$}{
					$\lev^{(k)}(e) \leftarrow i$\;
					$\asn^{(k)}(e) \leftarrow s$\;
					$U \leftarrow U \setminus \{e\}$\;
					\For{$s' \ni e$}{
						$s' \cap U \leftarrow s' \cap U \setminus \{e\}$\;
                            Update heap\;
					}
     }
				}
			}
		}
\end{algorithm}

Finally, we describe how to handle adversarial element updates that are fed to the background during the $i$th round.
\begin{itemize}[leftmargin=*]
	\item \textbf{Deletions in the Background.} 
	 Suppose that an element $e$ is deleted by the adversary during the $i$th round. We mark $e$ as dead (thus it joins $\univ^{(k)+} \setminus \univ^{(k)}$), and assign $\plev^{(k)}(e)\leftarrow \lev^{(k)}(e)$. If $e$ is in $U$ at the moment, we remove $e$ from $U$, enumerate all sets $s\ni e$, update the linked list $s\cap U$, and update the truncated max-heap.
	
	\item \textbf{Insertions in the Background.} 
	Suppose that an element $e$ is inserted by the adversary during the $i$th round.
	Enumerate all sets $s \ni e$, and proceed as follows:
	\begin{itemize}[leftmargin=*]
	\item If $e$ belongs to a set in $\sets^{(k)}_\alg$ (and thus covered by $\sets^{(k)}_\alg$), then find such a set $s\in \sets^{(k)}_\alg$ that maximizes $\lev^{(k)}(s)$, and then assign $\lev^{(k)}(e) = \plev^{(k)}(e)\leftarrow \lev^{(k)}(s)$. Note that such elements will not have $\plev^{(k)} \ge k+1$ as elements that exist at the beginning of the execution of $\reset(k)$.
	
	
	\item Otherwise, $e$ is not covered by $\sets^{(k)}_\alg$, in which case we assign $\plev^{(k)}(e)\leftarrow i, \lev^{(k)}(e)\leftarrow -1
$. Such elements might be covered throughout this or subsequent rounds of $\reset(k)$, which will change their $\lev^{(k)}(e)$ to be the round in which they are covered, i.e.,  at most $i$, but their $\plev^{(k)}(e)$ will remain $i$; note also the difference from elements that existed at the beginning of the execution of $\reset(k)$.

        \item Regardless of whether $e$ belongs to a set in $\sets^{(k)}_\alg$ or not, we need to add all sets not in $\sets^{(k)}$ containing $e$ to $\sets^{(k)}\setminus \sets^{(k)}_\alg$. There are at most $f$ such sets, and for each such set $s'$ we know that $s' \cap U = \{e\}$, since otherwise $s'$ would have already been in $\sets^{(k)}$. Therefore, we can update the heap in $O(f)$ time following the insertion of $e$.

	\end{itemize}
\end{itemize}

\paragraph{Phase III: Termination.} When all $(k+2)$ rounds of the greedy set cover algorithm terminate, we set $S_i$ and $E_i$ (foreground) to be $S_i^{(k)}$ and $E^{(k)}_i$, respectively, for each $i\in [0, k]$. Then, append the linked list of $S_{k+1}^{(k)}$ and $E_{k+1}^{(k)}$ to $S_{k+1}$ and $E_{k+1}$. Finally, we abort all lower-level reset instances. Specific implementation is described in \Cref{termsec}.


\subsection{Implementation Details and Update Time Analysis} \label{updtime}
In this section we will describe the maintained data structures and analyze the worst-case update time for each part of the algorithm separately.

\begin{framed}
	\noindent \textbf{Data Structures that Link between the Foreground and Background.}
	\begin{enumerate}[(a)]

        \item For each $k$, we have pointers to the sets $S^{(k)}_i$ and $E^{(k)}_i$, stored in two arrays of size $L+2$ and $L+1$, respectively (an entry for every $i \in [-1,L]$ and $i \in [0,L]$, respectively). In addition, the head of the list $S^{(k)}_i$ and $E^{(k)}_i$ keeps a Boolean value which indicates whether it is in the foreground or not.
        
		\item We store an array in the foreground $\lev[\cdot]$ indexed by $s\in \sets$ and $e\in \univ$. So the size of this array should be $O(|\sets| + |\univ|)$. Here we have assumed that sets and elements have unique identifiers from a small integer universe (if the sets and elements belong to a large integer universe and assuming we would like to optimize the space usage, we can use hash tables instead of arrays). For each $s\in \sets$ and $0\leq k\leq L$, $\lev[s][k]$ stores a pointer to the memory location containing the value of $\lev^{(k)}(s)$, as well as a pointer to the list head of $S^{(k)}_i$ if $\lev^{(k)}(s) = i$ (and $i \neq -1$). Similarly, $\lev[e][k]$ stores a pointer to the memory location of $\lev^{(k)}(e)$, as well as a pointer to the memory location of the pointer to $E^{(k)}_i$ if $\lev^{(k)}(e) = i$.
		
		
		


    \end{enumerate}
\end{framed}

\subsubsection{Foreground Operations}
The collections $S_i$ and $E_i$ for each $i$ will be maintained in doubly linked lists in the foreground. To access the level value $\lev(s)$ in the foreground, we can enumerate all indices $k\in [0, L]$ and check the entry $\lev[s][k]$ that points to $\lev^{(k)}(s) = i$ and list head $S^{(l)}_{i}, l\geq i$. If either $\lev[s][k]$ is a null pointer, or $\lev[s][k]$ pointers to a value $\lev^{(k)}(s) = i$ but $S_i^{(l)}$ is not in the foreground, then we know $\lev^{(k)}(s)\neq\lev(s)$. Therefore, accessing the foreground level value $\lev(s)$ takes time $O(L)$. Similarly, accessing the foreground level value $\lev(e)$ takes time $O(L)$ as well. We conclude that upon insertion of element $e$, enumerating all sets $s \ni e$ to decide which set needs to cover $e$ takes time $O(f \cdot L) = O(f \cdot \log_\beta(Cn))$. As can be seen in \Cref{del} and \Cref{ins}, the runtime of other parts is constant, except for the part of feeding to a reset, which will be analyzed in \Cref{GSCAP}. Regarding initiating resets, in $O(L) = O(\log_{\beta}(Cn))$ time we can go over all levels to check which ones need to be initiated, and also check which is the highest short level, by using $E_i$ which is maintained in the foreground. We remind that a short level is a level $k$ such that $\reset(k)$ is not working currently in the background, and $f \cdot \sum_{i=0}^k |E_i| < L$. Moreover, there is only one short level reset operating per update step (the highest one).

\subsubsection{Reset - Initialization Phase} \label{initsec}
When an instance $\reset(k)$ has been scheduled, it initializes its own data structures by setting $\sets^{(k)}_\alg = S_{i}^{(k)} = E_{i}^{(k)} \leftarrow \emptyset, \forall i\in [0, k+1]$, and $\univ^{(k)} = \univ^{(k)+} \leftarrow \bigcup_{i=0}^k E_i$, each maintained in a doubly-linked list. After that, enumerate all elements of $\univ^{(k)}$ and let $\sets^{(k)}$ be all the sets containing at least one element in $\univ^{(k)}$. So $\sets^{(k)} \subseteq \{s\mid -1\leq\lev(s)\leq k\}$. 
To obtain all the pointers to $E_i$, $0\leq i\leq k$, we follow the linked list consisting of these pointers, from $E_k$ down to $E_0$. One remark is that, during the numeration of the list from $E_k$ down to $E_0$, which could take several update steps, some pointers might already be switched by other resets $\reset(i), i<k$; we will show that this is still fine in \Cref{consistency}. 
We also assign $\lev^{(k)}(s) =\lev^{(k)}(e)\leftarrow -1, \plev^{(k)}(e)\leftarrow \max\{\plev(e), k+1 \}, \forall e\in \univ^{(k)}, s\in\sets^{(k)}$. For each set $s\in \sets^{(k)}$, store the set $s\cap \univ^{(k)}$ as a linked list. All the above steps will be planned in the background for the non-short levels. 
If a new element $e$ is inserted by the adversary during initialization, assign $\lev^{(k)}(e)\leftarrow -1, \plev^{(k)}(e)\leftarrow k+1$, and add all sets $s \notin \sets^{(k)}$ containing $e$ to $\sets^{(k)}$, and enumerate $s \cap \univ^{(k)}$; if an old element $e\in \univ^{(k)}$ is deleted by the adversary during the initialization, remove it from $\univ^{(k)}$.

Copying the memory locations of the pointers of $\{E_k, E_{k-1}, \cdots, E_0\}$ and $\{S_k, S_{k-1}, \cdots, S_{-1}\}$ takes time at most $O(k)$. Obtaining the collection $\mathcal{S}^{(k)}$ and calculating $s\cap \univ^{(k)}$ for each set $s\in \sets^{(k)}$ takes $O(f|\univ^{(k)}|)$ time, so the total running time of this phase is $O(f|\univ^{(k)}| + k) = O(f|\univ^{(k)}| + L)$. Since all the non-short levels satisfy $f \cdot |\univ^{(k)}| \geq L$, we get that the running time of this phase for non-short levels is $O(f|\univ^{(k)}|)$, and for the short level reset it is $O(L) = O(\log_{\beta}(Cn))$.

\subsubsection{Reset - Greedy Set Cover Algorithm Phase} \label{GSCAP}


According to the algorithm, for each element in $\univ^{(k)}$, the greedy set cover algorithm procedure scans the list of all the sets containing this element at most once, and so the planned number of sets the algorithm goes through is $O(f|\univ^{(k)}|)$. For each such set $s'$, we must update $s' \cap U$ (line $15$ in \Cref{reset}), which indeed takes $O(1)$ time, but we must store the values of $|s' \cap U|$ for each $s' \in \sets^{(k)}$ in some data structure that will allow us to update values (line $16$ in \Cref{reset}) and extract the maximum value (line $4$ in \Cref{reset}) in $O(1)$ time, if we want the reset to run in $O(f|\univ^{(k)}|)$ time. We shall implement this with a \emph{truncated max-heap}:



\begin{definition}\label{heap}
	Let $X$ be a set of objects with key values. A \emph{truncated} max-heap data structure on $X$ supports the following operations.
	\begin{itemize}[leftmargin=*]
		\item Removal of any object in $X$.
		\item Change the key value of any object in $X$.
		\item Given a threshold value $k$, return any object with maximum key value or whose value is $\geq k$.
	\end{itemize}
\end{definition}

\noindent During the $i$th round we need to repeatedly choose sets from the max-heap with top priority with threshold $\beta^i$. To implement heap operations of the truncated max-heap in constant time, store all sets $s\in \sets^{(k)}\setminus \sets^{(k)}_\alg$ in an array of length $L$, each entry of the array is a linked list of elements in $\sets^{(k)}\setminus \sets^{(k)}_\alg$. We will ensure the following property of this array.


\begin{invariant}[heap invariant]\label{heap-inv}
	During the $i$-round of the greedy set cover algorithm phase of $\reset(k)$, for any index $0\leq j\leq i$, if $j<i$, then the $j$th entry of the array is a linked list of sets in $\sets^{(k)}\setminus \sets^{(k)}_\alg$, such that for each of these sets $s$, we have $\floor{\log_\beta (\frac{|s\cap U|}{\cost(s)})} = j$; otherwise, if $j = i$, then $\floor{\log_\beta (\frac{|s\cap U|}{\cost(s)})} \geq j$; if $j>i$, then the entry of the array is an empty list.
\end{invariant}

Initialization of this heap data structure takes $O(f|\univ^{(k)}| + k)$ time. Upon insertions/deletions occurring during this initialization, we can update the heap data structure under construction in a straightforward manner, since we are not yet updating the levels nor building the set cover solution. During the $i$th round of the greedy set cover algorithm phase, removal of any object from the heap can be done in constant time by linked list operations. If the key value $j = \floor{\log_\beta (\frac{|s\cap U|}{\cost(s)})}$ has changed either by the adversary or by the background algorithm itself, we can attach it to the ($\min\{j, i\}$)th linked list in the array of the heap. During the $i$th round, extraction operation on the heap always has threshold $\beta^i$, so it suffices to check if the $i$th entry of the array is empty. All of these operations take constant time. When the $i$th round has finished, it must be the case that the $i$th entry in the array of the heap has now become empty. So, when we enter the $(i-1)$th round of the greedy set cover algorithm phase, \Cref{heap-inv} still holds.

Since we must pass through all levels from $k+1$ to $0$, we conclude that the total running time of this phase of the reset is also $O(f|\univ^{(k)}| + k) = O(f|\univ^{(k)}| + L)$. Again, for all non-short levels this means $O(f|\univ^{(k)}|)$, and for the short level this means $O(L) = O(\log_{\beta}(Cn))$. We conclude that the first two phases together of the reset (initialization and greedy set cover algorithm), run in $O(\log_{\beta}(Cn))$ time for the short level reset. Thus, we will compute these two phases all within a single update step, without affecting the worst-case update time. As for the other resets, as mentioned we will compute $O(\frac{f}{\epsilon})$ steps per update steps, thus the total number of computations per update step will be $O(\frac{f \cdot L}{\epsilon}) = O(\frac{f \cdot \log_{\beta}(Cn)}{\epsilon})$, which is the bottleneck. 

\subsubsection{Reset - Termination Phase} \label{termsec}

\noindent When $\reset(k)$ has finished and called upon to switch its memory to the foreground, for every index $0\leq i\leq k$, we replace every pointer to $S_i$ with the pointer to $S^{(k)}_i$, and connect all pointers to nonempty lists $S^{(k)}_i, 0\leq i\leq k$ with a linked list. 

Merging the list $S_{k+1}^{(k)}$ with the list $S_{k+1}$ on the foreground is in fact done in the greedy set cover algorithm phase, following round $k+1$, as it does not have worst-case runtime guarantee. Nevertheless it will be explained here as part of the termination phase. Upon merging $S_{k+1}^{(k)}$ with $S_{k+1}$, suppose $S_{k+1}$ is equal to some $S_{k+1}^{(l)}$ for some $l\geq k+1$; in other words, $S_{k+1}$ was computed in some instance $\reset(l)$, where $l\geq k+1$. Then, for each set $s\in S^{(k)}_{k+1}$, redirect the pointer $\lev[s][k]$ from the list head $S^{(k)}_{k+1}$ to the list head $S_{k+1}^{(l)}$. After that, concatenate the two lists $S_{k+1}^{(k)}, S_{k+1}$. We can merge the two lists $E_{k+1}^{(k)}$ and $E_{k+1}$ in a similar manner. The running time of this is clearly $O(f \cdot |\univ^{(k)}|)$, thus not the bottleneck of the second phase.

As for the runtime of the termination phase, the number of memory pointers to be switched is at most $O(L) = O(\log_\beta(Cn))$, so the worst-case runtime is $O(L) = O(\log_\beta(Cn))$.
For any set $s\in \sets^{(k)}$, to access any level value $\lev^{(k)}(s)$ in the background, we can follow the pointer $\lev[s][k]$ and check $\lev^{(k)}(s)$ in constant time; similarly we can check the value of $\lev^{(k)}(e)$ for any $e\in \univ^{(k)+}$ in constant time.

Since our data structures maintain multiple versions of the level values, and our algorithm keeps switching memory locations between the foreground and the background, we need to argue the consistency of the foreground data structures; this is done in \Cref{consistency}.

By the algorithm description, upon an element update, we scan all the levels, and if any instance of $\reset(\cdot)$ has just terminated, we switch the one $\reset(k)$ on the highest level $k$ to the foreground as we have discussed in the previous paragraph, while aborting all other instances of $\reset(i), i<k$. As we have seen, switching a single instance of $\reset(\cdot)$ to the foreground takes $O(L) = O(\log_\beta(Cn))$ time, and so the worst-case time of this part is $O(L) = O(\log_\beta(Cn))$. Therefore, the termination can be done in a single update step.

\bigskip

\noindent To conclude \Cref{updtime}, any insertion/deletion in the foreground can be dealt with in $O(f \cdot L)$ time, thus it can be done within a single update step. Dealing with the short level reset (finding the highest one, initializing the reset, and executing the greedy set cover algorithm) takes $O(L)$ time, thus it will all be done within a single update step as well. Likewise, termination can be done in $O(L)$ time (updating data structures of the highest finished reset and aborting the rest). Every other $\reset(k)$ will take $O(f \cdot |\univ^{(k)}|)$ time (initialization and greedy set cover algorithm). Since this cannot be executed within a single update step, for each $k$ we will execute $O(\frac{f}{\epsilon})$ computations per update step. In \Cref{proofcorrect} we will prove that by working in such a pace, we can ensure that \Cref{inv} holds, which in turn will be enough to prove that the approximation factor holds, by \Cref{toy-approx}.

\subsection{Proof of Correctness} \label{proofcorrect}

Let us first show that our memory switching scheme preserves consistency of the data structures on the foreground. This is nontrivial because different foreground sets $S_k$ may come from different background copies $S_k^{(l)}$, and they might not be compatible with each other as different instances of $\reset(\cdot)$ may have different views of the data structures.
\begin{claim}\label{consistency}
	The foreground data structures $\{S_k\}_{-1\leq k\leq L}, \{E_k\}_{0\leq k\leq L}$ are consistent; that is, they satisfy their specifications in \Cref{warmup}.
\end{claim}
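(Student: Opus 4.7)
The plan is to prove the claim by induction on time, establishing a stronger invariant: at every moment and for every level $i$, the foreground list $S_i$ (respectively $E_i$) contains exactly those sets $s$ (resp.\ elements $e$) whose foreground level, as read via the access protocol of \Cref{updtime}, equals $i$; moreover, the pointer $\lev[s][k]$ (resp.\ $\lev[e][k]$) points to a valid cell iff $\reset(k)$ has assigned a level to $s$ (resp.\ $e$) whose record is still reachable. The induction proceeds by a case analysis of the events that can occur between consecutive update steps: (i) an adversarial foreground insertion or deletion, (ii) a planned background step of an active $\reset(\cdot)$, (iii) the initialization of a new $\reset(k)$, and (iv) the termination of some $\reset(k)$, which simultaneously aborts all lower-level instances and switches memory to the foreground.

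Cases (i) and (ii) are routine. A foreground update directly modifies $S_i, E_i$ and the pointer array $\lev[\cdot][\cdot]$ in an obviously consistent manner, and forwards the update to every active background reset that could be affected; a planned background step only manipulates the local data of a single reset instance and leaves the foreground untouched. Case (iv) is where the consistency of the foreground is created anew: when $\reset(k)$ terminates, its local lists $S^{(k)}_i, E^{(k)}_i$ for $i\leq k$ are switched in as the new foreground $S_i, E_i$, the lists $S^{(k)}_{k+1}, E^{(k)}_{k+1}$ are spliced onto the existing foreground $S_{k+1}, E_{k+1}$, and all lower-level reset instances are aborted. I will argue that, at the moment of termination, the state held by $\reset(k)$ on levels $\leq k$ coincides with the intended foreground state at that exact time: by the inductive hypothesis its snapshot at initialization was faithful; by the update-forwarding rule every adversarial update during its lifetime that affects levels $\leq k$ has been applied to it; and aborting the lower-level instances is harmless, because their entire output would lie in the range of levels that $\reset(k)$ has just overwritten.

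The main technical obstacle is the interplay between case (iii) and case (iv) at a lower level: the initialization of $\reset(k)$ is itself spread over many update steps, during which some lower-level $\reset(j)$, $j<k$, may terminate and thereby swap out the foreground pointers to $E_i, S_i$ for $i\leq j$. I will handle this by showing that the top-down enumeration of $E_k, E_{k-1}, \ldots, E_0$ used in \Cref{initsec}, combined with the fact that $\reset(k)$ receives every adversarial update on levels $\leq k$ during its lifetime, makes $\reset(k)$'s view \emph{linearizable}. Concretely, the pointer that $\reset(k)$ reads for each $E_i$ is valid at the moment of reading --- whether it still points to the original foreground list or to a freshly installed one coming from some $\reset(j)$ with $i\leq j<k$, the inductive hypothesis guarantees that both describe the same set of elements having foreground level $i$ at that instant. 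Once $\reset(k)$ has copied the contents, any subsequent adversarial update on those elements is funneled into its local memory, keeping its copy synchronized until termination.

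Finally, I will verify the low-level pointer bookkeeping done in case (iv): for every $s\in S^{(k)}_i$ and $e\in E^{(k)}_i$ with $i\leq k$, the access method of \Cref{updtime} now finds $\lev[s][k]$ (resp.\ $\lev[e][k]$) as the unique pointer whose target list head is in the foreground, so it returns the correct level $i$; for the level-$(k{+}1)$ concatenation, the redirection of $\lev[s][k]$ from the head of $S^{(k)}_{k+1}$ to the head of the absorbing list $S^{(l)}_{k+1}$ (and similarly for elements) keeps the access method correct. The conjunction of these cases preserves the stronger invariant, which in turn implies the statement of the claim: the foreground $\{S_k\}, \{E_k\}$ are at all times consistent with the level assignments used by \Cref{warmup}, and hence with \Cref{inv}.
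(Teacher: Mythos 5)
Your overall inductive structure (case analysis over foreground updates, planned background steps, initialization, and termination/abort) matches the paper's proof, and your treatment of termination and of the pointer bookkeeping is essentially the paper's argument. However, there is a genuine gap at exactly the crux of the claim: the way you resolve the interaction between a long-running initialization of $\reset(k)$ and a lower-level $\reset(j)$, $j<k$, that terminates mid-scan. You assert that whichever list head $\reset(k)$ reads for a level $i$ --- the original one or a freshly installed one --- ``both describe the same set of elements having foreground level $i$ at that instant.'' This is false in general. When $\reset(j)$ terminates it re-levels all elements on levels $\leq j$ (placing them on levels $\leq j+1$ according to its greedy run), so the new foreground lists $E_0,\ldots,E_{j+1}$ can differ arbitrarily from the old ones. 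In particular, if the top-down scan of \Cref{initsec} has already passed position $i\leq j$ when the switch occurs, the pointers $\reset(k)$ has copied (and the stale chain it continues to follow) no longer reflect the current foreground at all, so your ``linearizability'' claim --- that the snapshot is valid at the moment of reading --- cannot be established, and the stronger invariant you rely on does not close the induction in this case.

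The paper's proof does not attempt to show that $\reset(k)$'s copy is current. Instead it distinguishes two cases by comparing the terminating level $j$ with the current scan position $i$: if $j<i$, the remaining scan picks up the new lists and $\reset(k)$ ends with a prefix of the \emph{new} foreground; if $j\geq i$, $\reset(k)$ ends with a \emph{stale} version of levels $\leq i$ that has been switched off the foreground but is still internally consistent. Self-consistency of the snapshot is all that is needed, because every adversarial update on levels $\leq k$ is forwarded to $\reset(k)$ during its lifetime, no higher-level instance can have terminated (otherwise $\reset(k)$ would have been aborted), and upon termination $\reset(k)$ overwrites all foreground data on levels $\leq k$ (touching level $k+1$ only by appending), so whatever intervening lower-level resets installed is discarded wholesale. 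To repair your proof you would have to drop the ``same set of elements at that instant'' assertion and replace it with this stale-but-consistent argument, i.e., weaken your invariant so that a background instance is only required to hold \emph{some} consistent snapshot plus all forwarded updates, rather than a faithful copy of the current foreground.
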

\begin{proof}
    This statement is proved by an induction on time. Consider any instance of $\reset(k)$. First we show that during its initialization phase, $\reset(k)$ obtains a set of pointers $S_i, E_i, 0\leq i\leq k$ which comes from a valid realization of the basic data structure defined in \Cref{warmup}. This is nontrivial because the when initialization phase scans from $i = k, k-1, \ldots, 0$, some $\reset(j), j<k$ could terminate and change part of the data structures on the foreground. We will show that this is still fine.
	
	Recall that, during the initialization phase of $\reset(k)$, we have planned a procedure that scans and copy all the pointers to lists $S_i, E_i$ where $i$ goes from $k$ down to $0$. If no $\reset(j), j<k$ has switched its memory to the foreground, then in the end, $\reset(k)$ can obtain a prefix of the foreground data structures. Otherwise, suppose an instance of $\reset(j), j<k$ has switched its memory to the foreground while $\reset(k)$ hasn't finished copying all the pointers.
	\begin{itemize}[leftmargin=*]
		\item If $j<i$ at the moment of memory switching, then since $\reset(j)$ only modifies levels in $[0, j+1]$, $\reset(k)$ will still copy a prefix of the new version of the foreground data structures.
		\item If $j\geq i$, then $\reset(k)$ will be copying the old version of $S_0, S_1, \cdots, S_i$, which will be switched off from the foreground, but still consistent on its own. In other words, $\reset(k)$ will in the end obtain an stale version of the foreground data structure which is still consistent with itself.
	\end{itemize}
	
	Next, let us consider the moment when $\reset(k)$ finishes and switch its local memory to the foreground. As $\reset(k)$ was not aborted, no other instances $\reset(j), j>k$ has finished; also, by the algorithm description, any $\reset(j), j<k$ cannot change anything in the foreground on levels higher than $k$. Therefore, when $\reset(k)$ switches, the foreground data structure is still consistent.
\end{proof}

Next, we prove that \Cref{i-round} always holds during Phase II of procedure $\reset(k)$ (the greedy set cover algorithm) in the background.
\begin{claim} \label{assum}
	\Cref{i-round} always holds.
\end{claim}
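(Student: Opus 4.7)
The plan is to proceed by induction on the round index $i$, descending from $i=k+1$ down to $i=0$, and to show at each stage that both conditions of \Cref{i-round} hold at the beginning of round $i$. The base case $i=k+1$ will be immediate: after the initialization phase, $U=\univ^{(k)}$ consists solely of alive elements (any adversarial deletion that arrives during initialization is mirrored by removing the deleted element from $\univ^{(k)}$), and condition~(2) is vacuous for the topmost round. The inductive step will assume both conditions hold at the beginning of round $i+1$ and derive them for round $i$.

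For condition~(1), I will audit every operation that can alter $U$ during round $i+1$ and across its boundary. The greedy inner loop removes $e$ from $U$ whenever $e$ is assigned to a newly chosen set; an adversarial deletion that touches an element $e\in U$ explicitly removes $e$ from $U$ (so a deleted element never lingers in $U$ as alive); and an adversarial insertion of an uncovered element $e$ adds the (necessarily alive) $e$ to $U$. Since none of these operations introduces a dead element into $U$, condition~(1) is preserved at the beginning of round $i$.

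For condition~(2), the key will be the exit condition of the inner while loop: round $i+1$ terminates precisely when the truncated max-heap carries no set with ratio $\geq \beta^{i+1}$, i.e., when the $(i+1)$-th bucket of the heap array is empty. \Cref{heap-inv} is maintained throughout round $i+1$ under greedy assignments, adversarial insertions (which increment $|s\cap U|$ by at most one per set $s\ni e$ and may spawn a new singleton set $s'$ with $|s'\cap U|=1$, placed into bucket $0$), and adversarial deletions (which can only decrement $|s\cap U|$). Consequently, the emptiness of the $(i+1)$-th bucket is equivalent to the statement that every $s\in \sets^{(k)}\setminus \sets^{(k)}_\alg$ satisfies $|s\cap U|/\cost(s)<\beta^{i+1}$. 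Treating the for-loop transition from round $i+1$ to round $i$ as a single planned step, this inequality carries over to the start of round $i$.

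The hard part will be verifying that \Cref{heap-inv} really is preserved under all three types of modifications to $|s\cap U|$, so that the while-loop exit of round $i+1$ truly witnesses condition~(2); a second delicate point will be checking that no adversarial event slips between the last iteration of round $i+1$ and the first iteration of round $i$, which will require observing that the $O(1)$ bookkeeping needed to decrement $i$ and reset $\textsf{NoSets}$ fits inside the same batch of $O(f/\epsilon)$ planned steps as the iteration that empties the $(i+1)$-th bucket. Once these two technicalities are in place, the induction goes through and the claim follows.
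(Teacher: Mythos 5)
Your proposal is correct and follows essentially the same route as the paper's proof: reverse induction from $i=k+1$, with condition (1) preserved because deleted elements are removed from $U$ immediately (and during initialization removed from $\univ^{(k)}$), and condition (2) inherited from the exit condition of the previous round's while loop. The extra worry about maintaining \Cref{heap-inv} and the round-boundary bookkeeping is not needed for this claim (the paper argues directly from the logical terminating condition $|s\cap U|/\cost(s)<\beta^{i}$, and the heap details belong to the update-time analysis), and your side remark that a newly spawned singleton set lands in bucket $0$ is not quite accurate but immaterial here.
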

\begin{proof}
	Let us prove this statement (that the two conditions of \Cref{i-round} hold) by a reverse induction on $i$, from $i = k+1$ to 0. 
	For the induction basis $i = k+1$, this assumption trivially holds: For the first condition, any element deletion in $\univ^{(k)}$ triggers directly a removal of the element from $U$, rather than marking it as dead; the second condition holds vacuously.
	Next, consider the induction step. For the first condition, when an element in $U$ is deleted, it is removed right away from $U$, so $U$ always contains alive elements only. For the second condition, the terminating condition of the $i$th round implies that when the $i$th round terminates, any set $s\in \sets^{(k)}\setminus \sets^{(k)}_\alg$ satisfies $|s\cap U| / \cost(s) < \beta^i$, implying that the second condition for round $i-1$ holds at the beginning of round $i-1$.
\end{proof}

Finally, we prove that \Cref{inv} always holds in the foreground, which concludes the proof of \Cref{warmup}.

\begin{claim}\label{inv1}
	\Cref{inv}(1) always holds in the foreground.
\end{claim}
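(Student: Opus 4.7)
My plan is to prove Claim \ref{inv1} by induction on time, showing that every operation that modifies the foreground preserves the invariant. The operations to consider are: (i) a Foreground-Insert of an element $e$, (ii) a Foreground-Delete of an element $e$, and (iii) the switch of a terminated $\reset(k')$ to the foreground. Cases (i) and (ii) are easy: the insert sets $\lev(e)=\plev(e)$ so $e$ does not join any $A_k$, and the delete sets $\plev(e)=\lev(e)$ so $e$ leaves every $A_k$; hence neither can increase $|N_k(s)|$ for any $s$ and $k$. The substantive case is (iii), which I handle by splitting on $k$.

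For $k > k'+1$, I will show that the switch does not change $N_k(s)$ at all. For $e\in\univ^{(k')+}$, the reset assigns $\lev^{(k')}(e)\leq k'+1<k$ and $\plev^{(k')}(e) = \max\{\plev_{\text{old}}(e), k'+1\}$, so since $k>k'+1$ we have $\plev^{(k')}(e)>k$ iff $\plev_{\text{old}}(e)>k$; membership in $A_k$ is thus preserved. Elements $e\notin\univ^{(k')+}$ had foreground level $>k'$ at $\reset(k')$'s initialization, and by Claim \ref{consistency} (plus the observation that lower-level reset switches only affect levels $\leq k''+1\leq k'$, and that foreground inserts/deletes never raise $\plev$) their foreground $\lev,\plev$ are unaffected. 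Hence the invariant follows from the induction hypothesis.

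For $k\leq k'$, I will first argue that $N_k(s)$ after the switch contains only elements of $\univ^{(k')+}$, since any element at foreground level $\leq k\leq k'$ must have been in $\univ^{(k')}$ at initialization or inserted during the reset. I will then show $N_k(s)\subseteq s\cap U_k$, where $U_k$ is the value of $U$ at the start of round $k$ in $\reset(k')$: any $e\in N_k^{(k')}(s)$ has $\lev^{(k')}(e)\leq k$, and a short case check on when $e$ was inserted/covered shows $e\in U_k$ (elements inserted in a round $i\leq k$ or covered by $\sets^{(k')}_\alg$ at insertion have $\plev^{(k')}(e)\leq k$ and are therefore not in $A_k^{(k')}$). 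The bound then splits into two subcases: if $s\notin\sets^{(k')}_\alg$ at the start of round $k$, Assumption \ref{i-round} gives $|s\cap U_k|<\beta^{k+1}\cost(s)$ directly; if $s\in\sets^{(k')}_\alg$ at that moment, then $s$ was picked in some round $j>k$, which emptied $s\cap U$, and any subsequent insertion into $s$ is immediately covered by $s$ itself (or by a higher-level set in $\sets^{(k')}_\alg$), so $s\cap U_k=\emptyset$ and thus $|N_k^{(k')}(s)|=0$.

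For $k=k'+1$ the claim is that $|N_{k'+1}(s)|$ does not increase at the switch, so the induction hypothesis does the rest. The point is that newly inserted elements during the reset get $\plev^{(k')}(e)\leq k'+1$ (so cannot contribute to $A_{k'+1}$), while elements in the initial $\univ^{(k')}$ contribute iff $\plev_{\text{old}}(e)>k'+1$, in which case they were already in $N_{k'+1}(s)$ before the reset (they had $\lev_{\text{old}}(e)\leq k'$ and $\plev_{\text{old}}(e)>k'+1$). I expect the main obstacle to be subcase (b2) together with the bookkeeping around elements inserted during the reset, where one must carefully track how the background rules for setting $\lev^{(k')}$ and $\plev^{(k')}$ rule out contributions to $N_k(s)$ for small $k$; the key leverage is that $\plev^{(k')}(e)\leq k'+1$ for every element not in the original $\univ^{(k')}$, which cleanly separates ``old'' contributors (controlled by Assumption \ref{i-round}) from ``new'' contributors (ruled out by the definition of $\plev^{(k')}$).
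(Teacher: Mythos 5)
Your proposal is correct and follows essentially the same route as the paper's proof: foreground insertions/deletions are harmless because they set $\lev=\plev$ (resp.\ $\plev\leftarrow\lev$), levels above the reset cannot gain elements of $N_j(s)$ because surviving elements keep $\plev$ via the $\max\{\plev,k'+1\}$ rule and newly inserted elements have $\plev\le k'+1$ or $\lev=\plev$, and for levels $k\le k'$ one shows $N_k(s)$ at termination is contained in $s\cap U$ at the start of round $k$ and invokes \Cref{i-round}. Your explicit subcase for $s\in\sets^{(k')}_\alg$ (where $s\cap U$ is empty) is treated implicitly in the paper, but the substance of the argument is the same.
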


\begin{proof}

Consider the insertion of element $e$ in the foreground to level $j$. By the description of the algorithm, we would have $\lev(e) = \plev(e) = j$. Since $\lev(e) = \plev(e)$, by definition $e$ cannot be part of $A_i$ for any $i$. Thus, the insertion cannot raise $|N_i(s)|$ for any $i$ and $s$. Likewise, upon deletion of element $e$, we set $\plev(e) \leftarrow \lev(e)$, thus again this deletion cannot raise $|N_i(s)|$ for any $i$ and $s$.

Since insertions and deletions to the foreground cannot cause a violation to \Cref{inv}(1), the only possibility left is if \Cref{inv}(1) is violated in the background in some system working on $\reset(k)$, and it is then transferred to the foreground. Assume by contradiction that exists a set $s$ such that upon termination of $\reset(k)$, we have $\frac{|N_j(s)|}{\cost(s)} \geq \beta^{j+1}$ for some $j$.

If $j \geq k+1$, then for $|N_j(s)|$ to grow between right before the initialization of $\reset(k)$ and right after the termination of $\reset(k)$, there must exist an element $e \in s$ that joined $N_j(s)$ sometime between those two time steps. For this to happen $e$ must change its passive level, since its level cannot fall from $>j$ to $\leq j$ in an instance of $\reset(k)$ where $j \geq k+1$ (because $e$ would not participate in such a reset). This means that when $\reset(k)$ is initialized, either $\plev(e) \leq j$ or $e$ has not been inserted yet, and $\plev(e) > j$ when $\reset(k)$ terminates. Consider the first case where $\plev(e) = i$ when the reset is initialized, where $i \leq j$. Then by the algorithm description $\plev(e)$ is $\max\{i,k+1\}$ when the reset terminates. But clearly $j \geq \max\{i,k+1\}$ and so we reach a contradiction. Now consider the second case where $e$ was inserted sometime during the execution of $\reset(k)$. By the algorithm description $\plev(e)$ is $\leq k+1$ when the reset terminates, and since $j \geq k+1$ we again reach a contradiction.

If $j < k+1$, we claim that all elements in $N_j(s)$ at the termination of $\reset(k)$ were in $U$ in the beginning of round $j$. Assume by contradiction that exists an element $e \in N_j(s)$ when the reset terminates, such that $e \notin U$ in the beginning of round $j$. If $e$ was inserted during round $j$ or after, then at termination $\plev(e) \leq j$ by the algorithm description, a contradiction to the fact that $e \in N_j(s)$ when the reset terminates. If $e$ existed at the beginning of round $j$, but was not in $U$ at that time, it means it has already been covered in a previous round, meaning covered at a level $>j$, which again is a contradiction to the fact that $e \in N_j(s)$ when the reset terminates. So indeed all elements in $N_j(s)$ at the termination of $\reset(k)$ were in $U$ in the beginning of round $j$. But since we assumed by contradiction that upon termination $\frac{|N_j(s)|}{\cost(s)} \geq \beta^{j+1}$, we get that $\frac{|s \cap U|}{\cost(s)} \geq \beta^{j+1}$ in the beginning of round $j$, a contradiction to \Cref{i-round}. 
\end{proof}

\begin{claim}\label{inv2}
	\Cref{inv}(2) always holds in the foreground.
\end{claim}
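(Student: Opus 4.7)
The plan is to track the three possible sources of foreground change that could potentially violate Invariant \ref{inv}(2): (i) foreground element insertions, (ii) foreground element deletions, and (iii) the memory switch triggered when an instance $\reset(k)$ terminates. For each, I will argue that for every $s \in \sets_\alg$ after the event, we still have $\frac{|\cov(s)|}{\cost(s)} \ge \beta^{\lev(s)}$, that sets outside $\sets_\alg$ retain level $-1$, and that $\lev(s) \le \lceil \log_\beta(Cn) \rceil$.

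For foreground insertions of an element $e$ (\Cref{ins}), if $e$ is already contained in some set $s \in \sets_\alg$, we only add $e$ to $\cov(s)$, so $|\cov(s)|$ weakly grows and no other set is affected; the invariant is preserved. Otherwise we promote a set $s$ from level $-1$ to level $0$, put $s$ into $\sets_\alg$, and assign $\asn(e) = s$, which gives $\frac{|\cov(s)|}{\cost(s)} = \frac{1}{\cost(s)} \ge 1 = \beta^{0}$, since $\cost(s) \le 1$. Foreground deletions (\Cref{del}) never remove an element from $\cov(\cdot)$; they only mark it as dead and lower $\plev(e)$, so $|\cov(s)|$ is unchanged for every $s \in \sets_\alg$.

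The main obstacle, and the heart of the argument, is the reset switch. I will first prove the analog of the invariant for the background, namely that throughout $\reset(k)$, every $s \in \sets^{(k)}_\alg$ satisfies $\frac{|\cov^{(k)}(s)|}{\cost(s)} \ge \beta^{\lev^{(k)}(s)}$. At the moment $s$ is added to $\sets^{(k)}_\alg$ in round $i$ of Phase II, the greedy selection rule together with \Cref{i-round} gives $|s \cap U|/\cost(s) \ge \beta^{i}$, and $s \cap U$ is exactly what is inserted into $\cov^{(k)}(s)$ at that step. After that, $\cov^{(k)}(s)$ can only grow: adversarial deletions in the background merely mark an element dead (they do not remove it from $\cov^{(k)}$, matching the foreground convention), and background insertions that fall into some covered set are routed by the algorithm to the highest-level set in $\sets^{(k)}_\alg$ containing them, which in particular may only increase $|\cov^{(k)}(s)|$ for the chosen $s$. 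This establishes the background version of \Cref{inv}(2), and also $\lev^{(k)}(s) \le k+1 \le \lceil \log_\beta(Cn) \rceil$ by the round index bound.

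It remains to argue that switching $\reset(k)$'s local memory to the foreground preserves the invariant globally. After the switch, the foreground $S_i$ for $0 \le i \le k$ equals $S^{(k)}_i$, and $S_{k+1}$ is the concatenation of the old $S_{k+1}$ with $S^{(k)}_{k+1}$; higher levels $S_j$ for $j > k+1$ are untouched, and the corresponding $\cov(\cdot)$ and $\asn(\cdot)$ are also untouched because $\reset(k)$ only reassigns elements of $\univ^{(k)} = (\bigcup_{i=0}^{k} E_i) \cap \univ$ plus possibly newly inserted ones (whose new assignment again lies in $\bigcup_{i=0}^{k+1} S^{(k)}_i$), and none of these elements belonged to $\cov(s)$ for any $s$ at foreground level $>k+1$ (those $\cov(s)$'s consist of elements at their own level, which exceeds $k+1$). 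Hence sets at levels $>k+1$ retain the invariant by induction, sets at levels $\le k$ inherit the invariant from the background analysis above, and sets at level $k+1$ either come from $S^{(k)}_{k+1}$ (inheriting the background invariant) or were already at level $k+1$ in the foreground (where nothing about their $\lev$ or $\cov$ changed, so the inductive hypothesis applies). Finally, sets not in the new $\sets_\alg$ keep level $-1$: any $s \in \sets^{(k)}$ not chosen by the greedy algorithm has $\lev^{(k)}(s) = -1$ and is not placed into $\sets_\alg$, while sets outside $\sets^{(k)}$ are not touched by $\reset(k)$ at all. The level cap $\lev(s) \le \lceil \log_\beta(Cn) \rceil$ is maintained because every level assigned by either the foreground or any $\reset(k)$ lies in $[0, L/2]$, using $L = \lceil \log_\beta(Cn)\rceil + \lceil 10\log_\beta 1/\epsilon\rceil$ and the bound $k+1 \le \lceil \log_\beta(Cn)\rceil$ coming from the greedy choice (which requires $|s\cap U|/\cost(s) \ge \beta^{k+1}$ and $|s\cap U| \le n$).
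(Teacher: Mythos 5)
Your proposal is correct and follows essentially the same route as the paper's proof: insertions and deletions are harmless because $\cov(\cdot)$ counts dead elements, and the only remaining danger is a reset transfer, which is handled by noting that when a set joins $\sets^{(k)}_\alg$ in round $i$ the greedy threshold gives $|\cov^{(k)}(s)|/\cost(s) = |s\cap U|/\cost(s) \ge \beta^i$ and that this ratio can only grow afterwards. Your extra bookkeeping about which levels the memory switch touches and the level cap is more detailed than the paper's write-up (which delegates part of this to its consistency claim) but does not change the argument.
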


\begin{proof}
	It is immediate by the algorithm description that the second half of \Cref{inv}(2) (that $\lev(s) = -1$ for each $s \notin \sets_\alg$) always holds. It remains to show that $|\cov(s)| / \cost(s) \ge \beta^{\lev(s)}$, for any set $s \in \sets_\alg$. Since $\cov(s)$ can contain dead elements as well, clearly a deletion cannot cause a violation to this invariant, as well as an insertion. Thus the only possibility left is if \Cref{inv}(2) is violated in the background in some system working on $\reset(k)$, and it is then transferred to the foreground. Assume by contradiction that exists a set $s$ such that upon termination of $\reset(k)$, we have $|\cov(s)| / \cost(s) < \beta^{\lev(s)}$.
		
		By the description of the greedy set cover algorithm, when a set $s$ joins the partial solution $\sets^{(k)}_\alg$ during the $i$th round, it is guaranteed that $|\cov(s)| / \cost(s) = |s\cap U| / \cost(s) \geq \beta^i = \beta^{\lev^{(k)}(s)}$;
		recalling that $\cov(s)$ also includes dead elements, this ratio $|\cov(s)| / \cost(s)$ may only increase later on.
\end{proof}

\begin{claim} \label{pre4inv}
	Consider an instance of $\reset(k)$ and denote by $U_i$ the collection of uncovered alive elements at the beginning of round $i$ in the greedy set cover algorithm phase, where $i\in [0, k+1]$. Then by working at a pace of $O(f / \epsilon)$ per update step, the reset will terminate after less than $\frac{\epsilon}{2} |U_i|$ element updates following the beginning of round $i$.
 \end{claim}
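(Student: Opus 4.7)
The plan is to bound the total amount of planned work remaining of $\reset(k)$ from the moment round $i$ begins until it terminates, and then translate this bound into a bound on the number of adversarial update steps using the execution pace of $\Omega(f/\epsilon)$ planned steps per update step.

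At the moment round $i$ begins, the remaining planned work of $\reset(k)$ falls into two categories: (i) the greedy set cover work for rounds $j \le i$ on the $|U_i|$ uncovered alive elements together with any elements inserted during the rest of $\reset(k)$, and (ii) the bookkeeping triggered by adversarial updates fed to the background during this interval. For each element that eventually gets covered by $\sets^{(k)}_\alg$, the greedy phase spends $O(f)$ planned steps: assigning its level and $\asn^{(k)}$, removing it from $U$, and enumerating the $\le f$ sets containing it to maintain their $s\cap U$ lists and the truncated max-heap entries. Each adversarial insertion fed to the background costs $O(f)$ planned steps to add up to $f$ new sets to $\sets^{(k)}$, initialize their $s\cap U$ lists, and update the heap, plus the $O(f)$ covering work if the element is later covered. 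Each adversarial deletion likewise costs $O(f)$ planned steps to pull the element out of $U$ and refresh the relevant heap entries. Writing $T$ for the number of adversarial updates after the beginning of round $i$, the total remaining planned work is therefore bounded by
$$W(T) \;\le\; c_1\, f\, |U_i| + c_2\, f\, T$$
for absolute constants $c_1, c_2$.

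Since the algorithm executes $\alpha := C f/\epsilon$ planned steps of $\reset(k)$ per update step for an absolute constant $C$ to be chosen, the work accomplished by time $T$ is at least $\alpha T$. The reset terminates at the first $T$ for which $\alpha T \ge W(T)$, i.e.\ at
$$T \;\le\; \frac{c_1\, |U_i|}{C/\epsilon - c_2}.$$
Choosing the pace constant $C$ so that $C \ge 4c_1 + c_2$, and using $\epsilon < 1/4$, the right-hand side is strictly less than $(\epsilon/2)\cdot |U_i|$, proving the claim.

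The main subtlety is the self-referential nature of the inequality: adversarial updates occurring during $\reset(k)$ themselves generate additional planned work (new sets entering $\sets^{(k)}$, new elements to cover, more heap operations), so $W$ grows with $T$. Taking the pace constant $C$ large enough makes the execution rate $\alpha$ dominate the per-update overhead $c_2 f$, which is precisely what allows $\alpha T \ge W(T)$ to be solved with a closed-form upper bound on $T$ that is linear in $|U_i|$ and scales correctly with $\epsilon$.
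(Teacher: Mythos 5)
Your proposal is correct and follows essentially the same route as the paper: bound the remaining planned work from the start of round $i$ by $O(f|U_i|)$ plus $O(f)$ per adversarial update fed to the background, then use the pace of $\Theta(f/\epsilon)$ planned steps per update step to solve the resulting self-referential inequality and conclude termination within fewer than $\frac{\epsilon}{2}|U_i|$ update steps. Your version merely makes the constants and the solution of the inequality slightly more explicit than the paper's ``$x$ is roughly $\epsilon|U_i|$'' phrasing.
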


\begin{proof}
	Note that at the beginning of round $i$, the total number of remaining planned computations by $\reset(k)$ is $O(f|U_i|)$, since each remaining alive element will change its level only once, and it will take $O(f)$ time for it to update all relevant data structures regarding this change. Each time the adversary makes an update, the number of alive elements can increase by at most one, so if we mark by $x$ the total number of update steps from the beginning of round $i$ until termination, there are at most $|U_i| + x$ elements that we need to assign to levels, taking $O(f(|U_i|+x))$ time. Notice though that $x$ is roughly $\epsilon |U_i|$, since the reset takes care of $\Theta(\frac{1}{\epsilon})$ elements per update step ($\Theta(\frac{f}{\epsilon})$ computations per update step, each element taking $\Theta(f)$ time), in which time only one can be inserted. So $O(f(|U_i|+x)) = O(f|U_i|)$, meaning the reset takes $O(f|U_i|)$ time for planned and immediate steps. By working at a pace of $O(\frac{f}{\epsilon})$ computations per update step, we can finish all steps in less than $\frac{\epsilon}{2} |U_i|$ update steps. 
\end{proof}

\begin{claim} \label{claim4inv}
	Following the termination of an instance $\reset(k)$, we have $|P_i| < \epsilon \cdot |A_i|$ for all $i \leq k$.
\end{claim}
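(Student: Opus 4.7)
The plan is to combine the timing bound of \Cref{pre4inv} with a case analysis of how elements end up in $P_i$ versus $A_i$ immediately after $\reset(k)$ completes, charging every element in $P_i$ to a distinct adversarial update that happened during rounds $\leq i$. Let $X$ denote the number of adversarial update steps between the start of round $i$ of Phase II and the termination of $\reset(k)$; by \Cref{pre4inv} we have $X < \frac{\eps}{2}|U_i|$, and each such step consists of at most one insertion or deletion. All the heavy lifting will be converting this single quantity $X$ into bounds on both $|A_i|$ and $|P_i|$.

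First, I would lower-bound $|A_i|$. Every element $e \in \univ^{(k)}$ received $\plev^{(k)}(e) = \max\{\plev(e), k+1\} \geq k+1$ at initialization, and this value only changes if $e$ dies during Phase II. By the greedy procedure, every element of $U_i$ that is still alive at termination is assigned some $\lev^{(k)}(e) \in [0, i]$ in one of the rounds $i, i-1, \ldots, 0$. Hence every $U_i$-element that survives the deletions occurring during rounds $\leq i$ ends up with $\lev(e) \leq i < k+1 \leq \plev(e)$ in the foreground, and therefore belongs to $A_i$. Since at most $X$ deletions can occur during rounds $\leq i$, this gives $|A_i| \geq |U_i| - X > (1 - \eps/2)\,|U_i|$.

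Next, I would upper-bound $|P_i|$ by injecting each of its elements into the $X$ update events during rounds $\leq i$. Every element of $\univ^{(k)}$ that is still alive at termination has $\plev \geq k+1 > i$ and hence lies outside $P_i$; and every element of $\univ^+$ at level $>k$ (including higher dead elements untouched by the reset) has $\plev > i$ as well. So every $e \in P_i$ is either (i) an element of $\univ^{(k)}$ that died during Phase II with $\plev^{(k)}(e) = \lev^{(k)}(e) \leq i$, or (ii) an element inserted during Phase II whose $\plev^{(k)}(e) \leq i$. In case (i), since a covered element takes the level of the round in which it was chosen, a death with $\lev^{(k)} \leq i$ forces the death itself to happen during some round $\leq i$ (uncovered deaths are harmless because such elements land in no $E^{(k)}_j$ and are discarded by the termination step). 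In case (ii), since an insertion during round $j$ assigns $\plev^{(k)} \in \{j,\lev^{(k)}(s)\}$ with $\lev^{(k)}(s) \geq j$ for any $s \in \sets^{(k)}_\alg$, the bound $\plev^{(k)} \leq i$ again forces $j \leq i$. Either way, distinct contributions to $P_i$ inject into distinct update events during rounds $\leq i$, so $|P_i| \leq X < \frac{\eps}{2}|U_i|$.

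Combining the two bounds yields $|P_i|/|A_i| < \frac{\eps/2}{1 - \eps/2} < \eps$ for any $\eps \in (0,1/4)$; when $|U_i|=0$ the inequality holds vacuously because $X=0$ as well. The main obstacle I anticipate is the bookkeeping for case (i): the deletion rule $\plev^{(k)}(e) \leftarrow \lev^{(k)}(e)$ would naively leave an uncovered dead element with a meaningless $\plev^{(k)} = -1$, so one must explicitly appeal to the termination rule (which writes only the nonempty $E^{(k)}_j$ with $j \geq 0$ onto the foreground) to argue that such elements are silently discarded and do not pollute the foreground $P_i$.
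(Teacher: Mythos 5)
Your proof is correct and follows essentially the same route as the paper's: invoke \Cref{pre4inv} to bound by $\frac{\epsilon}{2}|U_i|$ the number of adversarial updates after the start of round $i$, charge every element of $P_i$ to a distinct such update, observe that the surviving elements of $U_i$ land in $A_i$, and finish with the same arithmetic (the paper phrases the second step as $|A_i|+|P_i|\geq |U_i|$ rather than $|A_i|\geq |U_i|-X$, a cosmetic difference). One small overclaim: elements of $U_i$ that were inserted uncovered during a round $j>i$ have $\plev^{(k)}(e)=j$, which need not be $\geq k+1$ as you assert, but it is still $>i$, which is all the membership in $A_i$ requires, so the count is unaffected.
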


\begin{proof}
	Consider the moment when the $i$th round of the greedy set cover algorithm begins in an instance of $\reset(k)$ (within Phase II of $\reset(k)$), where $i\in [0, k+1]$. Denote by $U_i$ the set of all currently uncovered alive elements.
	By Claim \ref{pre4inv}, the reset will terminate in less than $\frac{\epsilon}{2} |U_i|$ update steps. By the algorithm description,
	each element $e\in U_i$ 
	 at the beginning of the $i$th round will be assigned $\lev(e)$ at most $i$ in round $i$ onwards, while $\plev(e) > i$. The only elements that could have level $\leq i$ with passive level also $\leq i$, and thus
	belong to $P^{(k)}_{i}$, are the ones inserted/deleted by the adversary from the $i$th round onwards. As mentioned, there are less than $\frac{\epsilon}{2} |U_i|$ insertions/deletions from this point until termination, so at the end of the reset $|P^{(k)}_{i}|<\frac{\epsilon}{2} |U_i|$. Moreover, $(|A^{(k)}_{i}| + |P^{(k)}_{i}|)$ at the end of the reset is the total number of elements that were assigned to a level up to $i$. 
 So eventually there will be at least $|U_i|$
 elements (alive and dead) covered at level $\leq i$, meaning $(|A^{(k)}_{i}|  + |P^{(k)}_{i}|) \geq |U_i|$
 Combining this altogether we obtain that $|P^{(k)}_{i}| < \epsilon |A^{(k)}_{i}|$ for any $\epsilon < 0.5$. 
\end{proof}

\begin{claim} \label{larger}
	When $\reset(k)$ terminates and is transferred to the foreground, it does not change $A_{k'}$, and $P_{k'}$ can only reduce in size, for any $k'>k$.
\end{claim}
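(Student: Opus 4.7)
The plan is to decompose the effect of the termination of $\reset(k)$ on the foreground into three disjoint parts: (i) removal of the \emph{stale dead} elements $e$ (those with $\plev(e) = \lev(e) \le k$ that were dead at $\reset(k)$'s initialization, and hence not in $\univ^{(k)+}$), which get erased when $E_i$ is overwritten by $E^{(k)}_i$ for $i \in [0,k]$; (ii) the overwrite $\lev(e) \leftarrow \lev^{(k)}(e)$, $\plev(e) \leftarrow \plev^{(k)}(e)$ for each $e \in \univ^{(k)+}$ at termination; (iii) no change to elements touched by neither (i) nor (ii). For (i), such an $e$ has $\plev(e)=\lev(e)\le k<k'$, so $e\in P_{k'}$ and $e\notin A_{k'}$; hence its removal shrinks $P_{k'}$ and leaves $A_{k'}$ intact. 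Part (iii) is immediate.

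For part (ii), I would split on the status of $e$. If $e$ is dead at termination (either died during the reset or was inserted during it and then deleted), then by the algorithm $\lev^{(k)}(e)=\plev^{(k)}(e)\le k+1 \le k'$, so post-termination $e\in P_{k'}$ and $e\notin A_{k'}$; the analogous check shows the same holds pre-termination (since a dead element always has $\plev=\lev$), so there is no change. If $e$ is alive and was inserted during $\reset(k)$'s lifetime with foreground insertion level $j$, then either $j>k$ and $e\notin \univ^{(k)+}$ (falling under case (iii)), or $j\le k$, in which case $\lev^{(k)}(e)\le \plev^{(k)}(e)\le k+1\le k'$, giving $e\in P_{k'}$ and $e\notin A_{k'}$ post-termination; the pre-termination state is similar because $\plev(e)$ stayed bounded by $\max\{j,k\}\le k'$ throughout.

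The crux is the case where $e$ is alive both at $\reset(k)$'s initialization and at its termination. Let $p_0=\plev(e)$ at $\reset(k)$'s initialization; by construction, $\plev^{(k)}(e)=\max\{p_0,k+1\}$ and $\lev^{(k)}(e)\in[0,k+1]$. The main technical obstacle is to show that the pre-termination foreground value $\plev(e)$ lies in the interval $[p_0,\max\{p_0,k\}]$ throughout $\reset(k)$'s lifetime. I would prove this by induction over the sequence of events (adversarial updates and lower-level reset terminations) occurring during $\reset(k)$'s lifetime: no $\reset(k_1)$ with $k_1>k$ can have terminated, since that would have aborted $\reset(k)$; and every $\reset(k_1)$ with $k_1<k$ that does terminate can only overwrite $\plev(e)$ by $\max\{\plev(e)|_{\text{at }k_1\text{ init}},k_1+1\}\le \max\{p_0,k\}$, using $k_1+1\le k$ and the inductive hypothesis. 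Since $k'\ge k+1$, a short calculation then yields $\plev(e)_{\text{pre}}\le k'\iff p_0\le k'\iff \plev^{(k)}(e)\le k'$; combined with $\lev(e)_{\text{pre}}\le k\le k'$ and $\lev^{(k)}(e)\le k+1\le k'$, this shows that both $A_{k'}$- and $P_{k'}$-memberships of $e$ are preserved by the overwrite. Putting (i)--(iii) together yields the claim.
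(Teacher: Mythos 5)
Your proposal is correct and follows essentially the same route as the paper's proof: an element-wise case analysis showing that membership in $A_{k'}$ and $P_{k'}$ is preserved for every surviving element --- using that $\reset(k)$ assigns levels at most $k+1\le k'$, sets $\plev^{(k)}(e)=\max\{\plev(e),k+1\}$, that a termination of any higher-level reset would have aborted $\reset(k)$, and that the only removals are of dead elements, which can only shrink $P_{k'}$. Your interval bound $\plev(e)\in[p_0,\max\{p_0,k\}]$ on the foreground passive level during the reset's lifetime is just a more explicit rendering of the monotonicity/boundedness argument the paper invokes implicitly.
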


\begin{proof}

First, we will show that each element in $P_{k'}$ in the foreground before the termination of $\reset(k)$ remains there after the termination. Consider such an element $e \in P_{k'}$. $\plev^{(k)}(e) \leq k'$ upon termination, since otherwise that would mean that $\plev(e) > k'$ following $\reset(k)$, and since this cannot happen due to the reset (the passive level would be at least $k+2$), we must have had $\plev(e) > k'$ in the foreground upon termination, a contradiction to the fact that $e \in P_{k'}$ in that time. The only exception is if $e$ was a dead element, therefore it was in $P_{k}$ thus in $P_{k'}$ as well, and $e$ would be completely removed during the reset, reducing $|P_{k'}|$ by one.

Second, we will show that each element in $A_{k'}$ in the foreground before the termination of $\reset(k)$ remains there after the termination. Consider such an element $e \in A_{k'}$. Clearly $\lev^{(k)}(e) \leq k'$ upon termination, since $e$ cannot rise to a level above $k+1$, which is $\leq k'$. Since $e \in A_{k'}$, the passive level of $e$ in the foreground right before termination of $\reset(k)$ is $> k'$, thus it is also $> k+1$. Therefore, the passive level of $e$ would not have changed due to $\reset(k)$, and $e$ remains in $A_{k'}$.

Next, we will show that no new elements join $P_{k'}$ following the termination of $\reset(k)$. Assume by contradiction that exists such an element $e$. Either $\lev(e) > k'$ or $\plev(e) > k'$ in the foreground right before the termination. If $\lev(e) > k'$, then the only way $e$ would have participated in $\reset(k)$, is if before the termination of $\reset(k)$ there was a termination of some $\reset(k'')$ where $k'' \geq k'$. This would abort $\reset(k)$ though. Likewise, if $\plev(e) > k'$ in the foreground right before the termination, then since the passive level cannot decrease, $e$ cannot join $P_{k'}$.

Lastly, we will show that no new elements join $A_{k'}$ following the termination of $\reset(k)$. Assume by contradiction that exists such an element $e$. Either $\lev(e) > k'$ or $\plev(e) \leq k'$ in the foreground right before the termination. If $\lev(e) > k'$, then the only way $e$ would have participated in $\reset(k)$, is if before the termination of $\reset(k)$ there was a termination of some $\reset(k'')$ where $k'' \geq k'$. This would abort $\reset(k)$ though. If $\plev(e) \leq k'$, then since $k' \geq k+1$, by the algorithm description $\plev^{(k)}(e)$ will remain $\leq k'$, a contradiction, and the claim follows. 
\end{proof}

\begin{lemma}\label{inv3}
	\Cref{inv}(3) always holds in the foreground.
\end{lemma}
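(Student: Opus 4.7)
The plan is to argue by induction on time, showing that at every moment the inequality $|P_k| \le 2\epsilon |A_k|$ holds for every level $k$. The base case is trivial since both sets are empty at the start of the update sequence.

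For the inductive step, I would fix a level $k$ and a time $t$, and define $t_0 \le t$ to be the most recent time at which some instance $\reset(k'')$ with $k'' \ge k$ terminated and switched its memory to the foreground (setting $t_0 = 0$ if no such termination has occurred yet). By \Cref{claim4inv}, just after $t_0$ we have $|P_k| < \epsilon |A_k|$. Moreover, the termination procedure (Phase III of $\reset(k'')$) guarantees that every foreground element at level $\le k$ at time $t_0$ is alive and has passive level strictly greater than $k$, so every such element lies in $A_k$ and none lies in $P_k$ coming from a dead element. Consequently, $|A_k|(t_0)$ is exactly the number of alive foreground elements at level $\le k$, which in turn equals $|\univ^{(k)}|$ as initialized by the subsequent $\reset(k)$ instance.

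Next, I would account for how $|P_k|$ and $|A_k|$ can change between $t_0$ and $t$. There are only two sources: (i) each adversarial update can increase $|P_k|$ and/or decrease $|A_k|$ by at most one (insertions/deletions, by inspection of \textsf{Foreground-Insert} and \textsf{Foreground-Delete}); (ii) the termination of a lower-level reset $\reset(k''')$ with $k''' < k$, which by \Cref{larger} cannot increase $|P_k|$ nor decrease $|A_k|$. By the definition of $t_0$, no $\reset(k'')$ with $k'' \ge k$ terminates in the interval $(t_0,t]$, so in particular the $\reset(k)$ instance initiated at step $t_0+1$ has not yet finished at time $t$. Applying \Cref{pre4inv} (extended to absorb the initialization phase into the $\Theta(f/\epsilon)$-per-step pace by tuning the hidden constant) yields $t - t_0 \le \tfrac{\epsilon}{2}\,|A_k|(t_0) + O(1)$. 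Combining gives
$$\frac{|P_k|(t)}{|A_k|(t)} \;\le\; \frac{\epsilon\,|A_k|(t_0) + (t-t_0)}{|A_k|(t_0) - (t-t_0)} \;\le\; \frac{(3\epsilon/2)\,|A_k|(t_0)}{(1 - \epsilon/2)\,|A_k|(t_0)} \;\le\; 2\epsilon,$$
where the last inequality uses $\epsilon < 1/4$; this completes the inductive step.

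The hard part will be tightly matching \Cref{pre4inv}'s bound (which is phrased in terms of $|U_{k+1}|$ at the start of the first greedy round in the background) against the foreground quantity $|A_k|(t_0)$. The bridge is exactly the identity $|A_k|(t_0) = |\univ^{(k)}|$ at initialization, furnished by \Cref{claim4inv} together with the termination procedure; any discrepancy from adversarial updates during the initialization phase is at most additive $O(1)$ and can be absorbed by a slight increase of the pace constant. A parallel (and easier) sub-argument will handle the short-level reset, which terminates fully within one update step and thus forces the invariant for all $k \le k_s$ at the end of every update step directly via \Cref{claim4inv}.
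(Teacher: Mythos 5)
Your proposal follows essentially the same route as the paper's proof: anchor at the most recent termination of a reset at level $\geq k$ (the paper anchors at the ensuing initiation of $\reset(k)$, which is the same moment), invoke \Cref{claim4inv} to get $|P_k| < \epsilon|A_k|$ there, use \Cref{larger} to dismiss terminations of lower-level resets, charge the drift of $|P_k|$ and $|A_k|$ to adversarial updates (at most one per step), and bound the number of such steps by the running time of the still-active $\reset(k)$ instance via \Cref{pre4inv}. The paper merely phrases this as a proof by contradiction rather than a direct ratio bound, and closes with the same style of arithmetic (its final inequality is $2\epsilon^2+3\epsilon-1<0$ for $\epsilon<\frac{1}{4}$).

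The one genuine flaw is your ``bridge'' claim that Phase III of $\reset(k'')$ leaves every foreground element at level $\leq k$ alive with passive level strictly greater than $k$, so that $|A_k|(t_0)$ equals $|\univ^{(k)}|$ exactly (up to an additive $O(1)$). This is false: elements inserted or deleted by the adversary \emph{during} the execution of $\reset(k'')$ are fed to it and can end up at levels $\leq k$ with passive level $\leq k$ (or dead) --- this is precisely why \Cref{claim4inv} guarantees only $|P_i| < \epsilon|A_i|$ rather than $P_i = \emptyset$ --- and their number can be as large as roughly $\epsilon\cdot|A_k|(t_0)$, not $O(1)$. The fix is to bound the set of elements participating in the new $\reset(k)$ instance by $|P_k|(t_0)+|A_k|(t_0) \leq (1+\epsilon)|A_k|(t_0)$ (this is exactly the bound $x < \frac{\epsilon}{2}\bigl(|P_k^{t'}|+|A_k^{t'}|\bigr)$ that the paper uses), which degrades your estimate $t-t_0 \leq \frac{\epsilon}{2}|A_k|(t_0)$ to $t-t_0 \leq \frac{\epsilon}{2}(1+\epsilon)|A_k|(t_0)$; the resulting ratio bound then needs $2\epsilon^2+3\epsilon-1<0$, which still holds for $\epsilon<\frac{1}{4}$, so your argument survives with the corrected constant. (Your appeal to \Cref{pre4inv} for the entire reset, initialization phase included, is the same implicit extension the paper itself makes in its use of that claim, so that part is acceptable.)
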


\begin{proof}

Assume by contradiction that there exists $k$ such that $|P_k| > 2\epsilon \cdot |A_k|$ for some $k$, at the end of update step $t$ (right before update step $t+1$). Let $t'$ be the last time step before $t$ that $\reset(k)$ was initiated. $\reset(k)$ was initiated at time $t'$ right after the termination of some $\reset(k')$, where $k' \geq k$, by the algorithm description. By \Cref{claim4inv}, we know that at time $t'$ (following the termination of $\reset(k')$) we have $|P_k| < \epsilon \cdot |A_k|$ in the foreground.

 
Between $t'$ and $t$, $|P_k|$ can rise and $A_k$ can change only due to insertions/deletions, since by definition of $t'$ no termination of $\reset(k')$ for any $k' \geq k$ exists, and any termination of $\reset(k')$ for any $k' < k$ would not raise $|P_k|$ or change $A_k$ by \Cref{larger}. By denoting $A_k$ at times $t'$ and $t$ by $A_k^{t'}$ and $A_k^{t}$ respectively, we conclude that there must be more than ($2\epsilon \cdot |A_k^{t}| - \epsilon \cdot |A_k^{t'}|$) update steps between $t'$ and $t$, since again due to one insertion/deletion only one element can join $P_k$. 
	
If we denote by $x$ the total number of update steps throughout which $\reset(k)$ is executed, we get by Claim \ref{pre4inv} that:


 
 

	
	\begin{equation} \label{inv3eq1}
		x < \frac{\epsilon}{2}(|P_k^{t'}| + |A_k^{t'}|),
	\end{equation} 
	
	\noindent since the collection of all elements participating in $\reset(k)$ initiated at time $t'$ is $P_k^{t'} \cup A_k^{t'}$. We assume that at update step $t$, $\reset(k)$ is still running, thus we will reach a contradiction if:
	
	\begin{equation} \label{inv3eq2}
		\frac{\epsilon}{2}(|P_k^{t'}| + |A_k^{t'}|) < 2\epsilon \cdot |A_k^{t}| - \epsilon \cdot |A_k^{t'}|.
	\end{equation} 
	
	\noindent Now, notice that:
	
	\begin{equation} \label{inv3eq3}
		|A_k^{t}| \geq |A_k^{t'}| - x,
	\end{equation} 
	
	\noindent since again, in each update step during the reset up to one element can be removed from $A_k$, and we assumed that $x \geq t-t'$. Thus, we need to show that:
	
	\begin{equation} \label{inv3eq4}
		\frac{\epsilon}{2}(|P_k^{t'}| + |A_k^{t'}|) < 2\epsilon \cdot (|A_k^{t'}|-x) - \epsilon \cdot |A_k^{t'}|.
	\end{equation} 
	
	\noindent Plugging in $x$ from \Cref{inv3eq1} and rearranging, we get that we need to show:
	
	\begin{equation} \label{inv3eq5}
		(\frac{1}{2} + \epsilon)|P_k^{t'}| < (\frac{1}{2} - \epsilon)|A_k^{t'}|.
	\end{equation} 
	
	\noindent Since we know that $|P_k^{t'}| < \epsilon|A_k^{t'}|$, it is enough to show that:

        \begin{equation} \label{inv3eq5.5}
		(\frac{1}{2} + \epsilon) \cdot \epsilon|A_k^{t'}| < (\frac{1}{2} - \epsilon)|A_k^{t'}|,
	\end{equation} 

\noindent meaning that 

        \begin{equation} \label{inv3eq6}
		2\epsilon^2 + 3\epsilon -1 < 0,
	\end{equation} 

 \noindent which holds for any $\epsilon < \frac{1}{4}$.	
Thus, we reach our contradiction and the lemma follows.
\end{proof}

We conclude that our algorithm indeed maintains \Cref{inv}, in worst-case update time of $O(\frac{f \cdot L}{\epsilon}) = O\brac{\frac{f\log(Cn)}{\epsilon^2}}$ as proved in \Cref{updtime}. Since \Cref{inv} holds, the approximation factor of the maintained minimum set cover is $(1+\epsilon)\ln n$, as shown in \Cref{toy-approx}. This concludes the proof of \Cref{warmup}.

\section{Removing Dependency on Aspect Ratio} \label{remove}
In this section we prove \Cref{wc}, by removing the dependency on the aspect ratio $C$ in the update time.

\subsection{Preliminaries and Basic Data Structures}
As before, an element is \emph{dead} if it is supposed to be deleted from $\univ$ by the adversary, but currently resides in the system as our algorithm has not removed it yet; an element is \emph{alive} if it is not dead. $\univ^+\supseteq \univ$ is the collection of all dead and alive elements. Define $\beta = 1+\epsilon$. For each $s\in \sets$, define the \emph{top level} $\tlev(s) = \ceil{\log_\beta (n / \cost(s))}$. 
Define a parameter $K = \ceil{10\log_\beta n}$. We will partition the hierarchy into two sequences of windows, the \emph{even}
and the \emph{odd} indexed for parameter $l$:
$[l K, (l+2)K)$, and note that every even-indexed window overlaps two odd-indexed windows, and vice versa (except the top/bottom ones); see \Cref{fig1} for an illustration.
\begin{definition}\label{window}
For any $l\geq -1$, define $\sets_l = \{s\in\sets\mid \tlev(s)\in [l K, (l+2)K) \}$ and let $\mathcal{U}_l$ be the collection of all elements that the cheapest set containing them is in $\mathcal{S}_l$. 
\end{definition}

\begin{obs} \label{ob1}
Each set $s \in \mathcal{S}$ is in two consecutive set collections, $\mathcal{S}_l$ and $\mathcal{S}_{l+1}$, and in those two only. Each element $e \in \mathcal{U}$ is in two consecutive element collections,  $\mathcal{U}_{l'}$ and $\mathcal{U}_{l'+1}$, and in those two only.
\end{obs}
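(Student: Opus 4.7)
The plan is to unpack \Cref{window} and turn the claim into a counting problem about how many intervals $[lK, (l+2)K)$ contain a given integer. For the first statement, I would fix $s \in \sets$, set $t = \tlev(s)$, and observe that $s \in \sets_l$ iff $lK \le t < (l+2)K$, equivalently $l \in (t/K - 2,\ t/K]$. Since this interval has length $2$, it contains either one or two integers; my task is to show it contains exactly two and that they are consecutive. I would split into two cases: if $t$ is a multiple of $K$, say $t = jK$, the interval becomes $(j-2, j]$ and contains exactly $\{j-1, j\}$; if $t = jK + r$ with $0 < r < K$, the interval is $(j-2+r/K,\ j+r/K]$ with $r/K \in (0,1)$, again yielding exactly $\{j-1, j\}$. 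Finally, I would note that $t \geq 0$, since $\cost(s) \le 1$ and $n \ge 1$ give $\log_\beta(n/\cost(s)) \ge 0$, so both indices lie in $\{-1, 0, 1, \ldots\}$ as required by \Cref{window}.

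For the second statement, I would fix $e \in \univ$ and let $s_e$ denote the (unique, after an arbitrary tie-breaking) cheapest set containing $e$. By \Cref{window}, $e \in \univ_l$ iff $s_e \in \sets_l$, so applying the first statement to $s_e$ immediately shows that $e$ belongs to exactly two consecutive element collections $\univ_{l'}, \univ_{l'+1}$.

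I do not anticipate any genuine obstacle: the proof is a pure unpacking of definitions, and the only care needed is the elementary case split on whether $\tlev(s)$ is a multiple of $K$, together with verifying the $l = -1$ boundary when $\tlev(s) = 0$.
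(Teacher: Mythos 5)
Your proof is correct and follows exactly the definitional unpacking that the paper leaves implicit (the observation is stated without proof): the interval $(t/K-2,\, t/K]$ contains exactly two consecutive integers, both $\geq -1$ since $\tlev(s)\geq 0$, and the element case reduces to the set case via the cheapest set (with ties immaterial since membership depends only on the cost). Nothing further is needed.
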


\begin{obs} \label{ob2}
By Observation \ref{ob1}, the union of all odd-indexed element collections  equals the union of all even-indexed element collections, which equals $\mathcal{U}$. Namely, $\univ = \bigcup_{j\geq 0}\univ_{2j} = \bigcup_{j\geq 0}\univ_{2j-1}$. 
\end{obs}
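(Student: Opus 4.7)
The plan is to derive both equalities directly from Observation \ref{ob1} via a simple parity argument, together with the definitional containment $\univ_l \subseteq \univ$ for every $l$. Since the observation already asserts that every alive element of $\univ$ lies in two consecutive element collections, the only content left to extract is that one of any two consecutive indices is even and the other is odd.

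First, I would fix an arbitrary alive element $e \in \univ$ and invoke Observation \ref{ob1} to conclude that $e$ belongs to exactly two element collections $\univ_{l'}$ and $\univ_{l'+1}$ for some integer $l' \geq -1$. Since $l'$ and $l'+1$ are consecutive integers, one of them is even (of the form $2j$ with $j \geq 0$) and the other is odd (of the form $2j-1$ with $j \geq 0$): either $l' = 2j-1$ and $l'+1 = 2j$, or $l' = 2j$ and $l'+1 = 2(j+1)-1$. In either case, $e$ lies in some $\univ_{2j}$ and in some $\univ_{2j'-1}$ with $j, j' \geq 0$, which establishes the containments $\univ \subseteq \bigcup_{j\geq 0}\univ_{2j}$ and $\univ \subseteq \bigcup_{j\geq 0}\univ_{2j-1}$.

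For the reverse containments, I would simply invoke Definition \ref{window}: each $\univ_l$ is a subset of $\univ$ by construction, so any union of such subsets lies in $\univ$. Combining both directions yields $\bigcup_{j\geq 0}\univ_{2j} = \univ = \bigcup_{j\geq 0}\univ_{2j-1}$.

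I do not anticipate any obstacle; the only mild care is at the boundary $l' = -1$, but since $-1$ is odd, the pair $\{l', l'+1\} = \{-1, 0\}$ still contains both an odd and an even index (corresponding to $j = 0$ in both the even and odd enumerations), so the parity argument applies uniformly across all values of $l'$.
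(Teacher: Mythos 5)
Your proposal is correct and matches the paper's (implicit) argument: the paper states Observation \ref{ob2} as an immediate consequence of Observation \ref{ob1}, relying precisely on the parity fact that of any two consecutive indices one is even and one is odd, together with $\univ_l \subseteq \univ$ from Definition \ref{window}. Your handling of the boundary index $l' = -1$ (which is odd, corresponding to $j=0$ in the odd enumeration) is also consistent with the paper's indexing convention.
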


\noindent For each $l\geq -1$, we will maintain a set cover $\sets_{l, \alg}\subseteq \sets_l$ that covers $\univ_l$ by applying \Cref{warmup} as a black-box
on the set cover instance $(\univ_l, \sets_l)$. So, for each $l\geq -1$, the algorithm will maintain a super set $\univ_l^+\supseteq \univ_l$ which contains both alive and dead elements. Overall, by \Cref{ob2}, we will have two different set cover solutions for $\univ$:
$$\sets_{\even, \alg} := \bigcup_{j\geq 0}\sets_{2j, \alg} \; , \; \sets_{\odd, \alg} := \bigcup_{j\geq 0}\sets_{2j-1, \alg}.$$
The solution of  smaller total cost among $\sets_{\even, \alg}$ and $ \sets_{\odd, \alg}$ will be the output solution that is presented to the adversary.

To illustrate how \Cref{warmup} is applied on the set cover instance $(\univ_l, \sets_l)$, let us slightly open the black box and define the following notations. 
For any index $l\geq -1$, for each set $s\in \sets_l$, we will assign a level:
$$\lev_l(s)\in I_l = \left[\max\{lK - \ceil{\log_\beta n}-1, -1\}, (l+2)K + \ceil{10\log_\beta 1/\epsilon}\right].$$
Intuitively, $\lev_l(s)$ is the dynamic level that the black-box algorithm assigns to set $s$ in the $l$-th window $\sets_l$, and $I_l$ is the batch or interval of levels in which we operate now, instead of $-1$ to $L$. Notice that $s \in \sets_l$ cannot cover at a level higher than $\tlev(s)$, which is less than the upper limit of $I_l$. On the other hand, $s \in \sets_l$ cannot cover at a level lower than $\floor{\log_\beta (\frac{1}{\cost(s)})}$ (its level if it covers just one element), which is larger than the lower limit of $I_l$. Therefore, $I_l$ contains all levels that $s \in \sets_l$ can cover at (and has some slack on both ends). The lowest level in each interval $I_l$ is reserved for sets in $\sets_l \setminus \sets_{l,\alg}$, just as we reserved the level $-1$ for sets not in the cover in the previous section. See \Cref{fig1} for an illustration.



For any element $e\in \univ_{l}^+$, we will assign it to set $\asn_l(e)\in \sets_{l, \alg}$, and conversely, for each $s\in \sets_l$, define $\cov_l(s) = \{e\mid \asn_l(e) = s\}$. The level of an element $e$ is defined to be $\lev_l(e) = \lev_l(\asn_l(e))$. Besides, we will also maintain a value of \emph{passive level} $\plev_l(e)\geq \lev_l(e)$. For any $k\in I_l$, define $A_{l, k} = \{e\in \univ_{l}^+ \mid \lev_l(e) \leq k < \plev_l(e) \}$, and $P_{l, k} = \{e\in \univ_{l}^+ \mid \plev_l(e)\leq k\}$. For each set $s\in \sets_l$, define $N_{l, k}(s) = A_{l, k}\cap s$. For dead element $e$ in $\univ_l^+\setminus \univ_l$, we will have $\plev_l(e) = \lev_l(e)$.

	\begin{figure}
		\center{\includegraphics[scale=0.45]{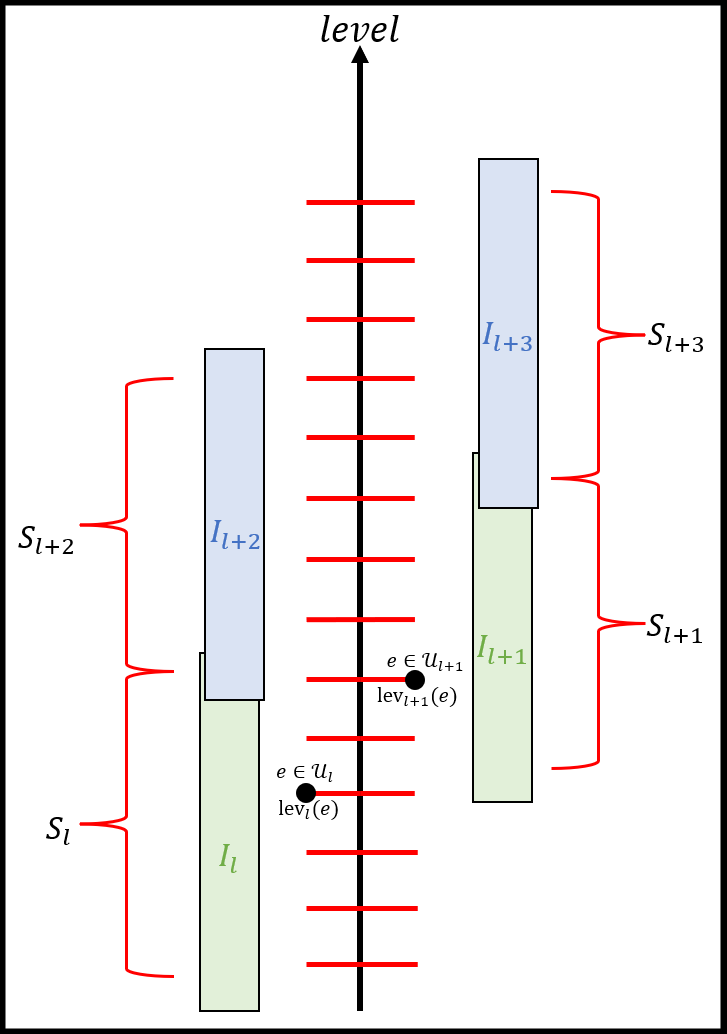}}
		\caption{\label{fig1} \emph{Element $e$ is in $\mathcal{U}_l$ and $\mathcal{U}_{l+1}$. Thus it must be covered by a set in $\mathcal{S}_l$, at level $\lev_l(e) \in I_l$, and it must be covered by a set in $\mathcal{S}_{l+1}$, at level $\lev_{l+1}(e) \in I_{l+1}$. Notice the overlap between the different intervals.}}
	\end{figure}

Our algorithm will maintain the following data structures for each index $l\geq -1$.

\begin{framed}
	\noindent \textbf{Basic Data Structures for each $l\geq -1$.}
	
	\begin{enumerate}[(a)]
		\item For each level $k\in I_l$, we will maintain linked lists $S_{l, k} = \{s\mid \lev_l(s) = k \}$ and $E_{l, k} = \{e\mid \lev_l(e) = k\}$.
		
		
		\item For each element $e\in \univ_{l}$, maintain the assignment $\asn_l(e)$, and then for each set $s\in \sets_l$ maintain the set $\cov_l(s)$.
	\end{enumerate}
\end{framed}


\begin{invariant}\label{inv-scale}
    The algorithm will maintain the following invariant for each $l\geq -1$:
	\begin{enumerate}[(1),leftmargin=*]
		\item For any set $s\in \sets_l$ and any $k > \min\{I_l\}$, we have $\frac{|N_{l, k}(s)|}{\cost(s)} < \beta^{k+1}$.
		
		\item For any set $s\in \sets_{l, \alg}$, we have $\frac{|\cov_l(s)|}{\cost(s)}\geq \beta^{\lev_l(s)}$; note that $\cov_l(s)$ could count dead elements in $\univ_{l}$ that are already deleted by the adversary.
        In particular, $\lev_l(s)\leq \tlev(s)$ for any $s\in \sets_l$. Moreover, for each $s\notin \sets_{l, \alg}$, $\lev_l(s) =\min\{I_l\}$.
		
		
		\item For each $k\in I_l$ and $k > \min\{I_l\}$, we have $|P_{l, k}|\leq 2\epsilon |A_{l, k}|$. Note that the algorithm does not maintain the counters $|A_{l, k}|, |P_{l, k}|$ explicitly.
		
		
	\end{enumerate}
\end{invariant}

\noindent To analyze the approximation ratio, let $\sets^*$ denote the optimal set cover for all elements in $\univ$.



\begin{lemma}\label{approx}
	For any $l\geq -1$, let $k_0\in I_l$ be the smallest index such that there exists a set in $\mathcal{S}_l$ covering at level $k_0$, or in other words, $k_0$ is the smallest level such that $\sets_{l, \alg}\cap S_{l, k_0}\neq \emptyset$. If $k_0 \leq (l+1)K$, then we have:
	$$\sum_{j \geq 0}\cost(\sets_{l + 2j, \alg})\leq (1+O(\epsilon))\ln n \cdot \cost(\sets^*).$$
\end{lemma}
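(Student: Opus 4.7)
\medskip

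The plan is to invoke the windowed black-box Corollary~\ref{toy-approx} inside every window $l'=l+2j$ separately. This yields
\[
\cost(\sets_{l',\alg}) \;\le\; (1+O(\epsilon))\ln n\cdot \cost(\sets^*_{l'}),
\]
where $\sets^*_{l'}\subseteq\sets_{l'}$ is any cover of $\univ_{l'}$ using only sets of $\sets_{l'}$ (optimal being the natural choice). It therefore suffices to exhibit such covers with $\sum_{j\ge 0}\cost(\sets^*_{l+2j})\le (1+O(\epsilon))\cost(\sets^*)$.

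For the bottom window ($j=0$), I would turn the globally optimal $\sets^*$ into a cover of $\univ_l$ from $\sets_l$: keep every $s^*\in\sets^*\cap\sets_l$, and for every $e\in\univ_l$ not covered by $\sets^*\cap\sets_l$, add the cheapest set $s_e\in\sets_l$ containing $e$ (which exists by definition of $\univ_l$). The crucial structural observation is that any $s^*\in\sets^*\setminus\sets_l$ covering some $e\in\univ_l$ must be \emph{too expensive} for $\sets_l$, i.e.\ $\tlev(s^*)<lK$: it cannot be too cheap (top level $\ge(l+2)K$) because then $\cost(s^*)<n\beta^{-(l+2)K+1}\le\cost(s_e)$ would contradict $s_e$ being the cheapest set containing $e$. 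I would then show, exploiting the factor-$\beta^K\ge n^{10}$ cost gap between these over-expensive $s^*$ and their substitutes, that the aggregate replacement cost is at most $O(\epsilon)\cdot\cost(\sets^*)$, so that $\cost(\sets^*_l)\le(1+O(\epsilon))\cost(\sets^*)$.

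For the tail ($j\ge 1$), I would use the half-window buffer guaranteed by the hypothesis $k_0\le(l+1)K$. Since $s_0\in\sets_{l,\alg}$ at level $k_0$ satisfies $|\cov_l(s_0)|\ge\beta^{k_0}\cost(s_0)$ (by \Cref{inv-scale}(2)), and these alive elements all lie in $\univ_l$ and must be covered by $\sets^*$, a counting/dual argument combined with \Cref{inv-scale}(1) on $\sets_l$ gives the lower bound $\cost(\sets^*)=\Omega(\cost(s_0))$. On the other hand, every set in $\sets_{l+2j}$ has cost at most $n\beta^{-(l+2j)K+1}$ and $|\sets^*_{l+2j}|\le|\univ_{l+2j}|\le n$, so $\cost(\sets^*_{l+2j})\le n^2\beta^{-(l+2j)K+1}$. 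Using $K=\lceil 10\log_\beta n\rceil$ and summing the resulting geometric series in $j$ yields $\sum_{j\ge 1}\cost(\sets^*_{l+2j})\le n^{-\Omega(1)}\cdot\cost(\sets^*)\le O(\epsilon)\cdot\cost(\sets^*)$, which is absorbed into the $(1+O(\epsilon))$ factor after rescaling $\epsilon$.

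The main obstacle is the $j=0$ replacement step: a naive per-$s^*$ replacement inflates the cost of each over-expensive $s^*$ by up to a factor of $|s^*\cap\univ_l|\le n$. Bringing this blow-up down to $1+O(\epsilon)$ requires an \emph{aggregate} accounting that leverages the full $\beta^K\ge n^{10}$ cost-scale separation between window $l$ and the strictly lower window $l^*$ hosting $s^*$, rather than charging each such $s^*$ individually.
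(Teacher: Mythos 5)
Your plan breaks at the $j=0$ replacement step, and the break is not repairable by a better charging scheme. The cost separation you invoke there does not exist: a set $s^*\in\sets^*\setminus\sets_l$ containing an element of $\univ_l$ is excluded from $\sets_l$ \emph{from below}, i.e.\ $\tlev(s^*)\le lK-1$, which only means $\cost(s^*)\ge n\beta^{1-lK}$, while a substitute $s_e\in\sets_l$ merely satisfies $\cost(s_e)<n\beta^{1-lK}$; the ratio can be as small as $\beta$, not $\beta^K$. The factor-$\beta^K$ separation holds only towards the windows \emph{above} $\sets_l$ (sets with $\tlev\ge(l+2)K$), which is what makes the tail $j\ge 1$ negligible; it gives nothing against the expensive sets below the window. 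Concretely, take $n$ elements, each contained in a private singleton of cost $n\beta^{-lK}$ (so $\tlev=lK$, the singleton lies in $\sets_l$ and is the element's cheapest set, hence $\univ_l=\univ$), and all contained in one set $s^*$ of cost $n\beta^{1-lK}$ (so $\tlev(s^*)=lK-1$ and $s^*\notin\sets_l$). Every cover of $\univ_l$ by sets of $\sets_l$ costs $n\cdot n\beta^{-lK}$, whereas $\cost(\sets^*)=\cost(s^*)=\beta\, n\beta^{-lK}$, so the quantity you aim to bound satisfies $\cost(\sets^*_l)\ge(n/\beta)\cdot\cost(\sets^*)$; yet the lemma's hypothesis holds, since every covering set of window $l$ is a singleton with $\lev_l\le\tlev=lK\le(l+1)K$ (in fact $k_0\approx lK-\log_\beta n$). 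So the intermediate inequality $\sum_{j\ge0}\cost(\sets^*_{l+2j})\le(1+O(\epsilon))\cost(\sets^*)$ is simply false under the stated hypothesis: nothing prevents all of $\univ_l$ from sitting in the sub-$lK$ buffer of $I_l$ while the globally optimal sets live just below the window, and the obstacle you flagged as "the main obstacle" is genuine rather than a matter of aggregate accounting.

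For comparison, the paper does not construct a cover from $\sets_l$ at all: it applies \Cref{pre-toy-approx} and \Cref{toy-approx} to $(\univ_l,\sets_l)$ and writes $\cost(\sets_{l,\alg})\le(1+O(\epsilon))\ln n\cdot\cost(\sets^*_l)\le(1+O(\epsilon))\ln n\cdot\cost(\sets^*)$ directly, and it bounds the tail against the algorithm's own bottom-window cost, $\sum_{j>0}\cost(\sets_{l+2j,\alg})\le\epsilon\cdot\cost(s_0)\le\epsilon\cdot\cost(\sets_{l,\alg})$, rather than against a lower bound on $\cost(\sets^*)$. Your tail argument differs on this point and needs care: \Cref{inv-scale}(1) is available only for sets of $\sets_l$, so your claim $\cost(\sets^*)=\Omega(\cost(s_0))$ requires the extra observations that any set with $\tlev<lK$ already costs at least $n\beta^{1-lK}>\cost(s_0)$, that sets with $\tlev\ge(l+2)K$ cannot contain elements of $\univ_l$, and that dead elements are controlled via \Cref{inv-scale}(3) (active elements are alive); with these it can be salvaged. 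But the heart of the matter is the $j=0$ comparison between the window-restricted optimum and the global optimum; the instance above shows it cannot be derived from the window structure and the hypothesis $k_0\le(l+1)K$ alone, so neither your $\beta^K$-gap route nor any per-$s^*$ or aggregate replacement will supply it.
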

\begin{proof}
    We apply \Cref{pre-toy-approx} and \Cref{toy-approx} for the set system of $(\mathcal{U}_l, \mathcal{S}_l)$, where instead of summations from $k=0$ to $L-1$ we have summations from $\min\{I_l\} +1$ to $\max \{I_l\} - 1$, instead of $L$ we have $\max\{I_l\}$, and instead of $\lev(e)$ and $\plev(e)$ we have $\lev_l(e)$ and $\plev_l(e)$, respectively. As a result we directly get that 
$\cost(\sets_{l, \alg}) \le (1+O(\epsilon))\ln n \cdot \cost(\sets^*_l) \le (1+O(\epsilon))\ln n \cdot \cost(\sets^*)$, where $\sets^*_l$ denotes the optimal set cover for all elements in $\univ_l$. 
Consider any set $s_0\in \sets_{l, \alg}\cap S_{l, k_0}$. For any set $s\in \sets_{l+2j, \alg}$ and $j>0$, we have: 
	\begin{equation} \label{costs} \cost(s)\leq n \cdot \beta^{1-\tlev(s)}\leq n\cdot \beta^{1-(l+2)K} \leq n\cdot \beta^{1-k_0-K} \leq \frac{\epsilon}{(1+2\epsilon) n}\cdot\beta^{-k_0-1}\leq \frac{\epsilon\cdot\cost(s_0)}{(1+2\epsilon) n},
 \end{equation}
	
\noindent where the first inequality holds since  $\tlev(s) \leq \log_\beta (n / \cost(s)) + 1$
(by definition of $\tlev(s)$), the second holds since $\tlev(s) \geq (l+2j)K \ge (l+2)K$ for any $j>0$ by definition of $s$, the third follows from the initial assumption that $k_0 \leq (l+1)K$, the fourth by definition of $K$ and that $\epsilon > \frac{2}{n^8}$ is not too small, and the fifth holds since $s_0$ is covering at level $k_0$. 
Since at most $n$ elements are alive and at most $2\eps \cdot n$ elements may be dead (otherwise
\Cref{inv-scale}(3) is violated), and as each element (dead or alive) is assigned to at most one set in the set cover solution, it follows that the total number of sets in
$\sets_{l+2j, \alg}$ over all $j>0$
is bounded by $(1+2\epsilon) n$. By \Cref{costs}, we get $$\sum_{j>0}\cost(\sets_{l+2j, \alg}) ~\leq~
(1+2\eps)n \cdot \left(\frac{\epsilon}{(1+2\epsilon) n}\cdot\cost(s_0)\right)
~\le~
\epsilon\cdot\cost(s_0) ~\leq~ \epsilon\cdot \cost(\sets_{l, \alg}).$$ We conclude that
$$\sum_{j \geq 0}\cost(\sets_{l + 2j, \alg}) ~=~ \cost(\sets_{l, \alg}) + \sum_{j>0}\cost(\sets_{l+2j, \alg}) ~\leq~ (1+\epsilon) \cdot \cost(\sets_{l, \alg}) ~\leq~ (1+O(\epsilon))\ln n \cdot \cost(\sets^*),$$
\noindent and the lemma follows.
\end{proof}

\begin{lemma} \label{approxf}
$\min\{\cost(\sets_{\even, \alg}), \cost(\sets_{\odd, \alg}) \}\leq (1+O(\epsilon))\ln n \cdot \cost(\sets^*)$
\end{lemma}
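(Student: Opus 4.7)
The plan is to determine, based on the lowest nonempty windows of the two overlapping sequences, which partition satisfies the hypothesis of Lemma~\ref{approx}. First I would let $l_e$ (resp.\ $l_o$) be the smallest even (resp.\ odd) index for which $\sets_{l,\alg}$ is nonempty, and denote by $k^*_e, k^*_o$ the smallest covering levels used in those windows. As a preliminary observation, I would show $|l_e - l_o|\leq 1$: any $e\in \univ_{l_o}$ has its cheapest set $s_e^{\min}\in \sets_{l_o}$, and by Observation~\ref{ob1} this set also belongs to one of $\sets_{l_o\pm 1}$; one of these is even, giving $l_e \leq l_o+1$, and the symmetric argument yields $l_o \leq l_e+1$.

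If either $k^*_e\leq (l_e+1)K$ or $k^*_o\leq (l_o+1)K$, the bound is immediate: Lemma~\ref{approx} applies directly to the corresponding partition, since $l_e$ (resp.\ $l_o$) is the lowest nonempty even (resp.\ odd) window, so the sum on the left-hand side of Lemma~\ref{approx} exactly equals $\cost(\sets_{\even,\alg})$ (resp.\ $\cost(\sets_{\odd,\alg})$). The hard part will be the remaining case, in which both $k^*_e > (l_e+1)K$ and $k^*_o > (l_o+1)K$, so Lemma~\ref{approx} applies to neither partition; here I would bound the cost of the higher of the two partitions directly, exploiting that the invariants force the involved sets to be very cheap.

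Concretely, WLOG $l_o = l_e+1$ (the case $l_o = l_e-1$ is symmetric, yielding the bound on the even partition instead). Since every $s\in \sets_{l_o,\alg}$ satisfies $\lev_{l_o}(s)\geq k^*_o \geq (l_o+1)K+1$, Invariant~\ref{inv-scale}(2) forces $|\cov_{l_o}(s)|\geq \beta^{(l_o+1)K+1}\cost(s)$; summing and bounding $|\univ_{l_o}^+|\leq (1+2\epsilon)n$ via Invariant~\ref{inv-scale}(3) then yields $\cost(\sets_{l_o,\alg})\leq (1+2\epsilon)n\beta^{-(l_o+1)K-1}$. On the other hand, any $e\in \univ_{l_e}$ (nonempty by choice of $l_e$) has $\tlev(s_e^{\min})\leq (l_e+2)K-1 = (l_o+1)K-1$, so $\cost(\sets^*)\geq \cost(s_e^{\min})\geq n\beta^{1-(l_o+1)K}$; combining, $\cost(\sets_{l_o,\alg})\leq (1+2\epsilon)\beta^{-2}\cost(\sets^*)\leq \cost(\sets^*)$. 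Finally, the contribution of higher odd windows $l_o+2j$ for $j\geq 1$ is negligible, since each set there has cost at most $n\beta^{1-(l_o+2)K}$ and there are at most $(1+2\epsilon)n$ of them in total, giving $O(n^{-9})\cost(\sets^*)$ thanks to $K=\lceil 10\log_\beta n\rceil$. Hence $\cost(\sets_{\odd,\alg})\leq (1+O(\epsilon))\cost(\sets^*)\leq (1+O(\epsilon))\ln n\cdot \cost(\sets^*)$, as desired.
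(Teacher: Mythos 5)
Your proposal is correct, but it takes a genuinely different route from the paper. The paper's proof has no case analysis: it takes $l'$ to be the globally lowest index with $\univ_{l'}\neq\emptyset$, picks the element of $\univ_{l'}\subseteq\univ_{l'+1}$ with minimal $\lev_{l'+1}$, and argues that its covering set in instance $l'+1$ also lies in $\sets_{l'}$, so its top level (hence the minimal covering level $k_0$ of window $l'+1$) is below $(l'+2)K$; thus \Cref{approx} always applies to the parity of $l'+1$, and your ``hard case'' never has to be confronted. You instead split according to whether the lowest nonempty window of each parity satisfies the hypothesis of \Cref{approx}, and in the complementary case you bound the higher partition directly: \Cref{inv-scale}(2) plus the $(1+2\eps)n$ bound on $|\univ_{l_o}^{+}|$ forces all of $\sets_{l_o,\alg}$ to cost at most about $n\beta^{-(l_o+1)K}$, while the cheapest set of an element in the lower window gives $\cost(\sets^*)\geq n\beta^{1-(l_o+1)K}$, and the residual windows are negligible by the choice of $K$ --- this yields an $O(1)\cdot\cost(\sets^*)$ bound in that case, which is more than enough. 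What your route buys is robustness: you never need the paper's step that the covering set of the minimal-level element belongs to the lower window (a claim whose justification hinges on ``cheapest set'' versus ``some set'' containing the element); you only ever invoke cheapest sets to lower-bound $\opt$, which is airtight. What it costs is the extra case plus the preliminary claim $|l_e-l_o|\leq 1$, and one small bridging fact you use twice but do not prove: you define $l_e,l_o$ by $\sets_{l,\alg}\neq\emptyset$ yet argue with elements of $\univ_{l_e},\univ_{l_o}$ (``nonempty by choice of $l_e$''). This is fine, but needs one line: if $\univ_l=\emptyset$ then $A_{l,k}=\emptyset$ for all $k$, so \Cref{inv-scale}(3) forces $P_{l,k}=\emptyset$, hence $\univ_l^{+}=\emptyset$, and then \Cref{inv-scale}(2) forces $\sets_{l,\alg}=\emptyset$; thus $\sets_{l,\alg}\neq\emptyset$ is equivalent to $\univ_l\neq\emptyset$, which both legitimizes your definitions and makes the sum starting at $l_e$ (resp.\ $l_o$) equal to $\cost(\sets_{\even,\alg})$ (resp.\ $\cost(\sets_{\odd,\alg})$). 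With that line added, your argument is a valid, and arguably more self-contained, alternative to the paper's proof.
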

\begin{proof}
Let $l'$ be the smallest value such that $\mathcal{U}_{l'} \neq \emptyset$. Pick an arbitrary element $e$ in both $\mathcal{U}_{l'}$ and $\mathcal{U}_{l'+1}$ with the lowest $\lev_{l'+1}(e)$ value; such an element exists by definition of $l'$ and by Observation \ref{ob1}. Write $k_1 = \lev_{l'+1}(e)$, and denote the set that covers $e$ at level $k_1$ by $s_1 \in \mathcal{S}_{l'+1,\alg}$. By Observation \ref{ob1}, $s_1$ is also in either $\mathcal{S}_{l'}$ or $\mathcal{S}_{l'+2}$. If $s_1$ were in $\mathcal{S}_{l'+2}$, then since $s_1$ contains $e$, by definition $e$ would be in $\mathcal{U}_{l'+2}$, but it is not. Thus $s_1 \in \mathcal{S}_{l'},\mathcal{S}_{l'+1}$. Since $s_1 \in \mathcal{S}_{l'}$, by definition of $\mathcal{S}_{l'}$ we know that $\tlev(s_1) < (l'+2)K$. Moreover, $k_1 \leq \tlev(s_1)$ since $s_1$ cannot cover at a level higher than its top level. Thus, we get that $k_1 < (l'+2)K$. 

Now, denote by $k_0 \in I_{l'+1}$ the smallest index such that there exists a set in $\mathcal{S}_{l'+1}$ covering at level $k_0$. We have $k_0 \leq k_1$ by definition, thus $k_0 < (l'+2)K$. Plugging this in Lemma \ref{approx} (with our $l'+1$ as $l$ in the lemma), we obtain:
$$\sum_{j \geq 0}\cost(\sets_{l' +1 + 2j, \alg})\leq (1+O(\epsilon))\ln n \cdot \cost(\sets^*).$$
 
 Since for $j<0$ there are no elements in $\mathcal{U}_{l'+1+2j}$,
 Observation \ref{ob2} implies that the set $\bigcup_{j\geq 0}\sets_{l'+1+2j, \alg}$ is a valid set cover for all elements in $\mathcal{U}$. 
 Note also that $\bigcup_{j\geq 0}\sets_{l'+1+2j, \alg}$ 
 is equal to either $\sets_{\even, \alg}$ or $\sets_{\odd, \alg}$, depending on whether $l'$ is odd or even, respectively. Therefore, 
$$\min\{\cost(\sets_{\even, \alg}), \cost(\sets_{\odd, \alg}) \} ~\leq~ \sum_{j \geq 0}\cost(\sets_{l' +1 + 2j, \alg}) ~\leq~ (1+O(\epsilon))\ln n \cdot \cost(\sets^*),$$
which completes the proof. 
\end{proof}

\subsection{Algorithm Description}
For each index $l\geq -1$, 
we will  maintain
the subset of sets $\sets_l\in \sets$ and the subset of elements $\univ_l\subseteq \univ$, to coincide with \Cref{window};
we will also apply the algorithm provided by \Cref{warmup},
as a black-box, on the set system $(\univ_l, \sets_l)$. 
However,  we cannot afford to run the black-box algorithm on all set systems following every update.
Instead, since each element may belong to at most two set systems, element updates are handled in the following manner:
\begin{itemize}[leftmargin=*]
	\item \textbf{Insertion.} When an element $e$ is inserted, enumerate all sets containing $e$ to find the cheapest one, denoted $s$. Letting $l$ denote the index such that $s \in \mathcal{S}_{l},\mathcal{S}_{l+1}$, we add $e$ to both $\mathcal{U}_{l}$ and $\mathcal{U}_{l+1}$, and run the element insertion algorithm on the two set systems $(\univ_{l}, \sets_{l})$ and $(\univ_{l+1}, \sets_{l+1})$.
	
	\item \textbf{Deletion.} When an element $e \in \mathcal{U}_{l},\mathcal{U}_{l+1}$ is deleted, remove $e$ from $\mathcal{U}_{l}$ and $\mathcal{U}_{l+1}$, and run the element deletion algorithm on the two set systems $(\univ_{l}, \sets_{l})$ and $(\univ_{l+1}, \sets_{l+1})$.
\end{itemize}

\noindent Upon each update step, if the inserted/deleted element belongs to $\mathcal{U}_{l}$ and $\mathcal{U}_{l+1}$, then we only run the reset operations for these two systems. By \Cref{warmup}, the worst-case update time of our algorithm is   $O\brac{\frac{f\log n}{\epsilon^2}}$. Also, \Cref{warmup} implies that all conditions of \Cref{inv-scale} are preserved for every set system;
recalling that the solution of  smaller total cost among $\sets_{\even, \alg}$ and $ \sets_{\odd, \alg}$ is the output solution, \Cref{approxf} implies that the approximation factor is in check. This concludes the proof of \Cref{wc}.

\section{Extension to the Low-Frequency Regime} \label{primdu}
In this section we present a dynamic set cover algorithm with an improved worst-case update time, in the low-frequency regime of $f = O(\log n)$. We will mostly follow the known algorithm with worst-case update time $O\brac{\frac{f\log^2(Cn)}{\epsilon^3}}$ \cite{bhattacharya2021dynamic,bhattacharya2019new}, and focus on the adjustments that we make, omitting most of the details that remain the same.


Before unfolding the technical details, let us explain on a high level why we are able to shave the extra $\log_\beta n$ factor in the time bound, where recall that $\beta = 1 + \epsilon$. Similar to the high-frequency regime, the algorithm of \cite{bhattacharya2021dynamic} also assigns elements and sets to levels at most $O(\log_\beta(Cn))$. Moreover, as before, for each level $k$, there is a $\reset(k)$  instance running on a chunk of local memory in the background, which is {\em disjoint from} and {\em independent of} the hierarchical data structure on the foreground. 
During the execution of $\reset(k)$, it performs a water-filling primal-dual algorithm while also handling element updates from the adversary, in a  similar way that we have done with the greedy set cover algorithm.

When the execution of $\reset(k)$ terminates, it switches its local memory to the foreground and aborts all other instances $\reset(i), \forall i<k$. To ensure that all the aborted instances $\reset(i)$ will have a local copy of the current data structure up to level $i$, besides executing the water-filling procedures, 
the approach of \cite{bhattacharya2021dynamic} is that
the instance $\reset(k)$ will also be responsible for initializing an independent copy of the data structures up to level $i$ for instance $\reset(i)$, for all $i < k$, right after $\reset(i)$ is aborted by $\reset(k)$. This is the main time bottleneck of the algorithm of \cite{bhattacharya2021dynamic}: 
as the $\reset(k)$ instance prepares the initial memory contents for all other instances below it, this incurs a running time of at least $\sum_{i=0}^k O(i) = O(k^2)= O(\log^2_\beta (C n))$.

To save one extra $\log_\beta (Cn)$ factor in the runtime, we do the same as our algorithm in the high-frequency regime. In our algorithm, $\reset(k)$ will not be responsible for initializing the memory contents of $\reset(i)$, for any $i<k$. Instead, each instance $\reset(i)$ will initialize its own memory in the background by copying data structures in the foreground up to level $i$, and only when the initialization phase is done, should it begin with the water-filling procedure. By doing so, we can obtain an improved time bound of $O\brac{\frac{f\log(Cn)}{\epsilon^2}}$. In the end, to remove runtime dependency on the aspect ratio $C$, we will apply the same black-box reduction as we did in the high-frequency regime, refer to \Cref{removec} for details.

As mentioned above, our first goal will be to prove the following Lemma:

\begin{lemma}\label{extend-logC}
	For any set system $(\univ, \sets)$ with set cost range $[1/C, 1]$ that undergoes a sequence of element insertions and deletions, where the frequency is always bounded by $f$, and for any $\epsilon \in (0, 1)$, there is a dynamic algorithm that maintains a $((1+\epsilon)f)$-approximate minimum set cover in $O\brac{\frac{f\log (Cn)}{\epsilon^2}}$ deterministic worst-case update time.
\end{lemma}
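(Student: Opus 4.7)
The plan is to adapt the fully global primal-dual amortized algorithm of \cite{bhattacharya2019new} and deamortize it using the same three-phase reset architecture we developed in \Cref{threes}, while crucially replacing the quadratic initialization scheme of \cite{bhattacharya2021dynamic} with the per-instance self-initialization scheme from \Cref{sec:basAlg}. Concretely, each set and element will be assigned a level in $[-1, L]$ with $L = O(\log_\beta(Cn))$. Element weights are defined as $\omega(e) = \beta^{-\lev(e)}$ and $\omega(s) = \sum_{e \in s} \omega(e)$; the primal-dual approximation invariant to be maintained is the standard relaxed complementary-slackness condition of \cite{bhattacharya2019new}, namely $\omega(s) \le \cost(s)$ for every $s$ (tight within a $\beta$-factor for sets in $\sets_\alg$), plus a global third condition bounding passive elements by $2\eps$ times active elements, exactly as in \Cref{inv}(3). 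Passive/active levels are defined as in \Cref{actpass}, and the approximation analysis carries over via the $f$-bound $\sum_{e \in \univ^+}(\beta^{-\lev(e)} - \beta^{-\plev(e)}) \le (1+O(\eps)) f \cdot \cost(\sets^*)$, mirroring the proof of \Cref{pre-toy-approx} but using the fact that each element is counted in at most $f$ sets of $\sets^*$ instead of $\ln n$.

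First, I would set up the basic data structures exactly as in \Cref{sec:basAlg}, with linked lists $S_i, E_i$, level arrays $\lev[\cdot][\cdot]$ that store pointers to background copies, and the same foreground insertion/deletion bookkeeping. For each level $k$, $\reset(k)$ runs in the background; its three phases are (I) initialization which copies $\bigcup_{i \le k}(E_i \cup S_i)$ into local memory $\univ^{(k)}, \sets^{(k)}$ and sets $\plev^{(k)}(e) \leftarrow \max\{\plev(e), k+1\}$; (II) the water-filling primal-dual procedure of \cite{bhattacharya2019new} (simulating a Khuller-Pless-Sudan style dual-ascent over levels $k+1$ down to $0$, with sets frozen once their dual $\omega(s)$ reaches $\cost(s)$), truncating so that no set rises above level $k+1$; and (III) termination, which in a single update step switches the pointers to $S_i^{(k)}, E_i^{(k)}$ for all $i \le k$ to the foreground and aborts all $\reset(i), i < k$. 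Adversarial updates during phases I and II are forwarded to the background in the same way as in \Cref{del}--\Cref{ins}.

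The key step, and the source of the improved $\eps^{-2}$ dependence, is the self-initialization: instead of having the terminating $\reset(k)$ also initialize local copies for all $\reset(i)$, $i<k$ (which costs $O(L^2)$ and is the bottleneck in \cite{bhattacharya2021dynamic}), each aborted $\reset(i)$ starts fresh by copying the foreground in its own phase~I at a pace of $O(f/\eps)$ planned steps per update, as in \Cref{initsec}. Consequently foreground level queries $\lev(s), \lev(e)$ are not $O(1)$; they cost $O(L) = O(\log_\beta(Cn))$ via the authentication walk through $\lev[\cdot][k]$, $k = L, L-1, \ldots, 0$, checking which background copy is currently attached to the foreground (\Cref{consistency} transfers verbatim). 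Since each update does $O(f)$ foreground queries plus $O(f/\eps)$ planned steps for each of the $L$ reset instances, the total worst-case update time is $O(fL) + O(fL/\eps) = O(f \log(Cn)/\eps^2)$.

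The main obstacle will be verifying that \Cref{inv-scale}-type conditions survive the fully concurrent reset scheme: in particular, the analog of \Cref{inv3} requires that after a $\reset(k)$ terminates, the passive-to-active ratio at every level $\le k$ is at most $\eps$, and the analog of \Cref{larger} requires that higher levels are unaffected. For the primal-dual setting these follow the same counting argument as in \Cref{claim4inv} and \Cref{inv3}, once we observe that (a) the water-filling procedure when run on the local copy respects $\omega(s) \le \cost(s)$ by construction, (b) truncating at level $k+1$ ensures insertions and deletions of elements with $\plev > k+1$ do not spuriously create passive mass on higher levels, and (c) the pacing of $O(f/\eps)$ steps per update guarantees that $\reset(k)$ terminates within $\frac{\eps}{2}|U_0|$ updates after its start, exactly as in \Cref{pre4inv}. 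Combined with \Cref{approxf}-style accounting (using $f$ in place of $\ln n$ wherever the approximation ratio enters), this yields the $(1+\eps)f$-approximation at the claimed worst-case time bound, completing the proof of \Cref{extend-logC}.
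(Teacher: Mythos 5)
Your proposal matches the paper on the part that carries the real weight of this lemma: the deamortization. Like the paper, you run one background $\reset(k)$ instance per level, replace the scheme of \cite{bhattacharya2021dynamic} (where the terminating instance re-initializes all lower instances, costing $O(\log^2_\beta(Cn))$) with per-instance self-initialization from the foreground, accept that foreground level queries now cost $O(L)$ via the authentication walk through $\lev[\cdot][k]$, and pace $O(f/\eps)$ planned steps per level per update to get $O(fL/\eps)=O(f\log(Cn)/\eps^2)$ worst-case; the consistency and pacing arguments (\Cref{consistency}, \Cref{pre4inv}, \Cref{inv3}-style counting) indeed transfer, and the paper does exactly this.

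Where you diverge is in the approximation bookkeeping, and there your route is heavier than necessary and has two soft spots. The paper deliberately drops the level-parameterized passive levels in the low-frequency regime: elements are classified binarily as active/passive (plus dead), the invariant maintained is \Cref{lastinv} ($\wts(s)\le\cost(s)$ for all sets, tightness of all sets at level $\ge 1$, and $|P_k|\le 2\eps|A_k|$), the output is the family of tight sets, and the $(1+O(\eps))f$ approximation is imported wholesale from Theorem~4.7 of \cite{bhattacharya2019new}, with the water-filling phase taken as the black box \emph{EfficientRebuild}$(k)$ of \cite{bhattacharya2021dynamic} together with its correctness lemmas. You instead propose to keep $\plev(\cdot)$ and the $k$-active/$k$-passive machinery of \Cref{actpass} and to re-derive an $f$-analogue of \Cref{pre-toy-approx}. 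That can likely be made to work, but note: (i) you cannot set $\wts(e)=\beta^{-\lev(e)}$ for \emph{all} elements --- under lazy insertion handling a newly inserted element covered by an already tight set must receive weight $0$ (and passive elements in general only satisfy $\wts(e)\le\beta^{-\lev(e)}$), otherwise $\wts(s)\le\cost(s)$ is immediately violated; (ii) the factor $f$ in the final bound does not come from elements being counted in at most $f$ sets of $\sets^*$ (dual feasibility against $\sets^*$ gives no $f$), but from each element lying in at most $f$ \emph{tight} sets of the algorithm's solution when charging $\cost(\sets_\alg)\le\beta\sum_{s\ \mathrm{tight}}\wts(s)\le\beta f\sum_e\wts(e)$; and (iii) in the primal-dual setting there is no per-level analogue of \Cref{inv}(1), so the summation in \Cref{pre-toy-approx} cannot be mirrored literally --- the single constraint $\wts(s)\le\cost(s)$ must play its role, which is essentially what the cited Theorem~4.7 already encapsulates. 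None of these is a fatal gap, but the paper's choice of the binary classification and black-box citations is what lets the proof of this lemma stay short; your version would require re-proving those statements and reconciling the refined $\plev$ bookkeeping with the water-filling procedure's guarantees.
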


\noindent To do so, we will follow the general lines of \Cref{threes}, highlighting two major points throughout. The first, is the differences between our low-frequency algorithm presented next and the high-frequency algorithm presented in \Cref{threes}, regarding certain definitions, invariants, properties, etc. The second is the differences between our low-frequency algorithm presented next and the algorithm presented in \cite{bhattacharya2019new} and \cite{bhattacharya2021dynamic}. We mention that there are some differences between \cite{bhattacharya2019new} and \cite{bhattacharya2021dynamic} regarding specific invariants/definitions/etc., and we mainly chose to compare to \cite{bhattacharya2021dynamic}, since the aim of that paper and this section is essentially the same, to deamortize the low-frequency amortized algorithm presented in \cite{bhattacharya2019new}. For the sake of brevity we will omit several details that remain the same from \cite{bhattacharya2019new}/\cite{bhattacharya2021dynamic} or \Cref{threes}, and refer to the relevant lemmas/properties/invariants/etc. when needed.

\subsection{Preliminaries and Invariants}

Each set $s\in \sets$ is assigned a level value $\lev(s)\in [0, \ceil{\log_\beta(Cn)}+1]$; here the difference from the greedy set cover is that we start from level $0$ rather than $-1$. For each element $e$, its level is given by $\lev(e) = \max\{\lev(s) : s\in \sets , e \in s\}$. Following \cite{bhattacharya2019new,bhattacharya2021dynamic}, each element will be classified as \emph{alive}, or \emph{dead}, and further classify alive elements as \emph{active} or \emph{passive}. An element is dead when it has been deleted from $\univ$ by the adversary, but is still lingering in the system due to the algorithm. As before, $\univ^+\supseteq \univ$ denotes the set of all elements including dead ones, and $\univ$ denotes the set of existing elements from the perspective of the adversary. For each level $k$, define $E_k = \{e\in \univ^+\mid \lev(e) = k\}, A_k = \{e\in \univ\text{ is active}\mid \lev(e)\leq k \}, P_k = \{e\in \univ^+\text{ is not active}\mid \lev(e)\leq k \}$. Note that $P_k$ contains both passive and dead elements, as in \Cref{threes} and as opposed to \cite{bhattacharya2021dynamic}, but $A_k$ does not, as opposed to \Cref{threes}. Moreover, the partition now of $\univ$ to passive and active is binary, in contrast to what we had before where passiveness is parameterized by levels, which will simplify things.

As before, during the algorithm, each set $E_k$ is maintained as a linked list, and all pointers to non-empty sets $E_k$ are stored in a doubly-linked list. Moreover, sets $A_k, P_k$ will not be maintained explicitly in our algorithm, and they are only used for the analysis of approximation ratio, just like in \Cref{threes}, and in contrast to \cite{bhattacharya2021dynamic}.

Similarly to \cite{bhattacharya2021dynamic}, each active element $e$ is assigned a weight $\wts(e) = \beta^{-\lev(e)}$, and each passive element is assigned a weight $\wts(e)\leq \beta^{-\lev(e)}$. From the moment an element becomes dead, its weight in the system will not change until the algorithm removes it from the system. The weight of a set $s$ is given by $\wts(s) = \sum_{e\in s\cap\univ^+}\wts(e)$. Moreover:


\begin{definition}
	A set $s\in\sets$ is called \emph{tight} if $\wts(s) > \frac{\cost(s)}{\beta}$; otherwise it is called \emph{slack}.
\end{definition}

The algorithm will maintain the following invariant; 
refer to \Cref{inv} for comparison.


\begin{invariant}{\ } \label{lastinv}
    \begin{enumerate}[(1),leftmargin=*]
	\item For each $s\in \sets$, $\wts(s)\leq \cost(s)$.
        \item All sets $s$ for which $\lev(s) \ge 1$ are tight.
	\item $|P_k|\leq 2\epsilon\cdot |A_k|$ for each $k\geq 0$.
    \end{enumerate}
\end{invariant}
The above invariants underlie a fully global approach: for the first invariant, element insertions will be handled lazily and we will exploit the fact that passive elements have lower weights with respect to their levels; for the second invariant, element deletions will also be handled lazily and set tightness are preserved by dead weights.

The algorithm will make sure that every element is contained in at least one tight set, and the output set cover $\sets_\alg$ of our algorithm will be the set of all tight sets in $\sets$. Assuming \Cref{lastinv} holds, the approximation ratio is guaranteed by the following statement.
\begin{lemma}[Theorem $4.7$ in \cite{bhattacharya2019new}] \label{frombhatt}
	All tight sets form a $(1+O(\epsilon))f$-approximate minimum set cover of $(\univ, \sets)$.
\end{lemma}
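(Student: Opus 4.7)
The plan is to run the standard primal-dual / LP-duality argument for set cover, with an additional step to account for the contribution of passive and dead elements, which is precisely where Invariant~\ref{lastinv}(3) will come into play.

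First I would observe that the collection $\{\wts(e)\}_{e \in \univ}$ forms a feasible dual for the set cover LP: by Invariant~\ref{lastinv}(1), for every $s \in \sets$,
$$\sum_{e \in s \cap \univ} \wts(e) \leq \sum_{e \in s \cap \univ^+} \wts(e) = \wts(s) \leq \cost(s).$$
Hence, by weak LP duality, $\sum_{e \in \univ} \wts(e) \leq \opt_{\text{LP}} \leq \cost(\sets^*)$, where $\sets^*$ is the optimum set cover of $\univ$.

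Next I would upper-bound the cost of the maintained solution. Since $\sets_\alg$ consists of the tight sets (all at level $\geq 1$ by construction), Invariant~\ref{lastinv}(2) gives $\cost(s) < \beta \cdot \wts(s)$ for each $s \in \sets_\alg$, and so
$$\cost(\sets_\alg) < \beta \sum_{s \in \sets_\alg} \wts(s) = \beta \sum_{s \in \sets_\alg}\sum_{e \in s \cap \univ^+} \wts(e) \leq \beta f \sum_{e \in \univ^+} \wts(e),$$
where the last inequality uses that each element belongs to at most $f$ sets (so certainly to at most $f$ tight sets).

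The main obstacle is showing that the sum over $\univ^+$ is only a $(1+O(\epsilon))$ factor larger than the sum over $\univ$, since the dual feasibility bound only applies to alive elements. For this I would use Invariant~\ref{lastinv}(3) by a summation-by-parts argument on levels. Using that $\wts(e) \leq \beta^{-\lev(e)}$ for every $e \in \univ^+$, and that $\wts(e) = \beta^{-\lev(e)}$ for every active $e$,
$$\sum_{e \in \univ^+}\wts(e) \leq \sum_{k \geq 0} \beta^{-k}\bigl(|E_k \cap (\text{active})| + |E_k \setminus (\text{active})|\bigr) = \sum_{k \geq 0}(\beta^{-k} - \beta^{-k-1})\bigl(|A_k|+|P_k|\bigr),$$
where I telescoped the contributions using that $A_k \cup P_k$ is the set of elements at level $\leq k$ and that the top level is finite. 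Applying $|P_k| \leq 2\epsilon\cdot |A_k|$ termwise,
$$\sum_{e \in \univ^+}\wts(e) \leq (1+2\epsilon) \sum_{k \geq 0}(\beta^{-k}-\beta^{-k-1})|A_k| = (1+2\epsilon) \sum_{e \in \univ,\ e\ \text{active}} \beta^{-\lev(e)} \leq (1+2\epsilon)\sum_{e \in \univ}\wts(e).$$

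Finally I would stitch everything together:
$$\cost(\sets_\alg) < \beta f (1+2\epsilon) \sum_{e \in \univ} \wts(e) \leq \beta f(1+2\epsilon)\cdot \cost(\sets^*) = (1+O(\epsilon))f\cdot \cost(\sets^*).$$
I also need to verify that $\sets_\alg$ is indeed a valid cover, which follows from the algorithmic guarantee stated in the excerpt that every element lies in at least one tight set (and hence in $\sets_\alg$). The only subtle point in the above calculation is the telescoping step involving $\univ^+$, where I need to be careful that dead elements retain well-defined levels and weights and that passive/dead elements are jointly accounted for in $P_k$; modulo this bookkeeping, the argument reduces to standard primal-dual reasoning scaled by the $(1+2\epsilon)$ slack coming from Invariant~\ref{lastinv}(3).
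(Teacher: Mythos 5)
Your situation here is special: the paper does not prove \Cref{frombhatt} at all --- it is imported verbatim as Theorem 4.7 of \cite{bhattacharya2019new} --- so there is no in-paper proof to match, and your blind reconstruction is essentially the standard primal-dual argument that underlies the cited theorem. The argument is correct: the weights of alive elements form a feasible dual by \Cref{lastinv}(1) (dead weights being nonnegative), every set in the solution satisfies $\cost(s) < \beta\cdot\wts(s)$, the frequency bound supplies the factor $f$, and the Abel-summation step converts \Cref{lastinv}(3) into the $(1+2\epsilon)$ slack that charges passive and dead elements against active ones --- exactly the role that invariant plays in \cite{bhattacharya2019new}. Two small points should be tightened. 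First, your justification of $\cost(s) < \beta\cdot\wts(s)$ via ``all tight sets are at level $\geq 1$'' together with \Cref{lastinv}(2) is backwards: that invariant says level $\geq 1$ implies tight, and $\sets_\alg$ may well contain tight sets at level $0$; the inequality you need is simply the definition of tightness ($\wts(s) > \cost(s)/\beta$), so nothing breaks, but the appeal to the invariant should be dropped. Second, the telescoping step uses $\wts(e)\leq \beta^{-\lev(e)}$ for \emph{every} $e\in\univ^+$, including dead elements, which the paper states only for active and passive ones; it does hold --- a dead element's weight is frozen at death, at which moment it was at most $\beta^{-\lev(e)}$, and its level cannot decrease before a reset purges it, since a $\reset(k)$ either removes a dead element at level $\leq k$ or leaves untouched the level-$(>k)$ set attaining its level --- but this is the one piece of bookkeeping your proof should make explicit rather than defer, since it is precisely where the lazy handling of deletions could otherwise bite.
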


\subsection{Algorithm Description and Update Time Analysis} \label{updtimepd}
We follow our deamortization approach from the previous sections; at a high-level, the resulting  deamortized algorithm is of similar flavor to the one in the high-frequency regime, but there are of course significant differences, starting with the fact that the basic subroutine here for the reset procedure is the water-filling primal-dual algorithm rather than the greedy algorithm. For each level $k$, there is an instance of $\reset(k)$ that operates on an independent chunk of memory, disjoint from the foreground memory where the data structure is stored, and computes a partial solution up to level $k$ using a water-filling algorithm. As in the deamortization for the high-frequency regime, we distinguish between operations on the foreground and the background.

\subsubsection{Foreground}
Element deletions and insertions will be handled in the foreground as follows, similarly to \cite{bhattacharya2019new} and \cite{bhattacharya2021dynamic}.
\begin{itemize}[leftmargin=*]
	\item \textbf{Deletions in the Foreground.} When an element $e\in \univ$ gets deleted by the adversary, mark $e$ as dead, and for each $k\geq \lev(e)$, feed the deletion to $\reset(k)$ which is running in the background.
    
    
 
	\item \textbf{Insertions in the Foreground.} When a new element $e$ is inserted by the adversary, go over all sets $s\ni e$ to compute $\lev(e) = \max_{s\in \sets}\{\lev(s) \}$. If exists a set $s\ni e$ that is tight then assign $\wts(e) \leftarrow 0$. Otherwise, assign $\wts(e) \leftarrow \min_{s \ni e}\{\cost(s) - \wts(s)\}$, which ensures us that each element has at least one tight set containing it. In both cases $e$ becomes passive. Finally, for each $s\ni e$, update the set weight $\wts(s)\leftarrow \wts(s) + \wts(e)$. After that, feed this new insertion to all instances of $\reset(k)$ for $k\geq \lev(e)$.
 
	
	
	\item \textbf{Termination of $\reset(\cdot)$ Instances.} Upon any element update (deletion or insertion), go over all levels $0\leq k\leq \ceil{\log_\beta(Cn)}+1$ and check if any instance $\reset(k)$ has just terminated right after the update. If so, take the largest such level $k$, and switch its memory to the foreground; we will describe how a memory switch is done later on. Then, abort all instances of $\reset(i), \forall 0\leq i < k$.
	
	\item \textbf{Initiating $\reset(\cdot)$ Instances.} Similarly to \Cref{threes}, upon any element update (deletion or insertion), go over all levels $0\leq k\leq L$ and check if there is currently an instance $\reset(k)$. Denote by $k_1,k_2,k_3, \ldots$ the levels that do not have such an instance, where $k_1 < k_2 < k_3 < \ldots$. Next, we want to partition all levels $k_i$ into \emph{short levels} and \emph{non-short levels}. All of the short levels will be lower than the non-short levels, meaning exists $i$ such that $k_{i'}$ is a short level for any $i'<i$ and a non-short level for any $i' \geq i$. In a nutshell, we will be able to execute a short level reset in a single update step, since the number of elements participating in the reset is small enough. Recall that upon termination of a reset we abort all instances of lower level resets, thus there is no reason to run all short level resets, only the highest one. Regarding the non-short levels, we initiate a reset to each and every one of them. To find the highest short level given $k_1,k_2,k_3, \ldots$ we do as follows. First, count all the first $\frac{L}{f}$ elements, from level $0$ upwards. Say that the $\frac{L}{f}$-th element is at level $j$. Thus, we know that $|\bigcup_{i=0}^{j'} E_i| < \frac{L}{f}$ for any $0 \leq j'<j$. Define $i$ to be the highest such that $k_i < j$. If no such $i$ exists then there are no short levels, otherwise $k_i$ is the highest short level, and we initiate the resets for levels $k_i,k_{i+1}, \ldots$, where again $k_i$ is a short level and the rest are non-short levels.
\end{itemize}

\subsubsection{Foreground and Background Data Structures} 

Similar to our previous algorithm, for each level $k\in [0, \ceil{\log_\beta(Cn)}+1]$, any instance of $\reset(k)$ that operates (in the background) maintains a partial copy of the foreground data structures in the background. Specifically, it maintains the following data structures.
\begin{enumerate}[(1),leftmargin=*]
	\item Maintain subsets of elements $\univ^{(k)}, \univ^{(k)+}\subseteq \univ^+$, and for each element $e\in \univ^{(k)+}$, maintain its level $\lev^{(k)}(e)$ as well as its weight $\wts^{(k)}(e)$. In addition, maintain a subset of sets $\sets^{(k)}\subseteq \sets$, and for each set $s\in \sets^{(k)}$, maintain its level $\lev^{(k)}(s)$ as well its weight $\wts^{(k)}(s)$.
	
	\item For each level $i\in [0, k+1]$, let $S^{(k)}_i = \{s\mid \lev^{(k)}(s) = i\}$, $E^{(k)}_i = \{e\mid \lev^{(k)}(e) = i\}$.
\end{enumerate}

\noindent Similarly to the previous algorithm, we maintain the following data structures that link between the foreground and background, which help improve the update time of \cite{bhattacharya2021dynamic}. 
\begin{framed}
	\noindent \textbf{Data Structures that Link between the Foreground and Background.} 
	\begin{enumerate}[(a)]

        \item For each $k$, we have pointers to the sets $S^{(k)}_i$ and $E^{(k)}_i$, stored in two arrays of size $L+2$ and $L+1$, respectively (an entry for every $i \in [-1,L]$ and $i \in [0,L]$, respectively). In addition, the head of the list $S^{(k)}_i$ and $E^{(k)}_i$ keeps a Boolean value which indicates whether it is in the foreground or not.
        
		\item We store an array in the foreground $\lev[\cdot]$ indexed by $s\in \sets$ and $e\in \univ$. So the size of this array should be $O(|\sets| + |\univ|)$. Here we have assumed that sets and elements have unique identifiers from a small integer universe (if the sets and elements belong to a large integer universe and assuming we would like to optimize the space usage, we can use hash tables instead of arrays). For each $s\in \sets$ and $0\leq k\leq L$, $\lev[s][k]$ stores a pointer to the memory location containing the value of $\lev^{(k)}(s)$, as well as a pointer to the list head of $S^{(k)}_i$ if $\lev^{(k)}(s) = i$ (and $i \neq -1$). Similarly, $\lev[e][k]$ stores a pointer to the memory location of $\lev^{(k)}(e)$, as well as a pointer to the memory location of the pointer to $E^{(k)}_i$ if $\lev^{(k)}(e) = i$.

  	\item Similarly, we store an array in the foreground $\wts[\cdot]$ indexed by $s\in \sets$ and $e\in \univ$, of size $O(|\sets| + |\univ|)$. For each $s\in \sets$ and $0\leq k\leq L$, $\wts[s][k]$ stores a pointer to the memory location containing the value of $\wts^{(k)}(s)$. Similarly, $\wts[e][k]$ stores a pointer to the memory location of $\wts^{(k)}(e)$.

    \end{enumerate}
\end{framed}

\paragraph{Foreground Operations.} The collections $S_i$ and $E_i$ for each $i$ will be maintained in doubly linked lists in the foreground. To access the level and weight values in the foreground, $\lev(s)$ and $\wts(s)$ respectively, we can enumerate all indices $k\in [0, L]$ and check the entry $\lev[s][k]$ that points to $\lev^{(k)}(s) = i$ and list head $S^{(l)}_{i}, l\geq i$. If either $\lev[s][k]$ is a null pointer, or $\lev[s][k]$ points to a value $\lev^{(k)}(s) = i$ but $S_i^{(l)}$ is not in the foreground, then we know $\lev^{(k)}(s)\neq\lev(s)$. Once we reach $k'$ such that $\lev^{(k')}(s) = i'$ and $S_{i'}^{(l)}$ is in the foreground (determined by the Boolean value), we know that $\lev(s) = \lev^{(k)}(s) = i'$ and $\wts(s) = \wts^{(k)}(s)$. Therefore, accessing the foreground level value $\lev(s)$ and weight value $\wts(s)$ takes time $O(L)$. Similarly, accessing the foreground level value $\lev(e)$ and weight value $\wts(e)$ takes time $O(L)$ as well. We conclude that upon insertion of element $e$, enumerating all sets $s \ni e$ to decide which level $e$ should be at, determining the weight of element $e$, and updating the set weights of all sets that contain $e$, takes time $O(f \cdot L) = O(f \cdot \log_\beta(Cn))$. Regarding initiating resets, in $O(L) = O(\log_{\beta}(Cn))$ time we can go over all levels to check which ones need to be initiated, and also check which is the highest short level, by using $E_i$ which is maintained in the foreground. We remind that a short level is a level $k$ such that $\reset(k)$ is not working currently in the background, and $f \cdot \sum_{i=0}^k |E_i| < L$. Moreover, there is only one short level reset operating per update step (the highest one).

\subsubsection{Background}

\paragraph{Initialization Phase.} When an instance of $\reset(k)$ has been initiated, it initializes its own data structures by setting $\sets^{(k)}_\alg = S_{i}^{(k)} = E_{i}^{(k)} \leftarrow \emptyset, \forall i\in [0, k+1]$, and $\univ^{(k)} = \univ^{(k)+} \leftarrow \bigcup_{i=0}^k E_i$. After that, enumerate all elements of $\univ^{(k)}$ and let $\sets^{(k)}$ be all the sets containing at least one element in $\univ^{(k)}$. So $\sets^{(k)} \subseteq \{s\mid 0\leq\lev(s)\leq k\}$. Note that we cannot create $\sets^{(k)}$ directly from the sets $S_k, S_{k-1}, \ldots, S_0$, as there might be some tight sets on level $0$ containing some elements in $\univ^{(k)}$ and there might be some slack sets on level 0,
but finding the tight sets at level 0 out of all sets there might take too much time.


To obtain all the pointers to $E_i, 0\leq i\leq k$, we follow the linked list consisting of these pointers, from $E_k$ down to $E_0$. One remark is that, during the enumeration of the list from $E_k$ down to $E_0$, some pointers might already be switched by other instances of $\reset(i), i<k$; the correctness of this step can be argued equivalently to the proof of \Cref{consistency}.


During the initialization phase, we remove all dead elements from $\univ^{(k)+}$. 
All the above steps will be planned in the background for the non-short levels.
Copying the memory locations of the pointers of $\{E_k, E_{k-1}, \cdots, E_0\}$ and $\{S_k, S_{k-1}, \cdots, S_{-1}\}$ takes time at most $O(k)$. Obtaining the collection $\mathcal{S}^{(k)}$ takes $O(f|\univ^{(k)}|)$ time, so the total running time of this phase is $O(f|\univ^{(k)}| + k) = O(f|\univ^{(k)}| + L)$. Since all the non-short levels satisfy $f \cdot |\univ^{(k)}| \geq L$, we get that the running time of this phase for non-short levels is $O(f|\univ^{(k)}|)$, and for the short level reset it is $O(L) = O(\log_{\beta}(Cn))$.

\paragraph{Water-Filling Phase.} Once an instance of $\reset(k)$ has been initiated, we run the algorithm \emph{EfficientRebuild($k$)} given in \cite{bhattacharya2021dynamic}. If an element $e$ is inserted or deleted by the adversary during the initialization phase or the water-filling phase, it is treated just as it is in \cite{bhattacharya2021dynamic}. The following properties hold following the execution of \emph{EfficientRebuild($k$)}, by \cite{bhattacharya2019new} and \cite{bhattacharya2021dynamic}.

\begin{enumerate} [(1)]
    \item Every element $e$ that was active at level up to $k$ in the foreground upon beginning this phase will end up active and at level up to $k+1$.

    \item Every element $e$ that was passive at level up to $k$ in the foreground upon beginning this phase will end up active at level up to $k+1$ or passive at level \emph{exactly} $k+1$. 

    \item Every element $e$ that was dead at level up to $k$ in the foreground upon beginning this phase will end up removed completely of the system.

    \item The procedure does not touch any element that was at level $\geq k+1$ in the foreground upon the beginning of this phase.

    \item Each set in $S^{(k)}_i$ for any $1 \leq i \leq k+1$ is tight.

    \item Each element participating in the reset is contained in at least one tight set.

    \item $|P_i| < \epsilon \cdot |A_i|$ for all $i \leq k$. Notice that by $(2)$ it may seem that $|P_i| = 0$ for any $i \leq k$, but $(2)$ refers only to elements that were in the system when this phase began, and not elements that were inserted during the execution. 
\end{enumerate}

\noindent The running time of this procedure is given in the following lemma:

\begin{lemma}[Claim $C.2$ in \cite{bhattacharya2021dynamic}] \label{bhatt_time}
	The running time of the water-filling phase of $\reset(k)$ is $O(f \cdot |\univ^{(k)+}|)$.
\end{lemma}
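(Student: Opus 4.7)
The plan is to amortize the total work of the water-filling phase of $\reset(k)$ against the elements of $\univ^{(k)+}$, showing that each such element is charged $O(f)$ time. First, I would recall that \emph{EfficientRebuild}$(k)$ proceeds in iterations: at each iteration it identifies some slack set $s^\star \in \sets^{(k)}$ that becomes tight first as the uniform ``water level'' $w$ on currently unfrozen elements is raised, raises $w$ to that threshold, declares $s^\star$ tight, and freezes every previously unfrozen element covered by $s^\star$, assigning it a level in $[0,k+1]$ consistent with the current water level. Elements inserted by the adversary during this phase are inserted into the unfrozen pool at level $0$; elements deleted are removed (or marked dead) and their contribution withdrawn from the weights of the sets that contain them.

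The main sources of work per iteration are: (i) identifying the next set to become tight; (ii) when an element $e$ is frozen (or inserted, or deleted), enumerating the $\le f$ sets $s' \ni e$ and updating $\wts^{(k)}(s')$; and (iii) maintaining the auxiliary data structures that support (i). For (ii), each element in $\univ^{(k)+}$ is frozen at most once, so the total work charged to this is $O(f \cdot |\univ^{(k)+}|)$. For (iii), each such incidence update $(e, s')$ is coupled to a constant number of bucket operations, so it is subsumed by the same $O(f)$ budget per element, provided those operations take $O(1)$ time each.

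The main obstacle is implementing (i) efficiently --- namely, identifying the next critical set in amortized $O(1)$ time per iteration, rather than $O(\log (Cn))$. I would use a \emph{truncated} priority data structure analogous to the one in Definition \ref{heap}: an array of linked lists indexed by $\lfloor \log_\beta ((\cost(s') - \wts^{(k)}(s')) / |s' \cap U|) \rfloor$, where $U$ is the current unfrozen pool. A heap invariant similar to Invariant \ref{heap-inv} can be maintained so that the algorithm only needs to peek at the bucket corresponding to the current water-level threshold, and each insertion/removal/bucket-change costs $O(1)$. Since the number of bucket changes is dominated by the number of set--element incidences touched (each frozen element triggers $O(f)$ such changes), the data-structure overhead is absorbed into the $O(f \cdot |\univ^{(k)+}|)$ budget.

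Finally, I would account for adversarial element updates that arrive during the background execution. Each such insertion or deletion of $e$ triggers $O(f)$ work --- enumerating the sets $s' \ni e$ to update $\wts^{(k)}(s')$ and to refresh the bucket positions --- and adds at most one element to $\univ^{(k)+}$. By a pacing argument analogous to Claim \ref{pre4inv}, the number of adversarial updates processed during the phase is $O(|\univ^{(k)+}|)$, so the cumulative extra cost is still $O(f \cdot |\univ^{(k)+}|)$, yielding the claimed bound.
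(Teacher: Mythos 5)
Your situation here is different from the usual comparison: the paper does not prove this lemma at all. It is imported verbatim as Claim C.2 of \cite{bhattacharya2021dynamic}, and the water-filling phase is exactly the procedure \emph{EfficientRebuild}$(k)$ of that paper, which the present paper uses as a black box (only its output properties and this running-time bound are invoked). What you wrote is essentially a from-scratch reconstruction of the charging analysis behind that cited claim, and its shape is right: every element of $\univ^{(k)+}$ changes status (frozen/covered, inserted, or deleted) $O(1)$ times, each status change pays $O(f)$ to update the weights of the sets containing it, and the ``next critical set'' is extracted from a bucketed, truncated-heap-like structure whose operations cost $O(1)$ and whose key changes are dominated by the set--element incidences touched. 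Two caveats are worth flagging. First, since the lemma is a statement about a \emph{specific} procedure, a complete proof must follow that procedure's actual structure: \emph{EfficientRebuild}$(k)$ is organized around precomputed target levels $\tarlev(s)$ processed in discrete rounds $k, k-1, \ldots, 0$ rather than a continuous ``first set to become tight'' sweep, and adversarial insertions arriving during the phase are not simply dropped into the unfrozen pool at level $0$ (they receive a weight/level consistent with the tightness rules); your reconstruction deviates on these details, even though the charging argument survives them. Second, your final pacing step is unnecessary for the time bound: every element inserted or deleted by the adversary during the phase is itself a member of $\univ^{(k)+}$, so its $O(f)$ handling cost is charged directly to it without appealing to an analogue of Claim \ref{pre4inv} (pacing matters for the invariant maintenance elsewhere in the paper, not for this lemma). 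In short, your route buys self-containedness at the cost of having to reproduce the cited procedure faithfully, whereas the paper deliberately avoids that by citing.
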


\paragraph{Termination Phase.} When the water-filling algorithm $\reset(k)$ is finished, we set $S_i, E_i$ to be $S_i^{(k)}, E_i^{(k)}, i\in [0, k]$, and append the linked list of $S^{(k)}_{k+1}, E^{(k)}_{k+1}$ to $S_{k+1}, E_{k+1}$. Finally, abort all lower-level reset instances. The implementation of switching the pointers is roughly the same as in the high-frequency regime, as explained next.


When $\reset(k)$ has finished and called upon to switch its memory to the foreground, for every index $0\leq i\leq k$, we replace every pointer to $S_i$ with the pointer to $S^{(k)}_i$, and connect all pointers to nonempty lists $S^{(k)}_i, 0\leq i\leq k$ with a linked list. 

Merging the list $S_{k+1}^{(k)}$ with the list $S_{k+1}$ on the foreground is in fact done in the water-filling phase, as it does not have worst-case runtime guarantee. Nevertheless it will be explained here as part of the termination phase. Upon merging $S_{k+1}^{(k)}$ with $S_{k+1}$, suppose $S_{k+1}$ is equal to some $S_{k+1}^{(l)}$ for some $l\geq k+1$; in other words, $S_{k+1}$ was computed in some instance $\reset(l)$, where $l\geq k+1$. Then, for each set $s\in S^{(k)}_{k+1}$, redirect the pointer $\lev[s][k]$ from the list head $S^{(k)}_{k+1}$ to the list head $S_{k+1}^{(l)}$. After that, concatenate the two lists $S_{k+1}^{(k)}, S_{k+1}$. We can merge the two lists $E_{k+1}^{(k)}$ and $E_{k+1}$ in a similar manner. The running time of this is clearly $O(f \cdot |\univ^{(k)}|)$, thus not the bottleneck of the water-filling phase.

As for the runtime of the termination phase, the number of memory pointers to be switched is at most $O(L) = O(\log_\beta(Cn))$, so the worst-case runtime is $O(L) = O(\log_\beta(Cn))$.
For any set $s\in \sets^{(k)}$, to access any level value $\lev^{(k)}(s)$ in the background, we can follow the pointer $\lev[s][k]$ and check $\lev^{(k)}(s)$ in constant time; similarly we can check the value of $\lev^{(k)}(e)$ for any $e\in \univ^{(k)+}$ in constant time.

Since our data structures maintain multiple versions of the level values, and our algorithm keeps switching memory locations between the foreground and the background, we need to argue the consistency of the foreground data structures; this was done in \Cref{consistency} for the high-frequency regime, which equivalently holds for the low-frequency regime as well.

By the algorithm description, upon an element update, we scan all the levels, and if any instance of $\reset(\cdot)$ has just terminated, we switch the one $\reset(k)$ on the highest level $k$ to the foreground as we have discussed in the previous paragraph, while aborting all other instances of $\reset(i), i<k$. As we have seen, switching a single instance of $\reset(\cdot)$ to the foreground takes $O(L) = O(\log_\beta(Cn))$ time, and so the worst-case time of this part is $O(L) = O(\log_\beta(Cn))$. Therefore, the termination can be done in a single update step (along with the short level reset).

\bigskip

\noindent To conclude \Cref{updtimepd}, any insertion/deletion in the foreground can be dealt with in $O(f \cdot L)$ time, thus it can be done within a single update step. Dealing with the short level reset (finding the highest one, initializing the reset, and executing it) takes $O(L)$ time, thus it will all be done within a single update step as well. Likewise, termination can be done in $O(L)$ time (updating data structures of the highest finished reset and aborting the rest). Every other $\reset(k)$ will take $O(f \cdot |\univ^{(k)+}|)$ time (initialization and water-filling phase). Since this cannot be executed within a single update step, for each $k$ we will execute $O(\frac{f}{\epsilon})$ computations per update step. In \Cref{proofcorrectpd} we will prove that by working in such a pace, we can ensure that \Cref{lastinv} holds, which in turn will be enough to prove that the approximation factor holds, by \Cref{frombhatt}.

\subsection{Proof of Correctness} \label{proofcorrectpd} 

In this section we prove that \Cref{lastinv} always holds in the foreground, mainly relying on the properties of \emph{EfficientRebuild($k$)} given in \cite{bhattacharya2021dynamic}. 

\begin{claim} \label{lastinv1}
	\Cref{lastinv}(1) always holds in the foreground.
\end{claim}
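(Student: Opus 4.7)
The plan is a straightforward induction on the sequence of update steps: at every step, I identify each mechanism that can modify the foreground weight $\wts(s)$ of some set, and verify that the mechanism preserves $\wts(s)\le \cost(s)$. There are exactly three such mechanisms: (a) an adversarial insertion handled in the foreground, (b) an adversarial deletion in the foreground, and (c) the termination of a background instance $\reset(k)$ that switches its local memory to the foreground. Sets that are not touched by any of the three operations inherit the invariant from the previous step.

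For (a), when a new element $e$ arrives, the foreground code sets $\wts(e) \leftarrow 0$ if some $s\ni e$ is already tight, and otherwise sets $\wts(e) \leftarrow \min_{s\ni e}\{\cost(s)-\wts(s)\}$. In both cases, for every $s\ni e$ we have $\wts(e) \le \cost(s)-\wts(s)$ by construction, so after the updates $\wts(s) \leftarrow \wts(s)+\wts(e)$ the invariant continues to hold for all affected sets; no other set changes. For (b), the element is merely marked dead and its weight is kept frozen, so no set's weight changes and the invariant is trivially preserved.

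Case (c) is the main obstacle and is where the structural work lies. I would invoke the feasibility property of \emph{EfficientRebuild($k$)} (guaranteed by \cite{bhattacharya2019new,bhattacharya2021dynamic} and restated among the seven listed properties of the water-filling phase): throughout its execution, and in particular at termination, every set $s\in\sets^{(k)}$ satisfies $\wts^{(k)}(s) \le \cost(s)$ \emph{inside the reset's view}, where $\wts^{(k)}(s)$ is computed relative to the residual budget left by the weights of elements of $s$ that lie above the reset's scope. By the description of the termination phase, when $\reset(k)$ switches to the foreground it only overwrites levels and weights of elements whose foreground level is $\le k+1$; elements of $s$ at higher foreground levels are untouched, and their weights therefore agree before and after the switch. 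Consequently the new foreground weight of $s$ decomposes as
\[
\wts(s) \;=\; \wts^{(k)}(s) \;+\; \sum_{e\in s,\ \lev(e) > k+1}\wts(e),
\]
and the feasibility property of \emph{EfficientRebuild($k$)} is precisely what is needed to conclude that this sum is bounded by $\cost(s)$.

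The only subtlety is that the sum on the right is a moving target during the background simulation: adversarial insertions and deletions at levels $\le k$ occurring while $\reset(k)$ runs can enlarge or shrink both $\univ^{(k)+}$ and the weights that $\reset(k)$ must account for. I would close the argument by noting that the foreground insert/delete routines explicitly feed every such update into $\reset(k)$, so the reset sees a consistent stream of operations and its feasibility invariant holds against the correct residual budget at the moment of termination. Combining (a), (b) and (c) completes the inductive step, and the claim follows.
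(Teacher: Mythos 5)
Your proposal is correct and follows essentially the same route as the paper: an induction over update steps in which insertions preserve the bound because the assigned weight is either $0$ or $\min_{s\ni e}\{\cost(s)-\wts(s)\}$, deletions change no weights, and the termination of $\reset(k)$ is handled by appealing to the guarantees of \emph{EfficientRebuild($k$)} from \cite{bhattacharya2019new,bhattacharya2021dynamic}, which is exactly the citation the paper uses for that case. Your extra decomposition of $\wts(s)$ across the foreground/background boundary is a reasonable fleshing-out of that citation (modulo the minor bookkeeping point that $\reset(k)$ only rebuilds elements whose prior foreground level was $\le k$, so untouched elements at level exactly $k+1$ must be accounted on the frozen side), and does not change the nature of the argument.
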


\begin{proof}
    Assume \Cref{lastinv}(1) holds right before time step $t$. Following the insertion/deletion at time $t$, it is easy to verify that \Cref{lastinv}(1) still holds. Indeed, a deletion does not raise the weight of any element, and due to an insertion of element $e$ either $\wts(e) = 0$, or $e$ is assigned a weight small enough such that the weight of any set $s \ni e$ does not surpass its cost, by the description of the algorithm. Following the termination of some $\reset(k)$, \Cref{lastinv}(1) holds from the description of the reset procedure presented in Section $C.4$ in \cite{bhattacharya2021dynamic}.
\end{proof}

\begin{claim} \label{lastinv2}
   \Cref{lastinv}(2) always holds in the foreground.
\end{claim}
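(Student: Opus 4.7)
My plan is to mimic the structure of the proof of \Cref{lastinv1}, inducting on foreground events. The base case is trivial, since when the system is empty no set has positive level. The inductive step needs to consider two sources of change to the foreground at time $t$: (i) an adversarial insertion or deletion, and (ii) the termination of some $\reset(k)$ whose local memory is swapped into the foreground.

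Case (i) will be handled by the observation that neither kind of operation decreases any $\wts(s)$ and neither changes any $\lev(s)$. A deletion only marks an element as dead, and by the paper's convention the dead element retains its weight, so all set weights are preserved. A foreground insertion of $e$ adds a nonnegative amount $\wts(e)$ to $\wts(s)$ for each $s \ni e$, and set levels are not modified. Hence any set $s$ with $\lev(s)\ge 1$ that was tight before time $t$ remains tight.

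Case (ii) is the main part, and I plan to rely on the properties of \emph{EfficientRebuild}$(k)$ listed after the water-filling phase. Property~(5) states that every set in $S^{(k)}_i$ with $1\le i\le k+1$ is tight upon termination. After the memory switch, foreground sets at levels in $[1,k]$ are precisely these reset-produced sets, so they are tight. For level exactly $k+1$, the termination phase merges $S^{(k)}_{k+1}$ into the existing $S_{k+1}$; the new entries are tight by property~(5), while the pre-existing entries were untouched by $\reset(k)$ since $\sets^{(k)}\subseteq\{s\mid 0\le\lev(s)\le k\}$, so their levels did not change and their foreground weights could only have grown (by the weight monotonicity from Case~(i) applied to updates fed during the reset), preserving tightness by the inductive hypothesis. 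For levels $\ge k+2$, the same containment of $\sets^{(k)}$ combined with property~(4) ensure that no set or element at those levels is touched by $\reset(k)$, so again the inductive hypothesis plus weight monotonicity yields tightness.

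The main subtlety I anticipate is the concurrency between a long-running $\reset(k)$ and the adversarial updates that occur during its execution, since they may alter foreground weights of sets at high levels while $\reset(k)$ is still working. I plan to handle this by organizing the induction strictly over foreground events: every adversarial update that happens during the reset is itself a Case~(i) event that has already been processed by the inductive hypothesis, and foreground consistency across memory switches is guaranteed by the analogue of \Cref{consistency}. This lets me take the foreground state immediately before $\reset(k)$'s termination as the ``before'' state for the Case~(ii) argument, after which properties (4)–(5) of \emph{EfficientRebuild}$(k)$ extend the invariant across the memory switch.
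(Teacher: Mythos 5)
Your proposal is correct and follows essentially the same route as the paper: adversarial insertions/deletions never decrease the weight of a set at level $\ge 1$ (dead elements keep their weight), and the termination case is discharged by the guarantees of \emph{EfficientRebuild}$(k)$ from \cite{bhattacharya2021dynamic} (the paper simply cites its Lemma C.10, which is the source of property (5) you invoke). Your extra bookkeeping about the merge at level $k+1$ and the untouched levels $\ge k+2$ is a harmless elaboration of the same argument.
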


\begin{proof}
     Assume \Cref{lastinv}(2) holds right before time step $t$. Following the insertion/deletion at time $t$, it is easy to verify that \Cref{lastinv}(2) still holds. Indeed, following the update the weight of any set at level $\geq 1$ does not reduce. Following the termination of some $\reset(k)$, \Cref{lastinv}(2) holds by Lemma C.10 in \cite{bhattacharya2021dynamic}.
\end{proof}

\begin{claim} [Lemma C.11 in \cite{bhattacharya2021dynamic}] \label{lastclaim4inv}
	Following the termination of an instance $\reset(k)$, we have $|P_i| < \epsilon \cdot |A_i|$ for all $i \leq k$.
\end{claim}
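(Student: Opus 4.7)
The plan is to mirror the proof of Claim~\ref{claim4inv} from the high-frequency regime, replacing the greedy set cover subroutine with the water-filling procedure \emph{EfficientRebuild($k$)} and leveraging its listed properties (1)--(7).

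First, I would establish a low-frequency analog of Claim~\ref{pre4inv}: by \Cref{bhatt_time}, the planned work of the water-filling phase of $\reset(k)$ is $O(f \cdot |\univ^{(k)+}|)$, where $|\univ^{(k)+}|$ is measured at the beginning of the phase. Since the algorithm executes $\Theta(f/\epsilon)$ computational steps per adversarial update, and every adversarial update can add at most one new element (contributing $O(f)$ extra work), the reset terminates within fewer than $\frac{\epsilon}{2}\,|\univ^{(k)+}|$ update steps of its initiation.

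Second, let $U$ denote $\univ^{(k)+}$ at the start of the water-filling phase, and fix any $i \le k$. By property~(1), every element originally active in $U$ ends up active at level $\le k+1$; by property~(2), every element originally passive in $U$ ends up either active (at level $\le k+1$) or passive \emph{at level exactly $k+1$}; by property~(3), all dead elements in $U$ are removed. Consequently, no element from $U$ can end up in the foreground $P_i$, since either it is active (contributing to $A_i$ if its level is $\le i$, or to neither if its level is $k+1$), or it is passive at level $k+1 > i$. Thus, the only elements that may appear in the foreground $P_i$ after termination are those inserted by the adversary during the execution of $\reset(k)$. Combined with the first step, this gives $|P_i| < \frac{\epsilon}{2}\,|U|$.

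Third, for the lower bound on $|A_i|$, I would invoke property~(7) of \emph{EfficientRebuild($k$)}, which asserts precisely that $|P_i| < \epsilon\,|A_i|$ holds internally at the end of the water-filling phase; this is the analog of the identity $|A_i^{(k)}|+|P_i^{(k)}| \ge |U_i|$ used in the greedy case, and in the low-frequency setting it is supplied directly by the internal invariants of the water-filling procedure (as proven in the referenced \cite{bhattacharya2021dynamic}). Finally, the termination phase (\Cref{termsec}, adapted here) copies the background contents wholesale to the foreground, and by properties (1)--(4) it does not alter elements originally at foreground level $> k$, so the $|P_i|, |A_i|$ ratios carry over faithfully to the foreground state.

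The main obstacle is the bookkeeping for $|A_i|$: unlike the greedy case, where every element in $U$ that has not been covered in earlier rounds necessarily lands at level $\le i$ in round $i$, the water-filling procedure may leave many elements at level exactly $k+1$, so a direct counting argument like the one in Claim~\ref{claim4inv} does not yield a large enough $|A_i|$. Circumventing this requires relying on the internal invariant (7) of the water-filling phase, whose proof (in \cite{bhattacharya2021dynamic}) is nontrivial and hinges on the truncation of target levels at $k+1$ together with the pace bound from the first step above.
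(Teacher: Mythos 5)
Your proposal lands in essentially the same place as the paper: the paper gives no proof of \Cref{lastclaim4inv} at all, but imports it verbatim from prior work — it is exactly Lemma C.11 of \cite{bhattacharya2021dynamic}, restated also as property (7) in the list of guarantees of \emph{EfficientRebuild($k$)} given just above the claim. Your third step does the same, so as far as the paper is concerned the claim is simply supplied by the citation, and your closing observation that the memory switch does not disturb levels above $k$ (so the background ratios carry to the foreground) matches the role played by \Cref{lastlarger} and the termination discussion in \Cref{termsec}.

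That said, you should be aware that the parts of your write-up that read like a derivation do not actually constitute one. Steps 1 and 2 (the pace bound in the spirit of \Cref{pre4inv} via \Cref{bhatt_time}, and the observation that pre-existing elements cannot land in $P_i$ by properties (1)--(3)) only bound $|P_i|$ from above by roughly $\frac{\epsilon}{2}|\univ^{(k)+}|$; as you concede, properties (1)--(6) give no handle on a lower bound for $|A_i|$, because water-filling, unlike the greedy rounds used in \Cref{claim4inv}, can park many elements at level exactly $k+1$. Your step 3 then "resolves" this by invoking property (7) — but property (7) \emph{is} the statement being proven, so read as an independent proof the argument is circular, and steps 1--2 become redundant. (A small additional slip in step 2: elements deleted by the adversary during the execution become dead and can also enter $P_i$, not only newly inserted ones; this is harmless for the counting since each update step contributes at most one element to $P_i$, but it should be stated.) In short: as a self-contained proof the proposal has a gap at exactly the point you identified, but as a justification matching the paper's intent — deferring to Lemma C.11 of \cite{bhattacharya2021dynamic} — it is the same approach the paper takes.
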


\begin{claim} \label{lastlarger}
	When $\reset(k)$ terminates and is transferred to the foreground, it does not change $A_{k'}$ or $P_{k'}$, for any $k'>k$.
\end{claim}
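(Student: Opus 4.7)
My plan is to fix any $k' > k$ and argue, element-by-element, that the termination step of $\reset(k)$ leaves the sets $A_{k'}$ and $P_{k'}$ exactly as they were. The argument partitions elements into three groups by their foreground state at the initialization of $\reset(k)$: (i) elements at foreground level $> k$; (ii) elements at foreground level $\leq k$, i.e.\ the participants of the reset; (iii) elements inserted by the adversary between the initialization and termination of $\reset(k)$.

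For group~(i), I would invoke Property~(4) of the water-filling phase: $\reset(k)$ never touches these elements, and the termination step overwrites only the foreground lists $S_0,\dots,S_k,E_0,\dots,E_k$ and merges the reset's level-$(k+1)$ lists into the existing foreground $S_{k+1},E_{k+1}$. In particular, the level, active/passive, and live/dead attributes of each such element are identical before and after termination, so its contribution to $A_{k'}$ and to $P_{k'}$ is preserved.

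For group~(ii), Properties~(1)--(3) guarantee that every participant ends the reset at level at most $k+1 \leq k'$, hence remains a candidate for membership in $A_{k'}\cup P_{k'}$. I would then match post-reset status to pre-reset status: by Property~(1) an active participant remains active and stays in $A_{k'}$; by the passive-at-level-$(k+1)$ branch of Property~(2) a passive participant stays in $P_{k'}$; and for a pre-reset dead participant, Property~(3) removes it from $\univ^+$, which I would argue corresponds to a consistent removal that leaves the foreground bookkeeping of $P_{k'}$ identical. For group~(iii), the foreground insertion rule sets $\lev(e)=\max_{s\ni e}\lev(s)$ and marks $e$ passive at the \emph{moment of insertion}, so any change to $A_{k'}, P_{k'}$ caused by that insertion is attributable to the insertion itself and not to the later termination of $\reset(k)$; a symmetric remark handles adversarial deletions in the interval.

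The main obstacle will be group~(ii), specifically showing that no pre-reset passive participant is promoted to active by $\reset(k)$, and that no pre-reset dead participant's removal strictly shrinks $P_{k'}$. Both risks are genuine: Property~(2) also permits an active-outcome branch, and Property~(3) removes dead elements from $\univ^+$. To close the passive-branch gap, I would descend into the specific weight-assignment dynamics of the water-filling procedure of \cite{bhattacharya2019new}, using the fact that a pre-reset passive element carries weight $\wts(e) \leq \beta^{-\lev(e)}$ and that the only sets through which it could receive an upgrade to $\beta^{-\lev^{(k)}(e)}$ are those eligible to become tight at some earlier round; the plan is to show that among the participants, those that were passive on the foreground exit the procedure at level exactly $k+1$ in the passive branch. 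For the dead branch, I would similarly argue that removal from $\univ^+$ is accompanied by a matching removal from the pre-reset $P_{k'}$ tally, so that the cardinality and identity of $P_{k'}$ in the foreground are preserved. If either argument fails literally, the salvageable version is precisely the monotone statement $A_{k'}$ weakly grows and $P_{k'}$ weakly shrinks, which is exactly what the subsequent proof of Invariant~\ref{lastinv}(3) in the foreground actually consumes.
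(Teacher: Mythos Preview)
Your element-by-element case split is the right structure and is far more explicit than the paper's own proof, which is a one-line deferral to \cite{bhattacharya2021dynamic} and Property~3.6(4) of \cite{bhattacharya2019new}. More importantly, you have correctly located the real issue.

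Do not invest effort in the water-filling dive to force the literal claim through. The passive-to-active branch of Property~(2) is not an artifact to be argued away: the whole purpose of the water-filling reset is to raise participant weights to $\beta^{-\lev^{(k)}(e)}$, so pre-reset passive participants genuinely can become active, which moves them from $P_{k'}$ into $A_{k'}$. Likewise, Property~(3) removes dead participants from $\univ^+$, which strictly shrinks $P_{k'}$; recall the paper explicitly notes that, unlike in \cite{bhattacharya2021dynamic}, $P_k$ here includes dead elements, so the cited property does not transfer verbatim to the present definitions. Under the paper's own definitions and listed properties, the claim as literally stated is too strong.

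Your ``salvageable'' monotone version is exactly right and is all that Lemma~\ref{lastinv3} actually consumes: that proof only needs that termination of a lower-level reset does not raise $|P_{k'}|$ and does not lower $|A_{k'}|$. That weaker statement follows directly from your group-(i)/(ii)/(iii) analysis via Properties~(1)--(4), with no further descent into the water-filling internals. Go straight to that version; it is also the natural low-frequency analogue of Claim~\ref{larger} in the high-frequency section, which is already stated monotonically.
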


\begin{proof}
    The claim follows immediately from the description of the reset procedure presented in Section $C.4$ in \cite{bhattacharya2021dynamic}, and by Property $3.6(4)$ in \cite{bhattacharya2019new}.
\end{proof}

\begin{lemma}\label{lastinv3}
	\Cref{lastinv}(3) always holds in the foreground.
\end{lemma}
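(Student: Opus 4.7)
The plan is to mirror the proof of Lemma \ref{inv3} from the high-frequency regime, adapted to the primal-dual reset procedure. The two analogs we will rely on are Claim \ref{lastclaim4inv} (which plays the role of Claim \ref{claim4inv}) and Claim \ref{lastlarger} (which plays the role of Claim \ref{larger}), together with Lemma \ref{bhatt_time} in place of the greedy-reset runtime used in Claim \ref{pre4inv}. The structure of the argument is: assume a violation at some time $t$, go back to the most recent initiation of $\reset(k)$ at a time $t'$, note that in the interval $(t',t]$ the quantities $|P_k|$ and $|A_k|$ can only change through adversarial updates, and then show that enough updates to cause a violation would have been more than enough time for $\reset(k)$ to terminate, contradicting the fact that $\reset(k)$ was still running at time $t$.

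Concretely, suppose for contradiction that at the end of update step $t$ we have $|P_k| > 2\epsilon\cdot |A_k|$ for some level $k$, and let $t'\le t$ be the last time step before $t$ at which $\reset(k)$ was initiated; such a $t'$ exists because the algorithm immediately re-initiates $\reset(k)$ upon the termination of any $\reset(k')$ with $k'\ge k$. By the choice of $t'$ together with Claim \ref{lastlarger}, in the interval $(t',t]$ no termination of any $\reset(k')$ with $k'\ge k$ happens, and any termination of $\reset(k')$ with $k'<k$ leaves $P_k$ and $A_k$ unchanged. Hence every change to $P_k$ or $A_k$ in this interval is due to an adversarial insertion/deletion, each of which can increase $|P_k|$ by at most one and decrease $|A_k|$ by at most one. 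Using Claim \ref{lastclaim4inv} at time $t'$, we have $|P_k^{t'}|<\epsilon\cdot|A_k^{t'}|$, so the number of update steps in $(t',t]$ strictly exceeds $2\epsilon|A_k^{t}| - \epsilon|A_k^{t'}|$.

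Next, to bound the duration of $\reset(k)$, I will follow the reasoning of Claim \ref{pre4inv}. At time $t'$ the reset has $|\univ^{(k)+}| = |P_k^{t'}| + |A_k^{t'}|$ participating elements, and by Lemma \ref{bhatt_time} (together with the initialization-phase cost analyzed in Section \ref{updtimepd}) the total planned work is $O(f\cdot|\univ^{(k)+}|)$. Since during the execution the adversary can insert at most one element per update step, while our background scheduler performs $\Theta(f/\epsilon)$ computations per update step (handling $\Theta(1/\epsilon)$ elements per step), the number of newly inserted participants is only an $O(\epsilon)$-fraction of $|\univ^{(k)+}|$, so the total work remains $O(f\cdot(|P_k^{t'}|+|A_k^{t'}|))$, and the reset terminates within fewer than $\tfrac{\epsilon}{2}(|P_k^{t'}|+|A_k^{t'}|)$ update steps after $t'$. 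Combining this upper bound on the duration with the lower bound on the number of updates in $(t',t]$ derived above gives
\[
\tfrac{\epsilon}{2}\bigl(|P_k^{t'}|+|A_k^{t'}|\bigr) \;>\; 2\epsilon|A_k^{t}| - \epsilon|A_k^{t'}| \;\ge\; 2\epsilon(|A_k^{t'}|-x) - \epsilon|A_k^{t'}|,
\]
where $x$ denotes the number of update steps in $(t',t]$ and we use $|A_k^{t}|\ge|A_k^{t'}|-x$. Substituting the bound on $x$, using $|P_k^{t'}|<\epsilon|A_k^{t'}|$, and simplifying yields exactly the condition $2\epsilon^2+3\epsilon-1<0$, which is false for every $\epsilon<1/4$, the desired contradiction.

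The main obstacle I expect is the runtime accounting in the second paragraph: one must verify that Lemma \ref{bhatt_time} indeed gives an $O(f\cdot|\univ^{(k)+}|)$ bound that is robust to adversarial updates arriving during the reset, and that the initialization phase (which is our deviation from \cite{bhattacharya2021dynamic}) fits within the same budget. This follows because non-short resets satisfy $f\cdot|\univ^{(k)+}|\ge L$, so the $O(L)$ pointer-copying cost is absorbed, and because the $\Theta(1/\epsilon)$ pace ensures that the population of participants grows by at most a $(1+O(\epsilon))$ factor during the reset, keeping the Lemma \ref{bhatt_time} estimate valid up to constants. Once this is nailed down, the rest of the computation is an exact replica of the Lemma \ref{inv3} algebra, and the lemma — hence \Cref{extend-logC} — follows.
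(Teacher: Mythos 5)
Your proposal is correct and takes essentially the same route as the paper's proof: the same contradiction setup at the last initiation time $t'$, the same use of Claims \ref{lastclaim4inv} and \ref{lastlarger} to attribute all changes of $|P_k|,|A_k|$ to adversarial updates, the same pace-based bound of $\frac{\epsilon}{2}\left(|P_k^{t'}|+|A_k^{t'}|\right)$ on the duration of $\reset(k)$ (justified via Lemma \ref{bhatt_time}, matching \Cref{updtimepd}), and the same algebra as in \Cref{inv3} ending at the quadratic in $\epsilon$ with threshold $\epsilon<\frac{1}{4}$. The only blemish is the direction in your final sentence: the contradiction hypothesis forces $2\epsilon^2+3\epsilon-1>0$, which fails for $\epsilon<\frac{1}{4}$ (equivalently, the paper's target inequality $2\epsilon^2+3\epsilon-1<0$ \emph{does} hold there), so the condition you should call false is the ``$>0$'' version, not the ``$<0$'' one.
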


\begin{proof}

Assume by contradiction that there exists $k$ such that $|P_k| > 2\epsilon \cdot |A_k|$ for some $k$, at the end of update step $t$ (right before update step $t+1$). Let $t'$ be the last time step before $t$ that $\reset(k)$ was initiated. $\reset(k)$ was initiated at time $t'$ right after the termination of some $\reset(k')$, where $k' \geq k$, by the algorithm description. By \Cref{lastclaim4inv}, we know that at time $t'$ (following the termination of $\reset(k')$) we have $|P_k| < \epsilon \cdot |A_k|$ in the foreground.

Between $t'$ and $t$, $|P_k|$ can increase and $|A_k|$ can decrease only due to insertions/deletions, since by definition of $t'$ no termination of $\reset(k')$ for any $k' \geq k$ exists, and any termination of $\reset(k')$ for any $k' < k$ would not raise $|P_k|$ or change $A_k$ by \Cref{lastlarger}. By denoting $A_k$ at times $t'$ and $t$ by $A_k^{t'}$ and $A_k^{t}$ respectively, we conclude that there must be more than ($2\epsilon \cdot |A_k^{t}| - \epsilon \cdot |A_k^{t'}|$) update steps between $t'$ and $t$, since again due to one insertion/deletion only one element can join $P_k$. 
	
If we denote by $x$ the total number of update steps throughout which $\reset(k)$ is executed, we can execute the reset at a pace such that:

	\begin{equation} \label{inv3eq1pd}
		x < \frac{\epsilon}{2}(|P_k^{t'}| + |A_k^{t'}|),
	\end{equation} 
	
	\noindent since the collection of all elements participating in $\reset(k)$ initiated at time $t'$ is $P_k^{t'} \cup A_k^{t'}$. We assume that at update step $t$, $\reset(k)$ is still running, thus we will reach a contradiction if:
	
	\begin{equation} \label{inv3eq2pd}
		\frac{\epsilon}{2}(|P_k^{t'}| + |A_k^{t'}|) < 2\epsilon \cdot |A_k^{t}| - \epsilon \cdot |A_k^{t'}|.
	\end{equation} 
	
	\noindent Now, notice that:
	
	\begin{equation} \label{inv3eq3pd}
		|A_k^{t}| \geq |A_k^{t'}| - x,
	\end{equation} 
	
	\noindent since again, in each update step during the reset up to one element can be removed from $A_k$, and we assumed that $x \geq t-t'$. Thus, we need to show that:
	
	\begin{equation} \label{inv3eq4pd}
		\frac{\epsilon}{2}(|P_k^{t'}| + |A_k^{t'}|) < 2\epsilon \cdot (|A_k^{t'}|-x) - \epsilon \cdot |A_k^{t'}|.
	\end{equation} 
	
	\noindent Plugging in $x$ from \Cref{inv3eq1pd} and rearranging, we get that we need to show:
	
	\begin{equation} \label{inv3eq5pd}
		(\frac{1}{2} + \epsilon)|P_k^{t'}| < (\frac{1}{2} - \epsilon)|A_k^{t'}|.
	\end{equation} 
	
	\noindent Since we know that $|P_k^{t'}| < \epsilon|A_k^{t'}|$, it is enough to show that:

        \begin{equation} \label{inv3eq5.5pd}
		(\frac{1}{2} + \epsilon) \cdot \epsilon|A_k^{t'}| < (\frac{1}{2} - \epsilon)|A_k^{t'}|,
	\end{equation} 

\noindent meaning that 

        \begin{equation} \label{inv3eq6pd}
		2\epsilon^2 + 3\epsilon -1 < 0,
	\end{equation} 

 \noindent which holds for any $\epsilon < \frac{1}{4}$.	Thus, we reach our contradiction and the lemma follows.
\end{proof}

We conclude that our algorithm indeed maintains \Cref{lastinv}, in worst-case update time of $O(\frac{f \cdot L}{\epsilon}) = O\brac{\frac{f\log(Cn)}{\epsilon^2}}$ as proved in \Cref{updtimepd}. Since \Cref{lastinv} holds, the approximation factor of the maintained minimum set cover is $(1+\epsilon)f$, as shown in \Cref{frombhatt}. This concludes the proof of \Cref{extend-logC}.

\subsection{Removing Dependency on Aspect Ratio} \label{removec}
To remove the dependency on $C$ in the update time, we employ the same black-box reduction from \Cref{remove} for the high-frequency regime. 
Define the \emph{top level} of set $s$ as $\tlev(s) = \ceil{\log_\beta(n / \cost(s))}$. Define a parameter $K = \ceil{10\log_\beta n}$.
\begin{definition}
	For any $l\geq -1$, define $\sets_l = \{s\in \sets\mid \tlev(s)\in [lK, (l+2)K) \}$, and let $\univ_l$ be the collection of all elements such that the cheapest set containing them is in $\sets_l$.
\end{definition}
For each $l\geq -1$, we will maintain a set cover $\sets_{l, \alg}\subseteq \sets_l$ that covers $\univ_l$ by applying \Cref{extend-logC} as a black-box which takes $O\brac{\frac{f\log n}{\epsilon^2}}$ worst-case update time. Overall, by \Cref{ob1} and \Cref{ob2} we will have two different set cover solutions for $\univ$:
$$\sets_{\even, \alg} = \bigcup_{j\geq 0}\sets_{2j, \alg} \; , \; \sets_{\odd, \alg} = \bigcup_{j\geq 0}\sets_{2j-1, \alg}. $$
Following similar lines to those in the argument for the high-frequency setting, it can be proved that at least one solution from $\{\sets_{\even, \alg}, \sets_{\odd, \alg} \}$ is a $(1+\epsilon)f$-approximation; we next provide this argument, for completeness, which would complete the proof of \Cref{wc2}.

\begin{lemma}\label{approxlow}
	For any $l\geq -1$, let $k_0\in I_l$ be the smallest index such that there exists a set in $\mathcal{S}_l$ covering at level $k_0$, or in other words, $k_0$ is the smallest level such that $\sets_{l, \alg}\cap S_{l, k_0}\neq \emptyset$. If $k_0 \leq (l+1)K$, then we have:
	$$\sum_{j \geq 0}\cost(\sets_{l + 2j, \alg})\leq (1+O(\epsilon))f \cdot \cost(\sets^*).$$
\end{lemma}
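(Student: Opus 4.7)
The plan is to mirror the proof of \Cref{approx} almost verbatim, swapping out the $(1+O(\epsilon))\ln n$ greedy guarantee for the $(1+O(\epsilon))f$ primal-dual guarantee, since the rest of the argument is purely about the window structure and depends only on the top-level $\tlev(\cdot)$ and the choice $K = \ceil{10\log_\beta n}$.

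First, I would apply \Cref{extend-logC} as a black-box to the restricted instance $(\univ_l, \sets_l)$; since the algorithm of that lemma maintains \Cref{lastinv} on this instance, \Cref{frombhatt} yields $\cost(\sets_{l,\alg}) \leq (1+O(\epsilon))f \cdot \cost(\sets^*_l)$, where $\sets^*_l$ is the optimal cover of $\univ_l$ using sets from $\sets_l$. By definition of $\univ_l$, each element of $\univ_l$ has its cheapest containing set in $\sets_l$, so any cover $\sets^*$ of $\univ$ can be projected (element by element) to a cover of $\univ_l$ of no greater cost, giving $\cost(\sets^*_l) \leq \cost(\sets^*)$.

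Second, I would pick a witness set $s_0 \in \sets_{l, \alg}\cap S_{l, k_0}$ (which exists by the definition of $k_0$) and, for any $s \in \sets_{l+2j, \alg}$ with $j>0$, replay the exact chain of inequalities from \Cref{costs}:
$$\cost(s) \leq n\beta^{1-\tlev(s)} \leq n\beta^{1-(l+2)K} \leq n\beta^{1-k_0-K} \leq \frac{\epsilon}{(1+2\epsilon)n}\beta^{-k_0-1} \leq \frac{\epsilon\cdot\cost(s_0)}{(1+2\epsilon)n}.$$
Each inequality here uses only the window definitions, $\tlev(s)\geq (l+2j)K$ for $j>0$, $k_0\leq (l+1)K$, and $K\geq 10\log_\beta n$; the last step uses the lower bound $\cost(s_0)\geq \beta^{-k_0-1}$, which in the primal-dual setting follows from the water-filling placing $s_0$ at its highest possible level (if $\cost(s_0)<\beta^{-k_0-1}$ then already at weight level $k_0+1$ a single element in $s_0$ would have made it tight, contradicting $\lev_l(s_0)=k_0$). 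Then I would count: by \Cref{lastinv}(3) applied in each window, there are at most $(1+2\epsilon)n$ elements (alive plus dead) across all same-parity windows together, and since every tight set placed in the cover contains a witnessing element that is charged to it uniquely inside each window (this follows the same ``one element per tight set in the solution'' bookkeeping used in \Cref{approx} and \Cref{frombhatt}), the union $\bigcup_{j>0}\sets_{l+2j,\alg}$ has at most $(1+2\epsilon)n$ sets.

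Combining these two bounds yields $\sum_{j>0}\cost(\sets_{l+2j,\alg}) \leq (1+2\epsilon)n\cdot \frac{\epsilon\cost(s_0)}{(1+2\epsilon)n} = \epsilon\cost(s_0)\leq \epsilon\cost(\sets_{l,\alg})$, so $\sum_{j\geq 0}\cost(\sets_{l+2j,\alg}) \leq (1+\epsilon)\cost(\sets_{l,\alg}) \leq (1+O(\epsilon))f\cdot \cost(\sets^*)$, completing the proof. The main obstacle I expect is justifying the lower bound $\cost(s_0)\geq \beta^{-k_0-1}$ cleanly from the primal-dual invariants alone, since unlike the greedy case this does not immediately fall out of \Cref{lastinv}; it requires invoking the water-filling dynamics within the black-box in \Cref{extend-logC}. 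The counting of tight sets in the higher-indexed windows is the secondary issue, but it can be handled by the same ``charge each tight set to a distinct element'' argument used implicitly in the primal-dual analysis.
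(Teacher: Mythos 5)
Your proposal matches the paper's proof of this lemma essentially verbatim: the paper likewise invokes the black-box guarantee $\cost(\sets_{l,\alg})\le(1+O(\epsilon))f\cdot\cost(\sets^*_l)\le(1+O(\epsilon))f\cdot\cost(\sets^*)$, replays the exact inequality chain of \Cref{costs} (as \Cref{costs2}), bounds the number of sets in the higher same-parity windows by $(1+2\epsilon)n$ using \Cref{lastinv}(3), and concludes $\sum_{j>0}\cost(\sets_{l+2j,\alg})\le\epsilon\cdot\cost(s_0)\le\epsilon\cdot\cost(\sets_{l,\alg})$. The two ``obstacles'' you flag --- the lower bound $\cost(s_0)\ge\beta^{-k_0-1}$ and the counting of sets in the cover --- are handled in the paper by the same brief assertions (``since $s_0$ is covering at level $k_0$'' and that each element, dead or alive, accounts for at most one set of the solution), so your added detail elaborates rather than departs from the paper's argument.
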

\begin{proof}
    We know that 
$\cost(\sets_{l, \alg}) \le (1+O(\epsilon))f \cdot \cost(\sets^*_l) \le (1+O(\epsilon))f \cdot \cost(\sets^*)$. 
Consider any set $s_0\in \sets_{l, \alg}\cap S_{l, k_0}$. For any set $s\in \sets_{l+2j, \alg}$ and $j>0$, we have: 
	\begin{equation} \label{costs2} \cost(s)\leq n \cdot \beta^{1-\tlev(s)}\leq n\cdot \beta^{1-(l+2)K} \leq n\cdot \beta^{1-k_0-K} \leq \frac{\epsilon}{(1+2\epsilon) n}\cdot\beta^{-k_0-1}\leq \frac{\epsilon\cdot\cost(s_0)}{(1+2\epsilon) n},
 \end{equation}
	
\noindent where the first inequality holds since  $\tlev(s) \leq \log_\beta (n / \cost(s)) + 1$
(by definition of $\tlev(s)$), the second holds since $\tlev(s) \geq (l+2j)K \ge (l+2)K$ for any $j>0$ by definition of $s$, the third follows from the initial assumption that $k_0 \leq (l+1)K$, the fourth by definition of $K$ and that $\epsilon > \frac{2}{n^8}$ is not too small, and the fifth holds since $s_0$ is covering at level $k_0$. 
Since at most $n$ elements are alive and at most $2\eps \cdot n$ elements may be dead (otherwise
\Cref{lastinv}(3) is violated), and as each element (dead or alive) is assigned to at most one set in the set cover solution, it follows that the total number of sets in
$\sets_{l+2j, \alg}$ over all $j>0$
is bounded by $(1+2\epsilon) n$. By \Cref{costs2}, we get $$\sum_{j>0}\cost(\sets_{l+2j, \alg}) ~\leq~
(1+2\eps)n \cdot \left(\frac{\epsilon}{(1+2\epsilon) n}\cdot\cost(s_0)\right)
~\le~
\epsilon\cdot\cost(s_0) ~\leq~ \epsilon\cdot \cost(\sets_{l, \alg}).$$ We conclude that
$$\sum_{j \geq 0}\cost(\sets_{l + 2j, \alg}) ~=~ \cost(\sets_{l, \alg}) + \sum_{j>0}\cost(\sets_{l+2j, \alg}) ~\leq~ (1+\epsilon) \cdot \cost(\sets_{l, \alg}) ~\leq~ (1+O(\epsilon))f \cdot \cost(\sets^*),$$
\noindent and the lemma follows.
\end{proof}

\noindent Next, by following the same lines of \Cref{approxf} (with $f$ instead of $\ln n$) and using \Cref{approxlow}, we can show that: $$\min\{\cost(\sets_{\even, \alg}), \cost(\sets_{\odd, \alg}) \}\leq (1+O(\epsilon))f \cdot \cost(\sets^*),$$

\noindent and we have thus concluded the proof of \Cref{wc2}.

\section{From Dynamic Dominating Set to Set Cover} \label{dssec}

\paragraph{The Standard Reduction from {\em Static} Dominating Set to Set Cover.}

Given our graph $G = (V, E)$ with $V = \{v_1, v_2, \ldots, v_n\}$, one can construct a set cover instance $(\univ, \sets)$ as follows. The universe is $\univ = \{e_1, e_2, \ldots, e_n\}$, and the family of sets is $\sets = \{S_1, S_2, \ldots, S_n\}$ such that $S_i$ consists of the element $e_i$ and all elements $e_j$ such that $v_j$ is adjacent to $v_i$ in $G$. Now if $D$ is a dominating set for $G$, then $\sets_{\alg} = \{S_i ~\vert~ v_i \in D\}$ is a feasible solution of the set cover problem, with $\cost(\sets_{\alg}) = \cost(D)$. Conversely, if $\sets_{\alg}$ is a feasible solution of the set cover problem, then $D = \{v_i ~\vert~ S_i \in \sets_{\alg}\}$ is a dominating set for $G$, with $\cost(D) = \cost(\sets_{\alg})$. Hence, the cost of a minimum dominating set for $G$ equals the cost of a minimum set cover for $(\univ, \sets)$, and in particular, if $\sets_{\alg}$ provides an $\alpha$-approximation for the set cover problem given by $(\univ, \sets)$, then $D = \{v_i ~\vert~ S_i \in \sets_{\alg}\}$ provides an $\alpha$-approximation for the dominating set problem given by $G$.

\subsection{A Reduction in the Dynamic Setting}

\paragraph{Handling Edge Insertions and Deletions.}

There is a key difference between the set cover problem and the dominating set problem in the dynamic setting.
In the set cover problem the adversary inserts or deletes an {\em element} upon each update step, whereas in the dominating set problem the adversary inserts or deletes an {\em edge} upon each update step, which can be thought of as creating or removing two \emph{connections} between an element to a set in the set cover problem. That is, if the edge $(v_i,v_j)$ is inserted, then in the set cover problem we would want to add $e_i$ to $S_j$ and $e_j$ to $S_i$; similarly, if the edge $(v_i,v_j)$ is deleted, then in the set cover problem we want to remove $e_i$ from $S_j$ and $e_j$ from $S_i$. Since such an operation does not exist in the basic dynamic setting of the set cover problem, we will treat such an operation as an element deletion followed by an element insertion of the same element, just with a different (by one set) collection of sets that can cover the element, as explained next. 

\paragraph{Algorithm.}

We will use \EMPH{as a black box} the algorithm for the set cover problem in the high-frequency regime, provided by \Cref{wc} (and presented in Sections \ref{threes} and \ref{remove}). We are given initially a graph $G$ with $n$ vertices and no edges. Thus, in our reduction to set cover, we will begin with $n$ active elements $e_1, e_2, \ldots, e_n$, and $n$ sets $S_1, S_2, \ldots, S_n$ such that each $S_i$ contains only $e_i$. Upon insertion of edge $(v_i,v_j)$ by the adversary, in our set cover system we delete $e_i$ and insert it with the same collection of sets that it was contained in before, \emph{plus} $S_j$; likewise, we delete $e_j$ and insert it with the same collection of sets that it was contained in before, \emph{plus} $S_i$. Upon deletion of edge $(v_i,v_j)$ by the adversary, in our set cover system we delete $e_i$ and insert it with the same collection of sets that it was contained in before, \emph{minus} $S_j$; likewise, we delete $e_j$ and insert it with the same collection of sets that it was contained in before, \emph{minus} $S_i$.
Overall, each adversary edge update is translated to four set cover updates.
Explicitly performing each of these four set cover updates in the set cover instance takes time that is linear in the degrees of the two endpoints of the updated edge, which is at most $O(\Delta)$.
By working at a pace four times faster than in the set cover algorithm, the worst case update time will still be $O\brac{\frac{f\log n}{\epsilon^2}}$. Notice that the frequency of each element is upper bounded by $\Delta + 1$, thus we actually obtain a worst-case update time of $O\brac{\frac{\Delta \log n}{\epsilon^2}}$. Regarding the approximation factor, recall that by \Cref{pre-toy-approx} we have that $\cost(\sets_\alg)\leq (1+O(\eps))\cdot \ln n' \cdot \cost(\sets^*)$, where $\sets^*$ is an optimal set cover for $\univ$, and $n'$ is an upper bound to the size of each set throughout the update sequence, which is $\Delta + 1$ in this setting. Thus, we obtain an approximation factor of $(1+O(\eps))\cdot \ln (\Delta + 1) = (1+O(\eps))\cdot \ln \Delta$, which concludes the proof of \Cref{ds-final}.


\clearpage
\bibliographystyle{alpha}
\bibliography{ref}

\end{document}